\Crefname{algocf}{Algorithm}{Algorithms}
\newtheorem{definition}{Definition}[section]
\newtheorem{theorem}{Theorem}[section]
\newtheorem{lemma}[theorem]{Lemma}
\newtheorem{proposition}[theorem]{Proposition}
\newtheorem{corollary}[theorem]{Corollary}
\newcommand{\triplep}[3]{\mbox{$\bra{#1}#2\ket{#3}$}}
\newcommand{\Zed}{\mathbb{Z}}
\newcommand{\Z}{\Zed}
\newcommand{\Eff}{\mathbb{F}}
\newcommand{\F}{\Eff}
\newcommand{\Cee}{\mathbb{C}}
\newcommand{\pluseq}{\;\text{+=}\;}
\newcommand{\Ptime}{\mathsf{P}}
\newcommand{\BQP}{\mathsf{BQP}}
\newcommand{\NumP}{\mathsf{\#P}}
\newcommand{\gatem}[1]{\mathsf{#1}}
\newcommand{\mat}[1]{\mathbf{#1}}
\renewcommand{\vec}[1]{\mathbf{#1}}
\newcommand{\diag}{\mathit{diag}}
\newsavebox{\fmbox}
\newenvironment{fmquote}[1]
{\vspace{1ex}\begin{center}\begin{lrbox}{\fmbox}\begin{minipage}{#1}\centering\vspace{0.5ex}}
{\vspace{0.2ex}\end{minipage}\end{lrbox}\fbox{\usebox{\fmbox}}\end{center}\vspace{1ex}}
\begin{document}
\sloppy

\title{\Large\bf Stabilizer Circuits, Quadratic Forms, and Computing Matrix Rank
}
\author{Chaowen Guan\\
University at Buffalo (SUNY)\\
{\tt chaoweng@buffalo.edu}
\and
Kenneth W. Regan\\
University at Buffalo (SUNY)\\
{\tt regan@buffalo.edu}
}
\date{\today}
\maketitle

\begin{abstract}
We show that a form of strong simulation for $n$-qubit quantum stabilizer circuits $C$ is computable in $O(s + n^\omega)$ time, where $\omega$ is the exponent of matrix multiplication.  Solution counting for quadratic forms over $\Eff_2$ is also placed into $O(n^\omega)$ time.  This improves previous $O(n^3)$ bounds.  Our methods in fact show an $O(n^2)$-time reduction from matrix rank over $\Eff_2$ to computing $p = |\triplep{0^n}{C}{0^n}|^2$ (hence also to solution counting) and a converse reduction that is $O(s + n^2)$ except for matrix multiplications used to decide whether $p > 0$.  The current best-known worst-case time for matrix rank is $O(n^{\omega})$ over $\Eff_2$, indeed over any field, while $\omega$ is currently upper-bounded by $2.3728\dots$  Our methods draw on properties of classical quadratic forms over $\Zed_4$.
We study possible distributions of Feynman paths in the circuits and prove that the differences in $+1$ vs.\ $-1$ counts and $+i$ vs.\ $-i$ counts are always $0$ or a power of $2$.  Further properties of quantum graph states and connections to graph and matroid theory are discussed.
\end{abstract}

\section{Introduction}\label{introduction}

Consider the following algorithm for computing the rank $r$ of an $n \times n$ matrix $\mat{A}_0$ over the field $\Eff_2$:
\begin{enumerate}
\item
Form the symmetric block matrix $\mat{A} = \begin{bmatrix} 0 & \mat{A}_0 \\ \mat{A}_0^\top & 0 \end{bmatrix}$.
\item
Form the quantum \emph{graph state} circuit $C_{\mat{A}}$ for the bipartite graph with adjacency matrix $\mat{A}$.
\item
Calculate $p =$ the quantum probability that $C_{\mat{A}}(0^{2n}) = 0^{2n}$.  The bipartite case assures $p > 0$.
\item
Output $r = \log_2(1/\sqrt{p})$.
\end{enumerate}

\noindent
All steps except 3 take $O(n^2)$ time.  Hence, for dense matrices, this is a linear-time reduction from $r$ to $p$.  In the converse direction, we will show the following algorithm for computing the amplitude $\triplep{0^n}{C}{0^n}$ for any $n$-qubit quantum \emph{stabilizer circuit} $C$:

\begin{enumerate}
\item
Convert $C$ to a classical quadratic form $q_C$ over $\Zed_4$ that retains all quantum properties of $C$.
\item
Take the matrix $\mat{A}$ of $q_C$ over $\Zed_4$ and associate a canonical $n \times n$ matrix $\mat{B}$ over $\Eff_2$ to it.  
\item
Compute the decomposition $\mat{B} = \mat{P}\mat{L}\mat{D}\mat{L}^\top \mat{P}^\top$ over $\Eff_2$ where $\mat{P}$ is a permutation matrix, $\mat{L}$ is lower-triangular, and $\mat{D}$ is block-diagonal with blocks that are either $1 \times 1$ or $2 \times 2$.
\item
Take $\mat{L}^{-1}$ over $\Eff_2$ but compute $\mat{D}' = \mat{L}^{-1}\mat{P}^\top \mat{A} \mat{P} (\mat{L}^{-1})^\top$ over $\Zed_4$.  (Note $\mat{P}^\top = \mat{P}^{-1}$.)  If any diagonal $1 \times 1$ block of $\mat{D}$ has become $2$ in $\mat{D}'$, output $\triplep{0^n}{C}{0^n} = 0$.  Else, $\triplep{0^n}{C}{0^n}$ is nonzero and is obtained by a simple 
$O(n)$-time recursion.
\end{enumerate}

\noindent
Here step 1 from \cite{RCG18} takes time linear in the number $s$ of quantum gates in $C$, which for standard-basis inputs can be bounded above by $O(n^2/\log n)$ with $O(n)$ quantum Hadamard gates \cite{AaGo04}.  Step~3 is computable in $O(n^\omega)$ time by \cite{DuPe18}, where $\omega$ is the exponent of matrix multiplication and is at most $n^{2.372865}$ \cite{Sto10,Wil12,LeGall14}.  This is also the best known time for computing $n \times n$ matrix rank over any field and for the particular inverses and products in step~4 as well (see \cite{CKL13}).  However, when $\triplep{0^n}{C}{0^n} \neq 0$ we show that its absolute value is computable quickly from $r$ alone after step 2.

Graph-state circuits and the larger but equivalent class of stabilizer (aka.\ Clifford) circuits are commonly quoted as simulatable in $O(n^2)$ time but this applies only with a bounded number of single-qubit measurements \cite{AaGo04,AnBr06} (see also \cite{GaMa13,GMC14,GaMa15}).
Computing the probability $p = |\triplep{0^n}{C}{0^n}|^2$ is classed as a form of \emph{strong simulation} by \cite{JovdN14} and is representative of the tasks designated $\mathsf{STR}(n)$ in \cite{Koh17} for standard-basis inputs.  The best times stated for computing $p$ in the above-cited papers are $O(n^3)$ for the general $n$-qubit case.  Thus our results improve the asymptotic running time as well as show a near-tight relationship to the task of computing matrix rank that seems not to be noticed in these papers.  They also improve the $O(n^3)$-time algorithm for solution counting of quadratic forms over $\Zed_2$, as given in \cite{EhKa90}, to  $O(n^\omega)$.  (As is common in the literature, we slur the distinction between $\omega$ as an infimum and as an upper bound---the latter usage should properly say ``in time $n^{\omega + o(1)}$'' or similar.  Our machine model is implicitly a RAM that can handle $\log n$-sized words in unit time; for other models we can say the time bounds ignore log factors.)

\bigskip
\begin{theorem}[main]\label{mainthm}
\begin{itemize}
\item[(a)]
Strong simulation of $n$-qubit stabilizer circuits of size $s$ with $h$ Hadamard gates (or other nondeterministic single-qubit gates) on standard-basis inputs is in time $O(s + n + h^{\omega})$ where $2 \leq \omega < 2.3729$.  This works for amplitude as well as probability.
\item[(b)]
Computing $n \times n$ matrix rank is linear-time equivalent to computing the probability $p$ (for circuits where $h = Theta(n)$ and $s = O(n^2))$ on the promise that $p$ is positive, and equivalent to computing $p$ on the narrower promise that the graphs underlying the circuits are bipartite.
\end{itemize}
\end{theorem}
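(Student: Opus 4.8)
The plan is to prove both parts through the two algorithms sketched in the Introduction, whose correctness and running times rest on a single ingredient: an exact evaluation of the $\Zed_4$‑valued Gauss sum of a stabilizer circuit in terms of the $\Eff_2$‑rank of an associated matrix, plus a criterion for when that sum vanishes. For part (a), first apply the conversion of \cite{RCG18} to $C$ on input $0^n$ post‑selected on output $0^n$: in time $O(s)$ this yields a quadratic form $q_C\colon\Eff_2^{k}\to\Zed_4$ with $k\le h$ and an identity $\triplep{0^n}{C}{0^n}=\lambda\sum_{w\in\Eff_2^{k}}i^{\,q_C(w)}$, where $\lambda$ and $k$ are read off from the conversion and $|\lambda|^2=2^{-h}$. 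Let $\mat A$ be the $\Zed_4$ matrix of $q_C$ and $\mat B=\mat A\bmod 2$ its $\Eff_2$ polarization (up to the factor $2$), of $\Eff_2$‑rank $r$. The heart of the argument is a normal‑form analysis: on the radical $R=\ker\mat B$ the polarization of $q_C$ vanishes, so $q_C|_R$ is additive with image in $\{0,2\}\subset\Zed_4$, i.e.\ a linear functional; if it is nonzero the Gauss sum is $0$, and otherwise the sum factors over $R$ and over a nondegenerate complement, the latter splitting into $1\times1$ blocks (local sum $1\pm i$) and $2\times2$ hyperbolic blocks $\left[\begin{smallmatrix}0&1\\1&0\end{smallmatrix}\right]$ (local sum $\pm2$). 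Multiplying these gives $\bigl|\sum_w i^{\,q_C(w)}\bigr|\in\{0,\,2^{\,k-r/2}\}$ and shows the overall value is a power of $2$ times a fourth‑ or eighth‑root of unity; this is exactly the source of the abstract's statement that the $+1$‑vs‑$-1$ and $+i$‑vs‑$-i$ path counts differ by $0$ or a power of $2$.

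To make this algorithmic, observe that the decomposition $\mat B=\mat P\mat L\mat D\mat L^{\top}\mat P^{\top}$ over $\Eff_2$ realizes exactly this block structure, the $1\times1$ blocks being $[1]$ on the odd directions and $[0]$ on $R$. Conjugating the $\Zed_4$ matrix by the $\Eff_2$‑inverse $\mat L^{-1}$ gives $\mat D'=\mat L^{-1}\mat P^{\top}\mat A\mat P(\mat L^{-1})^{\top}$, which still block‑diagonalizes $q_C$ because $\mat D'\equiv\mat D\pmod 2$; the obstruction ``$q_C|_R\ne0$'' is precisely the event that some $1\times1$ zero block of $\mat D$ appears as $2$ in $\mat D'$, and in the non‑obstructed case the diagonal and $2\times2$ data of $\mat D'$ drive an $O(k)$‑time recursion that recovers both $|\lambda|\cdot\bigl|\sum_w i^{q_C(w)}\bigr|$ and the residual root‑of‑unity phase, hence the amplitude and the probability alike. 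The factorization, the inverse $\mat L^{-1}$, and the two products cost $O(k^\omega)\le O(h^\omega)$ by \cite{DuPe18,CKL13}; adding the $O(s)$ conversion and $O(n)$ to read the circuit and output descriptions gives the stated $O(s+n+h^\omega)$.

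For part (b), forward: given $\mat A_0$ of size $n\times n$, form $\mat A=\left[\begin{smallmatrix}0&\mat A_0\\\mat A_0^{\top}&0\end{smallmatrix}\right]$ and the graph‑state circuit $C_{\mat A}=H^{\otimes 2n}\bigl(\prod_{ij\in E}\CZ_{ij}\bigr)H^{\otimes 2n}$ on $N=2n$ qubits, costing $O(n^2)$ and giving $h=2n=\Theta(n)$ and $s=O(n^2)$ since $|E|\le n^2$. Here $\triplep{0^{N}}{C_{\mat A}}{0^{N}}=2^{-N}\sum_{x\in\Eff_2^N}(-1)^{q_{\mat A}(x)}$ with $q_{\mat A}(x)=\sum_{ij\in E}x_ix_j$, which has zero diagonal and polarization $\mat A$; the magnitude analysis above (now with the $2$‑valued form, so only $[0]$ and $\left[\begin{smallmatrix}0&1\\1&0\end{smallmatrix}\right]$ blocks occur) gives $p=2^{-\mathrm{rank}_{\Eff_2}(\mat A)}$ whenever $q_{\mat A}$ vanishes on $\ker\mat A$, and this always holds for bipartite $\mat A$: if $\mat Av=0$ with $v=(v^{(1)},v^{(2)})$ then $\mat A_0v^{(2)}=0$, so $q_{\mat A}(v)=(v^{(1)})^{\top}\mat A_0v^{(2)}=0$. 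Since $\mathrm{rank}_{\Eff_2}(\mat A)=2\,\mathrm{rank}_{\Eff_2}(\mat A_0)$ for this block form, $p=2^{-2\,\mathrm{rank}_{\Eff_2}(\mat A_0)}>0$ and $\log_2(1/\sqrt p)=\mathrm{rank}_{\Eff_2}(\mat A_0)$. Conversely, specialize the part (a) algorithm to $h=\Theta(n),\ s=O(n^2)$: under the promise $p>0$ there is no obstruction, so $p=|\lambda|^2\cdot2^{\,2(k-r/2)}=2^{\,2k-h-r}$ is determined by the single integer $r=\mathrm{rank}_{\Eff_2}(\mat B)$ together with $h,k$ from the $O(s)$ conversion; forming $\mat B$ costs $O(k^2)=O(n^2)$ and no matrix products (hence no $p>0$ test) are needed. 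Chaining $\mathrm{rank}\le(\text{$p$ under the bipartite promise})\le(\text{$p$ under the positivity promise})\le\mathrm{rank}$, the middle step holding because bipartite inputs satisfy the positivity promise, yields all the claimed linear‑time equivalences.

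I expect the main obstacle to be the $\Zed_4$ normal‑form step in part (a): proving that the $\Eff_2$ factorization $\mat P\mat L\mat D\mat L^{\top}\mat P^{\top}$, once conjugated into $\Zed_4$ via the $\Eff_2$‑inverse $\mat L^{-1}$, still genuinely block‑diagonalizes the $\Zed_4$ quadratic form---so that both the vanishing obstruction and the exact value localize to the blocks---and computing the $1\times1$ and $2\times2$ local Gauss sums over $\Zed_4$ precisely enough to pin down the power of $2$ and the root‑of‑unity phase. The bipartite‑positivity argument and the bookkeeping of $h$, $k$, and $r$ are comparatively routine once this is in hand.
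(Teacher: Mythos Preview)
Your proposal is correct and for parts (a) and (b1) follows essentially the paper's approach: the $O(s)$ conversion to $q_C$, the $\Eff_2$ decomposition $\mat{B}=\mat{P}\mat{L}\mat{D}\mat{L}^\top\mat{P}^\top$ from \cite{DuPe18}, the $\Zed_4$ conjugation to $\mat{D}'$, and the block-by-block evaluation are exactly what the paper does. Your phrasing via ``local Gauss sums on $1\times1$ and $2\times2$ blocks'' and the ``$q_C|_R$ nonzero'' obstruction is a cleaner packaging of what the paper derives through the explicit case analyses of its Lemmas on alternating and non-alternating normal forms, but the content is the same. Your bipartite-positivity argument (showing $q_{\mat A}$ vanishes on $\ker\mat A$ via $\mat A_0 v^{(2)}=0$) is a different and tidier proof of $p>0$ than the paper's, which argues directly by pairing assignments on one side of the bipartition.

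For part (b2) you take a genuinely shorter route than the paper. The paper passes to the self-dual form $f=q_C+q_C^*$ on $2h$ variables, proves a combinatorial lemma that strips the odd-coefficient terms and produces an \emph{alternating} form $f'$ with $N_0(f')-N_2(f')=N_0(f)-N_2(f)$, and only then invokes the rank-to-$p$ relation in the alternating case. You bypass all of this by observing that your part-(a) analysis already gives $\bigl|\sum_w i^{q_C(w)}\bigr|^2\in\{0,2^{2k-r}\}$ in \emph{both} the alternating and non-alternating cases, so under the promise $p>0$ one reads $p=2^{2k-h-r}$ directly from $r=\mathrm{rank}_{\Eff_2}(\mat B)$ with no further reduction. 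This is legitimate: the paper's own Theorem on $|a_0(\vec N)|^2$ establishes exactly that dichotomy, so the self-dual detour is not logically necessary for (b2). What the paper's route buys is an independent development of self-dual forms and an explicit reduction of any stabilizer probability question to a graph-state (alternating, loop-free) instance, which feeds its later discussion of ``net-zero'' graphs; your route buys brevity.
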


\bigskip
\noindent
In view of the normal form of \cite{AaGo04} and in practice, the restriction on $h$ and $s$ in (b) is highly reasonable.  The ``promise'' formulation of (b) is ignorable in the direction from the rank $r$ to $p$, but not from $p$ to $r$.  The sense of the latter direction is that if rank for dense matrices comes to have a lesser time $t(n)$ with $n^2 \leq t(n) < n^{\omega}$ than matrix multiplication, then computing $p$ correctly in cases where $p > 0$ will have exactly the same time $t(n)$, whereas computing $p$ in all cases might remain in $n^\omega$ time.  We do not have a reduction from matrix multiplication itself (over $\Eff_2$) to strong simulation, hence our results do not imply an asymptotic equivalence between those.  To be sure, we note as a practical caveat that among the known sub-cubic algorithms for matrix multiplication, only Strassen's original one \cite{Str69}, which runs in time $O(n^{2.81})$, is considered competitive for problem sizes in the range of thousands of qubits that are addressed concretely in the above-cited papers.

The connections used in our proof run through the real-time conversion of quantum circuits $C$ to ``phase polynomials'' $q_C$ over $\Zed_K$ for $K = 2^k$, $k \geq 1$ in \cite{ReCh09,RCG18}, which extended results by \cite{DHHMNO04} for $k = 1$, and the analysis of quadratic forms over $\Zed_4$ by Schmidt \cite{Schm09} drawing on \cite{Alb38,Bro72}.  In the case of graph-state circuits and \emph{stabilizer circuits} more generally, $q_C$ becomes a classical quadratic form over $\Zed_4$, as treated also in \cite{CGW18}.  Our approach is related to ones involving Gauss sums \cite{BvDR08,CCLL10,CGW18,BuKo18} but exploits the availability of normal forms.  For bipartite $\mat{A}$ as above, it further devolves into a quadratic form $q'_C$ over $\Eff_2$ that is \emph{alternating} (as defined below) plus an ancillary vector $v$.  A linear change in basis---which also sends $v$ to a vector $w$ but leaves the probability computation unaffected---gives over $\Zed_4$ the normal form
\begin{equation}\label{NF}
q'_C = y_1 y_2 + y_3 y_4 + \cdots + y_{2g-1} y_{2g} + \sum_{j=1}^{n} 2y_j w_k.
\end{equation}
Here the rank $r$ must be even and $g = r/2$.  This corresponds to block-diagonal matrices $\mat{D}$ with $g$-many $2 \times 2$ blocks as produced by \cite{DuPe18}, together with $1 \times 1$ blocks coming from $w$.  The $1 \times 1$ blocks matter most for $j > r$.  The matrix $\mat{D}'$ over $\Zed_4$ may no longer be block-diagonal but its diagonal reveals $w$.

Provided the terms in $w$ do not cause global cancellation, equation (\ref{NF}) will yield $p$ from $r$ in an invertible manner, without needing to compute the change in basis.
Let $N_c(q)$ stand for the number of arguments $x \in \{0,1\}^n$ giving $q(x) = c \pmod{4}$ for $c = 0,1,2,3$.  Along the way to our main theorem, we prove that for any classical quadratic form $q$ over $\Zed_4$, the differences $|N_0(q) - N_2(q)|$ and $|N_1(q) - N_3(q)|$ are either zero or a power of $2$.  This resolves the effects of the ``$w$'' part of the normal form (\ref{NF}) for the alternating case in particular.  Sections~\ref{stabilizer} and~\ref{sec:properties} cover stabilizer circuits and quadratic forms before sections~\ref{sec:rank} proves part (a) of Theorem~\ref{mainthm} and the rank-to-strong-simulation direction of part (b).


The other direction---which was the original goal---requires computing $r$ plus information about $w$.  The datum needed is whether a $\Zed_4$ vector corresponding to $w$ has an entry of value $2$, or equivalently, whether the graph underlying $C$ belongs to a family we call ``net-zero'' graphs.  Section~\ref{sec:self-dual} analyzes a concept of ``self-dual'' quadratic forms yielded by the probability computation and reduces from the general to the alternating case---which effectively strips phase gates from the circuit and self-loops from the graph---and finishes the proof of part (b) of Theorem~\ref{mainthm}.  

A concluding section Section~\ref{conclusions} discusses the ``net-zero'' graphs and observes that the amplitude function $a(G) = \triplep{0^n}{C_G}{0^n}$ is a generalized Tutte invariant per \cite{OxWh93,Nob06}.  It then
contrasts the integral but non-classical quadratic forms which arise from adding the controlled-phase gate to the stabilizer gates to form a universal gate set.  Finally we raise possible implications of this work for solution counting and for graph theory.

\section{Quantum Stabilizer Circuits}\label{stabilizer}

A fundamental problem in quantum computing is whether all quantum circuits of $s$ gates acting on $n$ qubits can be simulated in time polynomial in $s$ and $n$.  A quantum circuit $C$ effects a unitary linear transformation on $\Cee^N$ where $N = 2^n$.  A fixed basis of $\Cee^N$ is identified with $\{0,1\}^n$.  The circuit is a sequence of \emph{gates} $g$, each of which effects a transformation $U_g$ that acts on some $k$ of the qubits.
The gate $g$ can be represented by a $2^k \times 2^k$ unitary \emph{gate matrix} $\mat{M}_g$ and the subset $S_g$ of qubits acted on.

A salient subclass of quantum circuits that have a deterministic polynomial-time simulation are \emph{stabilizer circuits}.  They can be generated by the following three gate matrices $\mat{M}_g$:
\[
\gatem{H} = \frac{1}{\sqrt{2}}\begin{bmatrix} 1 & 1 \\ 1 & -1 \end{bmatrix},\quad
\gatem{S} = \begin{bmatrix} 1 & 0 \\ 0 & i \end{bmatrix},\quad
\gatem{CZ} = \begin{bmatrix} 1 & 0 & 0 & 0 \\ 0 & 1 & 0 & 0 \\ 0 & 0 & 1 & 0 \\ 0 & 0 & 0 & -1 \end{bmatrix}.
\]
Their extensions to act on $\Cee^N$ by tensor product with the identity on the qubits outside $S_g$ generate the $n$-qubit \emph{Clifford group}, so these are also called \emph{Clifford circuits}.  The original polynomial-time algorithm by Gottesman and Knill \cite{Got98} involved Gaussian elimination and so ran for all intents and purposes in order-of $n^3$ time.  Aaronson and Gottesman \cite{AaGo04} improved this to $O(n^2)$ time with a tableau method and also showed that every stabilizer circuit has an equivalent one with $O(n^2/\log n)$ gates.  Anders and Briegel \cite{AnBr06} improved the running time concretely and for circuits of size $s = o(n^2)$ using a graph-state representation, as we will also do.  Dehaene and De Moor \cite{DeDM03} described quantum states produced by stabilizer circuits via linear and quadratic forms over $\Eff_2$ in ways simplified and extended by van den Nest \cite{vdNest09}.  

We seek even simpler and faster methods that lend themselves to further algorithmic properties, such as quick update when changes are made to $C$ in the sense of ``dynamic algorithms.''  We employ the theory of classical quadratic forms over $\Zed_4$ as developed by Schmidt \cite{Schm09} and more recently by Cai, Guo, and Williams \cite{CGW18}.  The quadratic forms are built using the real-time algorithm of \cite{ReCh09,RCG18} for computing what we call the \emph{additive partition polynomials} $q_C$ for quantum circuits $C$ that meet a mild ``balance'' condition.  Related works involving low-degree polynomials and counting complexity include \cite{BvDR08,BJS10,Mon17,KPS17}.

The polynomial $q_C$ has variables $x_1,\dots,x_n$ corresponding to binary input values, $z_1,\dots,z_n$ for the binary output values, and $y_1,\dots,y_h$ representing nondeterminism from Hadamard (and possibly other) gates.  For any $a,b \in \{0,1\}^n$, letting $q_{ab}$ denote $q$ with those values substituted for the $x_i$ and $z_j$ variables, we have for some $R > 0$:
\begin{equation}\label{amplitude}
\triplep{b}{C}{a} = \frac{1}{R} \sum_{y \in \{0,1\}^h} \omega^{q_{ab}(y)},
\end{equation}
where $\omega$ is a $K$-th root of unity such that all phases produced by the circuit are powers of $\omega$.  Stabilizer circuits give $K = 4$ so that the powers in this exponential sum belong to $\Zed_4$.  Generally $R = 2^{h/2}$ but its value is reduced if some nondeterministic $y_j$ variables are forced to equal outputs.

The rules for calculating $q$ are straightforward.  Initially $q = 0$ and each qubit line $i$ has its current \emph{annotation} $u_i$ defined by $u_i = x_i$.  In general, let $u_i$ stand for the current annotation of line $i$, and let $y_1,\dots,y_{\ell-1}$ be the nondeterministic variables allocated thus far.

\begin{itemize}
\item
Hadamard gate on line $i$: Allocate a new variable $y_\ell$, do $q \pluseq 2u_i y_\ell$, and reassign $u_i$ to be $y_\ell$.
\item
Phase gate $\gatem{S}$ on line $i$: $q \pluseq u_i$, $u_i$ unchanged.
\item
$\gatem{CZ}$ gate on lines $i$ and $j$: $q \pluseq 2u_i u_j$, no other change.
\item
At the end of each qubit line $i$, we can identify $z_i$ with the variable last denoted by $u_i$.
\end{itemize}

\noindent
Since we are concerned only with $0,1$ as arguments, we can also do $q \pluseq u_i^2$ in the case of $\gatem{S}$, thus making all terms homogeneously quadratic.  The \emph{conjugate polynomial} $q^*$ does $q^* \pluseq 3u_i^2$ instead, but does the same as $q$ for $\gatem{H}$ and $\gatem{CZ}$.

We mention some other Clifford gates and their rules for completeness.  The first three (plus the identity $\gatem{I}$) are the \emph{Pauli gates}:
\[
\gatem{X} = \begin{bmatrix} 0 & 1 \\ 1 & 0 \end{bmatrix},\quad
\gatem{Y} = \begin{bmatrix} 0 & -i \\ i & 0 \end{bmatrix},\quad
\gatem{Z} = \begin{bmatrix} 1 & 0 \\ 0 & -1 \end{bmatrix},\quad
\gatem{CNOT} = \begin{bmatrix} 1 & 0 & 0 & 0 \\ 0 & 1 & 0 & 0 \\ 0 & 0 & 0 & 1 \\ 0 & 0 & 1 & 0 \end{bmatrix}.\quad
\]
\begin{itemize}
\item
$\gatem{X}$: Change $u_i$ to $1 - u_i$; no other change.
\item
$\gatem{Z}$: $q \pluseq 2u_i$; same for $q^*$; no other change.
\item
$\gatem{Y}$: Treat via $Y = i\gatem{X}\gatem{Z}$ and ignore the global scalar $i$.
\item
$\gatem{CNOT}$ gate on lines $i$ and $j$: Change the target $u_j$ to $u_i + u_j - 2 u_i u_j$; no other action.
\end{itemize}

\noindent
Note that $\gatem{X} = \gatem{H}\gatem{Z}\gatem{H}$, and similarly, $\gatem{CNOT} = (\gatem{I} \otimes \gatem{H})\gatem{CZ}(\gatem{I} \otimes \gatem{H})$, so we could have composed the previous rules for those gates.  Changing the annotations $u_i$ and $u_j$ (in the respective cases), however, avoids introducing new nondeterministic variables.

The annotation $u_j$ becomes quadratic in the case of $\gatem{CNOT}$, but the degree does not rise any higher:  In rules where $u_i$ is multiplied, the multiplier contains a factor $2$ which cancels the $2u_i u_j$ modulo $4$.  The last subtlety is what happens when an annotation that is not a single variable is to be equated with a variable $z_j$.  If it has the form $u_i + u_j - 2 u_i u_j$ then we add to $q$ the term $2w(u_i + u_j - 2u_i u_j - z_j) = 2wu_i + 2wu_j + 2wz_j \pmod{4}$ where $w$ is a fresh variable.  For binary values from the standard basis, if $z_j$ does not equal the XOR of $u_i$ and $u_j$ then the added term reduces to $2w$.  Because $w$ appears nowhere else, assignments with $w = 0$ and those with $w = 1$ will globally cancel in (\ref{amplitude}).  Thus only cases with $z_j = u_i \oplus u_j$ contribute.  This proves

\bigskip
\begin{theorem}[\cite{RCG18}]
When $C$ is a stabilizer circuit, the polynomial $q$ in (\ref{amplitude}) becomes a quadratic form over $\Zed_4$ in which all terms involving two variables have coefficient $2$.
\end{theorem}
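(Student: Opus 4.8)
The plan is to verify the claimed structure directly from the gate-by-gate rules for building $q$, by tracking the two invariants through every gate type that a stabilizer circuit may contain. The first invariant is a \emph{degree bound}: at every stage, each line annotation $u_i$ is a polynomial over $\Zed_4$ of degree at most $2$ in the $x$ and $y$ (and $z$) variables, and moreover any degree-$2$ monomial occurring in some $u_i$ carries a coefficient divisible by $2$ after it is substituted into $q$. The second invariant is the \emph{coefficient condition}: every two-variable monomial that has ever been added to $q$ carries coefficient $2 \pmod 4$. I would prove both simultaneously by induction on the number of gates processed.

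First I would handle the base case and the ``simple'' gates. Initially $q = 0$ and each $u_i = x_i$ is linear, so both invariants hold vacuously. For the $\gatem{S}$ gate we do $q \pluseq u_i^2$; since $u_i$ is at most linear \emph{before} any $\gatem{CNOT}$ has touched line $i$, this contributes $2u_iu_j$-type cross terms only via $(a+b)^2 = a^2+b^2+2ab$, and the cross term already has coefficient $2$. For the $\gatem{Z}$ gate we add $2u_i$, which is at most quadratic with the required even coefficient on any genuine degree-$2$ part. The $\gatem{X}$ and $\gatem{Y}$ rules only modify annotations (replacing $u_i$ by $1-u_i$), which cannot raise degree; the Hadamard rule allocates a fresh $y_\ell$, adds $2u_iy_\ell$ — a two-variable term with coefficient exactly $2$ — and resets $u_i := y_\ell$, which is linear, so afterward line $i$ is ``clean'' again. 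The $\gatem{CZ}$ gate adds $2u_iu_j$; here I must argue that $u_i u_j$ is, after expansion, still degree $\le 2$ with all its degree-$2$ part multiplied by $2$: this is exactly the observation in the excerpt that whenever $u_i$ is itself a (possibly quadratic) expression of the $\gatem{CNOT}$-produced form $a + b - 2ab$, the multiplying $2$ kills the $2ab$ piece modulo $4$, so $2u_i u_j \equiv 2(a\oplus b)u_j$, which is a sum of honest two-variable coefficient-$2$ terms.

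The main obstacle — and the step I would develop most carefully — is the $\gatem{CNOT}$ gate together with the end-of-line identification with a $z_j$ variable, since this is the only place a genuinely quadratic annotation $u_j \mapsto u_i + u_j - 2u_iu_j$ is created, and the only place the degree could in principle escape control. I would argue as follows: (i) such a quadratic $u_j$ is only ever consumed by later gates through rules whose multiplier contains a factor $2$ ($\gatem{CZ}$, Hadamard, $\gatem{Z}$), so the $-2u_iu_j$ piece is always annihilated mod $4$ before it can interact multiplicatively, meaning the \emph{effective} contribution of $u_j$ to $q$ is always that of the linear form $u_i \oplus u_j$; and (ii) when line $j$ ends and we must equate its annotation with $z_j$, we introduce the fresh variable $w$ with the term $2w(u_i + u_j - 2u_iu_j - z_j) \equiv 2wu_i + 2wu_j + 2wz_j \pmod 4$, which is again a sum of two-variable coefficient-$2$ terms, and the $w = 0$ versus $w = 1$ cancellation in \eqref{amplitude} restricts to the consistent branch $z_j = u_i \oplus u_j$. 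Assembling these cases closes the induction: when $C$ is a stabilizer circuit, every monomial of $q$ has degree at most $2$ and every two-variable monomial has coefficient $2$, i.e.\ $q$ is a classical quadratic form over $\Zed_4$ of the stated shape. I would close by remarking that the same bookkeeping shows $q^*$ (which replaces $q \pluseq u_i^2$ by $q^* \pluseq 3u_i^2$ for $\gatem{S}$) is quadratic as well, since $3 = -1 \pmod 4$ changes only a squared \emph{single-variable} coefficient and leaves every cross term at $2$.
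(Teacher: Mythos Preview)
Your proposal is correct and follows essentially the same gate-by-gate argument the paper gives in the discussion immediately preceding the theorem: the paper simply records the rules, observes that $\gatem{CNOT}$ is the only source of quadratic annotations, that the factor $2$ in any multiplying rule kills the $-2u_iu_j$ piece modulo $4$, and handles the end-of-line $w$-variable exactly as you do.

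One small lacuna worth patching: your case analysis does not explicitly cover an $\gatem{S}$ gate (or a further $\gatem{CNOT}$) applied to a line whose annotation is already of the $\gatem{CNOT}$-produced shape $\ell + 2q$---neither is in your list of ``factor-$2$ consumers'' in step (i), and your treatment of $\gatem{S}$ assumed $u_i$ linear. The fix is immediate, since $(\ell+2q)^2 \equiv \ell^2 \pmod 4$ for the $\gatem{S}$ rule and $2(\ell+2q)u_j \equiv 2\ell u_j \pmod 4$ for a nested $\gatem{CNOT}$, so the invariant survives these cases too.
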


\bigskip
\noindent
Such forms are called \emph{classical}, reflecting the historical definition of a quadratic form as given by $x^\top \mat{A} x$ for some integer $n \times n$ matrix $A$ that is symmetric---so that all cross terms have even coefficients.  For $\Zed_4$ they coincide with those called \emph{affine} in \cite{CLX14,CGW18}.
For contrast, we note the effect of using a controlled-phase gate:
\[
\gatem{CS} = \begin{bmatrix} 1 & 0 & 0 & 0 \\ 0 & 1 & 0 & 0 \\ 0 & 0 & 1 & 0 \\ 0 & 0 & 0 & i \end{bmatrix}.
\]
If $i,j$ are the qubit lines involved, the rule is to do $q \pluseq u_i u_j$ and $q^* \pluseq 3u_i u_j$.  The $\gatem{CS}$ gate is not a Clifford gate and its inclusion creates a universal gate set.  Nevertheless, (\ref{amplitude}) still holds, as does the following:

\bigskip
\begin{theorem}[\cite{RCG18}]\label{probability}
For any Clifford+$\gatem{CS}$ circuit $C$, input $x$, outcome $z$, and letting $V$ stand for the set of unfixed variables overall,

\begin{eqnarray}\label{probsim}
\Pr[C(x) = z] &= \frac{1}{R^2}[&|\{v,v'\in\{0,1\}^{|V|} : q_{x,z}(v) + q^*_{x,z}(v') = 0\}| \\
&& - \;\;\;|\{v,v'\in\{0,1\}^{|V|} : q_{x,z}(v) + q^*_{x,z}(v') = 2\}|\quad],
\end{eqnarray}
with $R$ as in (\ref{amplitude}) and values modulo $4$.  Moreover, for any set $Z$ of binary outcomes defined by fixing a subset of the variables to a value $z'$, $\Pr[C(x) \in Z]$ is given by an analogous formula over pairs $v,v'$ of assignments that agree on the remaining $z_j$ variables involving $q_{x,z'}$ and $q^*_{x,z'}$.
\end{theorem}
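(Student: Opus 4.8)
The plan is to read $\Pr[C(x)=z]$ as $|\triplep{z}{C}{x}|^2$, apply the amplitude formula (\ref{amplitude}) to both the amplitude and its complex conjugate, and expand the resulting product of two exponential sums, grouping terms by residue modulo $4$.

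First I would pin down the precise relationship between $q$ and the conjugate polynomial $q^*$. Inspecting the update rules, $q$ and $q^*$ agree on all contributions except those of $\gatem{S}$ and $\gatem{CS}$ gates, which enter $q$ with coefficient $1$ and $q^*$ with coefficient $3$; every other contribution (from $\gatem{H}$, $\gatem{CZ}$, $\gatem{Z}$, $\gatem{X}$, $\gatem{CNOT}$, and the fresh-variable term $2w(u_i+u_j-2u_iu_j-z_j)$ used when equating a nonlinear annotation with an output variable) carries an explicit factor of $2$ and is common to both. Writing $q = \Sigma_1 + \Sigma_2$, with $\Sigma_1$ the sum of the coefficient-$1$ terms and $\Sigma_2$ the sum of the remaining factor-of-$2$ terms, we get $q^* = 3\Sigma_1 + \Sigma_2$, hence $q + q^* = 4\Sigma_1 + 2\Sigma_2$. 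Since on every $\{0,1\}$-valued assignment each term of $\Sigma_2$ is $2$ times an integer, $2\Sigma_2 \equiv 0 \pmod 4$, so $q^*(v) \equiv -q(v) \pmod 4$ for all binary $v$. With $\omega$ a primitive fourth root of unity this says $\omega^{q^*(v)} = \overline{\omega^{q(v)}}$.

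Next I would substitute into (\ref{amplitude}). With $x,z$ fixed and $V$ the set of unfixed variables, $\triplep{z}{C}{x} = \frac{1}{R}\sum_{v\in\{0,1\}^{|V|}}\omega^{q_{x,z}(v)}$, and by the previous step $\overline{\triplep{z}{C}{x}} = \frac{1}{R}\sum_{v'}\omega^{q^*_{x,z}(v')}$. Multiplying,
\[
\Pr[C(x)=z] \;=\; |\triplep{z}{C}{x}|^2 \;=\; \frac{1}{R^2}\sum_{v,v'}\omega^{\,q_{x,z}(v)+q^*_{x,z}(v')}.
\]
Grouping the pairs $(v,v')$ according to $c = q_{x,z}(v)+q^*_{x,z}(v') \bmod 4$ and writing $N_c$ for the number of pairs in class $c$, the right-hand side equals $\frac{1}{R^2}\bigl(N_0 - N_2 + i(N_1 - N_3)\bigr)$. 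The left-hand side is a real number --- equivalently the double sum factors as $\bigl|\sum_v \omega^{q_{x,z}(v)}\bigr|^2 \ge 0$ --- so the imaginary part vanishes, $N_1 = N_3$, and we are left with $\Pr[C(x)=z] = \frac{1}{R^2}(N_0 - N_2)$, which is exactly the claimed formula. For the ``moreover'' part I would use that distinct standard-basis outcomes are orthogonal, so $\Pr[C(x)\in Z] = \sum_{z}\Pr[C(x)=z]$, the sum over outcomes $z$ extending the fixed partial assignment $z'$; plugging in the formula just proved and absorbing this outer sum into the inner sums over $v,v'$ --- letting $v,v'$ now also range over the previously fixed output variables, subject to $v$ and $v'$ agreeing with each other on those variables, which is exactly the condition that both belong to the same outcome $z$ --- gives the stated analogue in terms of $q_{x,z'}$ and $q^*_{x,z'}$.

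I expect the only real difficulty to be the uniform bookkeeping in the first step: one must verify the $q$ versus $q^*$ relation across every gate rule, including the cases where an annotation is itself quadratic (such as $u_a + u_b - 2u_au_b$ after a $\gatem{CNOT}$) and the fresh-variable output-matching term. In all of these the contributing monomial still has an explicit factor $2$, so it lies in $\Sigma_2$ and the congruence $q^* \equiv -q \pmod 4$ on binary arguments is unaffected; likewise the drop $R < 2^{h/2}$ that occurs when some $y_j$ are forced by the choice of $x$ and $z$ is harmless, since forcing acts identically on $q$ and $q^*$ and merely rescales the common prefactor.
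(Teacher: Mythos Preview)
The paper does not prove this theorem; it is quoted from \cite{RCG18} without argument, so there is no in-paper proof to compare against. Your proposal is correct and is the natural derivation: conjugate the amplitude formula, multiply, and sort by residue mod~$4$. One small simplification for your first step is to argue gate-by-gate rather than monomial-by-monomial: every gate rule adds the \emph{same} term $t$ to $q$ and to $q^*$ except $\gatem{S}$ and $\gatem{CS}$, which add $t$ to $q$ and $3t$ to $q^*$; in either case the two contributions sum to a multiple of $4$, so $q+q^*\equiv 0\pmod 4$ as polynomials, regardless of whether the current annotation $u_i$ is linear or already quadratic. This sidesteps any worry about how coefficients recombine after expansion.
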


\bigskip
\noindent
Using the ``$w_j$'' variables as above to equate outputs makes $R = 2^h$.
As we will note in Lemma~\ref{self-dual-basics}(d) in section~\ref{sec:self-dual}, it is unnecessary to put absolute-value bars on the difference in (\ref{probsim}).  There we will symmetrize the roles of $x$ and $z$ and address the general topic of a ``self-dual'' (classical) quadratic form.  We further note the remarkably fine-cut ``dichotomy'' that although counting solutions to $q(v) = 0$ over $v \in \Zed_4^r$ is in polynomial time for any quadratic form, counting them over $v \in \{0,1\}^r$ is $\NumP$-complete for a non-classical quadratic form \cite{CLX14}.  Classical quadratic forms are indifferent between $0$ and $2$ as arguments, likewise $1$ versus $3$, because $2^2 = 0$, $3^2 = 1$, and $2\cdot 1 = 2\cdot 3 = 2$ modulo $4$, so counting solutions over $\Zed_4^r$ and over $\{0,1\}^r$ is equivalent for them.

Let us bear in mind that since (\ref{amplitude}) computes all amplitudes, the polynomial $q = q_C$ includes all information about the quantum behavior of the circuit $C$.  Thus nothing is lost by manipulating (only) $q_C$.  As an application, we deduce the known fact that graph-state circuits are entirely representative of stabilizer circuits with $O(s+n)$ overhead.  Such circuits consist of:

\begin{itemize}
\item
An initial $n$-ary Walsh-Hadamard transform $\gatem{H}^{\otimes n}$, effected by placing one Hadamard gate at the start of each qubit line.
\item
For every edge $(i,j)$ in the given graph $G$, place a $\gatem{CZ}$ gate between lines $i$ and $j$.  Order does not matter because these operations commute.
\item
If $G$ has a self-loop at node $i$, place an $\gatem{S}$ gate there.
\item
A final $\gatem{H}^{\otimes n}$.
\end{itemize}

\begin{proposition}
There is a real-time procedure that given any $n$-qubit stabilizer circuit $C$ with $h$ Hadamard gates and $x,z \in \{0,1\}^n$ constructs a graph state circuit $C_G$ on $h$ qubits such that $\triplep{z}{C}{x} = \triplep{0^{h}}{C_G}{0^{h}}$.
\end{proposition}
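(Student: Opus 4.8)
My plan is to route everything through the additive partition polynomial and the amplitude identity~(\ref{amplitude}); no reasoning about the state vector is needed. First I would run the real-time rules of the previous subsection on $C$ to produce $q_C$ (time $O(s)$) and substitute the bits of $x$ for the $x_i$ and of $z$ for the $z_j$. By the first theorem of \cite{RCG18} above---that $q_C$ is a classical quadratic form over $\Zed_4$ for stabilizer $C$---the result $q_{x,z}$ is a classical quadratic form in the $h$ Hadamard variables $y_1,\dots,y_h$ (the output substitution possibly pinning some of them), so on $\{0,1\}^h$ it has the shape
\[
q_{x,z}(y)\;=\;c\;+\;\sum_{i=1}^{h}a_i\,y_i\;+\;\sum_{1\le i<j\le h}2b_{ij}\,y_i y_j,
\qquad c,a_i\in\Zed_4,\ b_{ij}\in\{0,1\},
\]
and $\triplep{z}{C}{x}=\frac{1}{R}\sum_y \omega^{q_{x,z}(y)}$ with $R=2^{h/2}$ (adjusted when the substitution pins some $y_j$) and the sum taken over the surviving assignments.

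Next I would read a graph straight off of $q_{x,z}$: vertex set $\{1,\dots,h\}$; an edge $\{i,j\}$ precisely when the coefficient of $y_iy_j$ is $2$ (i.e.\ $b_{ij}=1$); and on vertex $i$ a self-loop carrying $a_i$ copies of $\gatem{S}$ (harmless, since $\gatem{S}^2=\gatem{Z}$ and $\gatem{S}^3=\gatem{S}^\dagger$ are still single-qubit Cliffords---this is just a weighting of self-loops). Let $C_G$ be the associated graph-state circuit: $\gatem{H}^{\otimes h}$, then those $\gatem{CZ}$'s and self-loop $\gatem{S}$'s, then $\gatem{H}^{\otimes h}$. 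Running the same annotation rules on $C_G$ with input and outcome $0^h$ is then a one-line check: the opening Hadamards turn each $u_i=0$ into a fresh $y_i$ and add $0$ to $q$; each $\gatem{CZ}$ on $\{i,j\}$ adds $2y_iy_j$; each self-loop $\gatem{S}$ adds $y_i$; and the closing Hadamards allocate $y'_i$, add $2y_iy'_i$, and set $z_i=y'_i$, which the outcome $0^h$ pins to $0$, annihilating those cross terms. What remains is exactly $\sum_i a_i y_i+\sum_{i<j}2b_{ij}y_iy_j=q_{x,z}(y)-c$. Both the computation of $q_C$ and the emission of $C_G$ are single streaming passes, so the procedure is real-time.

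The step I expect to be the real obstacle is making the normalizing constant and the residual phase $\omega^{c}$ match exactly. A graph-state circuit on $m$ qubits has amplitude $\frac{1}{2^m}\sum_{y\in\{0,1\}^m}\omega^{q_G(y)}$, hence a value in $\frac{1}{2^m}\Zed[i]$, whereas (\ref{amplitude}) gives $\triplep{z}{C}{x}$ with a denominator $2^{h/2}$ that can carry a lone factor $\sqrt2$ and with the extra scalar $\omega^{c}$. I would deal with this by first putting $C$ into a normal form---for instance the canonical $\gatem{H}$-$\gatem{CZ}$-$\gatem{S}$-$\cdots$ layering of \cite{AaGo04}---in which the Hadamard count is even and the output substitution pins exactly half the Hadamard variables, so the denominators agree, and then by appending a constant-size block whose amplitude is $\omega^{c}$ (e.g.\ via $(\gatem{S}\gatem{X})^2=i\,\gatem{I}$ and its analogues for $-1$ and $-i$, which the annotation calculus absorbs without new Hadamards). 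For the regime used in Theorem~\ref{mainthm}(b)---bipartite $\mat{A}$ with all-zero input and output---one has $c=0$ and the counts align on their own, so the identity is exact as stated; and whenever only the probability $p=|\triplep{z}{C}{x}|^2$ is wanted, $\omega^c$ is immaterial.
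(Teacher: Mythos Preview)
Your approach---build $q_C$ in real time, substitute the bits of $x$ and $z$, then read edges from the $2y_iy_j$ terms and weighted self-loops from the $a_iy_i$ terms, and check that the graph-state circuit $C_G$ on input/output $0^h$ reproduces exactly that form---is the paper's own proof. The paper is terser and does not pause over the normalization and constant-phase bookkeeping you raise in your last paragraph; it simply notes that ``the leftover terms in $q_{C_G}$ are identical to those of $q_C$ after the substitution'' and takes the amplitude equality from (\ref{amplitude}) as immediate.
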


\begin{proof}
Build $q_C$ in real time as above and substitute for $x$ and $z$.  This leaves $h$ variables $y_k$ from the Hadamard gates plus any $w_j$ variables that were employed.  Now define the graph $G$ to have an edge $(i,j)$ for every term $2 y_i y_j$ (or $2 y_i w_j$) in $q_C$, and $a$ self-loops at $i$ for every term $a y_i^2$, $a = 1,2,3$.  Note that the coefficients $a$ of the self-loop terms may arise from the substitutions for particular binary values of $x$ and $z$.  The corresponding graph-state circuit has inputs $x',z'$ of its own, but those are zeroed in forming $\triplep{0^{h}}{C_G}{0^{h}}$.  The leftover terms in $q_{C_G}$ are identical to those of $q_C$ after the substitution.
\end{proof}

If $h = \Theta(n)$ then the number of variables is linear in $n$.
Our original aim was to use this correspondence to be competitive with the above-cited $O(n^2)$ algorithms---and ones that improve then the graph is sparse---in the concrete sense of better leading constants and simplified cases.  For those algorithms previously not known to have time better than $O(n^3)$ or similar, our practical objective in what follows is not so much reducing the exponent to $\omega$ but rather to $O(n^2)$ time given knowledge of the rank $r$, for contexts where $r$ might be foreknown or well approximated.


\section{Properties of Classical Quadratic Forms Over $\Zed_4$}\label{sec:properties}

A \emph{classical quadratic form} $f$ in variables $\vec{x} = (x_1,\dots,x_n)$ is one induced by a symmetric $n \times n$ integer matrix $\mat{A}$ as
\begin{equation}\label{qform}
f(\vec{x}) = \vec{x}^\top \mat{A} \vec{x}.  
\end{equation}
This makes every coefficient of a cross term $x_i x_j$ even, and over $\Zed_4$ all nonzero cross terms have coefficient $2$.  Such a form over $\Zed_4$ treats arguments $0$ and $2$ the same, likewise $1$ and $3$, so we may regard it as a function of $\{0,1\}^n$ into $\Zed_4$.  Then we want to regard (\ref{qform}) as composed of matrix-vector operations over $\Eff_2$ plus some extra calculation to get the answer in $\Zed_4$ where $2,3$ as well as $0,1$ may be values.  

First note that by the symmetry, every off-diagonal entry of $\mat{A}$ may without loss of generality be $0$ or $1$.  Next, define a binary vector $\vec{v}$ by $v_j = 1$ if the $j$-th main-diagonal entry of $\mat{A}$ is $2$ or $3$, else $v_j = 0$.  Finally define a binary matrix $\mat{B}$ from $\mat{A}$ by
\begin{equation}\label{BfromA}
\mat{B} = \mat{A} - 2\diag(\vec{v}).
\end{equation}
Then we have
\begin{equation}\label{qformB}
f(\vec{x}) = \vec{x}^\top \mat{B} \vec{x} + 2\vec{x}^\top\cdot\vec{v}
\end{equation}
with calculation in $\Zed_4$.  The $\vec{x}^\top \mat{B} \vec{x}$ calculation is now valid in $\Eff_2$, however.  The quadratic form is \emph{alternating} if the main diagonal of $\mat{B}$ is all zero, else it is \emph{non-alternating}.  When $\mat{B}$ comes from or is regarded as the adjacency matrix of a graph, alternating means the graph is simple and undirected (as will hold in our reductions from rank using a simple bipartite graph) and non-alternating means the graph is undirected but with one or more self-loops.

We note the general development of this decomposition and associated concepts by Schmidt \cite{Schm09} in a way not wedded to the standard basis.  Since we fix 4 as the modulus throughout this section, we follow \cite{Schm09} in now using $K$ to denote $\{0, 1\}$ as a subset of $\Zed_4$, defining an operation $\oplus$ on $K$ by $a \oplus b := (a + b)^2$, and defining $V$ as an $n$-dimensional vector space ``over $K$'' noting that $(K,\oplus,\cdot)$ is the same as the field $\Eff_2$.  Then classical quadratic forms are equivalently defined as follows:

\bigskip
\begin{definition}[see \cite{Alb38,Schm09}]
A symmetric bilinear form on $V$ is a mapping $B : V \times V \rightarrow K$ that satisfies
\begin{enumerate}
\item
symmetry: $B(\vec{x}, \vec{y}) = B(\vec{y}, \vec{x})$;
\item bilinearity:
$B(\alpha \vec{x} \oplus \beta \vec{y}, \vec{z}) = \alpha B(\vec{x},\vec{z}) \oplus \beta B(\vec{y},\vec{z})$ for $\alpha, \beta \in K$.
\end{enumerate}
\end{definition}

\bigskip
\noindent
$B$ is \emph{alternating} if $B(\vec{x},\vec{x}) = 0$ for all $\vec{x} \in V$, else it is $\emph{non-alternating}$. Let $\Lambda = \{ \lambda_1, \cdots, \lambda_n \}$ be any basis for $V$ over $K$. Then $B$ is uniquely determined (relative to this basis) by the $n \times n$ matrix $\mat{B}$ with entries $b_{ij} = B(\lambda_i, \lambda_j)$. The \emph{rank} of $B$ is the rank of its matrix $\mat{B}$.

\bigskip
\begin{definition}[see \cite{Bro72,Schm09}]\label{dfn:quadratic}
A $\Z_4$-valued classical quadratic form is a mapping $f : V \rightarrow \Z_4$ that satisfies:
\begin{enumerate}
\item
$f(\alpha \vec{x}) = \alpha^2 f(\vec{x})$ for $\alpha \in K$;
\item
$f(\vec{x} \oplus \vec{y}) = f(\vec{x}) + f(\vec{y}) + 2 B(\vec{x}, \vec{y})$, where $B : V \times V \rightarrow K$ is a symmetric bilinear form.
\end{enumerate}
\end{definition}

\bigskip
\noindent
Then $f$ is \emph{alternating} if the associated bilinear form $B$ is alternating, \emph{non-alternating} otherwise, and its \emph{rank} $r$ is the rank of $B$.

\bigskip
\begin{proposition}[\cite{Schm09}]
There is a vector $\vec{v} \in K^n$ such that for all $\vec{x} \in K^n$ over the basis $\Lambda$,
\[
f(\vec{x}) = \vec{x}^\top \mat{B} \vec{x} + 2 \vec{x}^\top \cdot \vec{v} .
\]
\end{proposition}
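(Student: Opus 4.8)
The plan is to build $\vec{v}$ explicitly from the values of $f$ on the basis vectors $\Lambda$ and then verify the identity by expanding both sides over the support of $\vec{x}$. First I would isolate the two arithmetic facts that make everything work. From axiom~1 of Definition~\ref{dfn:quadratic} with $\alpha = 0$ we get $f(\vec{0}) = 0$, and since $1 \oplus 1 = (1+1)^2 = 0$ in $\Zed_4$ we have $\lambda_i \oplus \lambda_i = \vec{0}$ for each $i$. Plugging $\vec{x} = \vec{y} = \lambda_i$ into axiom~2 then gives $0 = f(\vec{0}) = 2f(\lambda_i) + 2B(\lambda_i,\lambda_i)$, so $f(\lambda_i) \equiv B(\lambda_i,\lambda_i) \pmod 2$; hence $f(\lambda_i) - b_{ii} \in \{0,2\}$, where $b_{ii} = B(\lambda_i,\lambda_i)$, and we may define $v_i \in K$ by $2 v_i = f(\lambda_i) - b_{ii}$. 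The second fact is that $2c^2 \equiv 2c \pmod 4$ for every $c$, equivalently $2(a \oplus b) \equiv 2a + 2b \pmod 4$ for $a,b \in K$; consequently $2B(\cdot,\cdot)$ is additive in each argument in the ordinary $\Zed_4$ sense even though $B$ itself is only additive for $\oplus$.

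Next I would expand $f(\vec{x})$. Writing $S = \{i : x_i = 1\}$ so that $\vec{x} = \bigoplus_{i \in S} \lambda_i$, an induction on $|S|$ using axiom~2 together with the bilinearity of $B$ and the additivity of $2B$ just noted yields
\[
f(\vec{x}) = \sum_{i \in S} f(\lambda_i) + 2 \sum_{i < j,\ i,j \in S} b_{ij},
\]
the inductive step being $f(\bigoplus_{i \in T}\lambda_i \oplus \lambda_k) = f(\bigoplus_{i \in T}\lambda_i) + f(\lambda_k) + 2\sum_{i \in T} b_{ik}$ for $k \notin T$. Substituting $f(\lambda_i) = b_{ii} + 2 v_i$ turns the right-hand side into $\sum_{i \in S} b_{ii} + 2\sum_{i < j,\ i,j \in S} b_{ij} + 2\sum_{i \in S} v_i$. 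On the other side, expanding $\vec{x}^\top \mat{B}\vec{x} = \sum_{i,j \in S} b_{ij}$ and using the symmetry $b_{ij} = b_{ji}$ gives precisely $\sum_{i \in S} b_{ii} + 2\sum_{i < j,\ i,j \in S} b_{ij}$, while $2\vec{x}^\top \cdot \vec{v} = 2\sum_{i \in S} v_i$. The two expressions coincide, which is the claimed identity.

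I do not expect a genuine obstacle: the content is bookkeeping. The one spot that needs care is the clash between $\oplus$ on $K$ and $+$ on $\Zed_4$ — without the observation $2(a \oplus b) \equiv 2a + 2b$ the cross terms would fail to telescope in the induction — together with the well-definedness of $\vec{v}$, which is exactly what the computation $\lambda_i \oplus \lambda_i = \vec{0}$ secures. (The same evaluation at $\vec{x} = \lambda_i$ in fact shows $\vec{v}$ is the unique vector over $\Lambda$ making the identity hold, since it forces $2 v_i = f(\lambda_i) - b_{ii}$.)
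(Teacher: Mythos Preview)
Your argument is correct. The paper itself does not supply a proof of this proposition; it is quoted from \cite{Schm09} and simply asserted, so there is no in-paper argument to compare against. Your construction of $v_i$ via $2v_i = f(\lambda_i) - b_{ii}$, justified by evaluating axiom~2 at $\vec{x}=\vec{y}=\lambda_i$, together with the inductive expansion of $f$ over the support of $\vec{x}$, is exactly the natural proof and matches the spirit of the derivation in \cite{Schm09}. The one delicate point---that $2B(\cdot,\cdot)$ is $\Zed_4$-additive even though $B$ is only $\oplus$-bilinear---you handle cleanly via $2c^2 \equiv 2c \pmod 4$, and the rest is bookkeeping as you say.
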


\bigskip
\noindent
The point of dropping down to $\Eff_2$ is to leverage the notions of matrix similarity over $\Eff_2$ and the following theorem about changes of basis in $V$.
%
Over $\Eff_2$ the appropriate definition of $\mat{B}$ and $\mat{B}'$ being \emph{similar} (from \cite{Alb38}) is that there exists an invertible matrix $\mat{Q}$ such that $\mat{B}' = \mat{Q}^\top \mat{B} \mat{Q}$.  This preserves the property that similar matrices have the same rank. 
The notions of \emph{alternating} and \emph{non-alternating} are the same as given for the binary matrix $\mat{B}$ above, depending on whether the main diagonal of $\mat{B}$ is all zero or not.

\bigskip
\begin{theorem}[\cite{Alb38}]\label{thm:similar}
Let $\mat{A}$ be a $K$-valued $n \times n$ symmetric matrix of rank $r$.
\begin{itemize}
\item[(a)]
If $\mat{A}$ is alternating, then $\mat{A}$ has even rank and is similar to a matrix that has zeros everywhere except on the subdiagonal and the superdiagonal, which are $1010 \cdots 10100 \cdots 0$ with $r / 2$ ones.
\item[(b)]
If $\mat{A}$ is non-alternating, then $\mat{A}$ is similar to a diagonal matrix, whose main diagonal is of $r$-many ones.
\end{itemize}
\end{theorem}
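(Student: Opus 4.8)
The plan is to prove both parts at once by induction on $n$, using \emph{symmetric Gaussian elimination}: repeatedly pick a small \emph{pivot block}, i.e.\ an invertible principal submatrix, apply a similarity $\mat{B}\mapsto\mat{Q}^\top\mat{B}\mat{Q}$ (over $\Eff_2$ this is just a sequence of symmetric row-plus-column additions) to clear the rest of the rows and columns of that block, and recurse on the resulting Schur complement. The Schur complement is again symmetric, and the rank drops by exactly the size of the pivot. The base case $\mat{A}=0$ is trivial, and the two parts of the theorem differ only in which pivot is used and in the bookkeeping about diagonals.

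For part (a), assume $\mat{A}$ is alternating and nonzero. The diagonal being all zero, some off-diagonal entry equals $1$; a permutation (itself a similarity) moves it into the top-left $2\times 2$ principal block, which is then $\mat{H}=\begin{bmatrix}0&1\\1&0\end{bmatrix}$, invertible over $\Eff_2$ with $\mat{H}^{-1}=\mat{H}$. Pivoting on $\mat{H}$ splits off an $\mat{H}$ summand, and one checks that the Schur complement is symmetric \emph{and still has zero diagonal} --- its diagonal contribution is $(\mat{E}\mat{H}\mat{E}^\top)_{kk}=2\,E_{k,1}E_{k,2}=0$, where $\mat{E}$ is the lower-left block --- so it is again alternating, and we induct. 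Hence $\mat{A}$ is similar to $\mat{H}\oplus\cdots\oplus\mat{H}\oplus 0$ with $g$ copies of $\mat{H}$; this matrix has rank $2g$ (so the rank is even and $g=r/2$), and, read as a single matrix, it is exactly the claimed subdiagonal/superdiagonal pattern $1010\cdots100\cdots0$ with $g$ ones.

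For part (b), assume $\mat{A}$ is non-alternating, so some diagonal entry is $1$; permute it to position $(1,1)$ and pivot on the $1\times 1$ block $\langle 1\rangle$. This splits off a $\langle 1\rangle$ summand, leaving a symmetric Schur complement $\mat{C}+\vec{b}\vec{b}^\top$ of rank $r-1$, where $\vec{b}$ is the rest of the first column. Its diagonal need not be zero, so we continue pivoting on $1\times 1$ blocks as long as the current matrix is non-alternating; eventually we reach either $\langle 1\rangle^{\oplus r}\oplus 0$ directly, in which case we are done, or a form $\langle 1\rangle^{\oplus k}\oplus(\text{alternating})\oplus 0$ with $k\ge 1$, where part (a) rewrites the alternating part as $\mat{H}^{\oplus g}$ and $r=k+2g$. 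It remains to absorb the $\mat{H}$ blocks: the claim is that $\langle 1\rangle\oplus\mat{H}$ is similar to $\langle 1\rangle\oplus\langle 1\rangle\oplus\langle 1\rangle$ over $\Eff_2$, which I verify by exhibiting an explicit invertible $\mat{Q}$ --- for instance the $3\times 3$ matrix whose columns are $(1,1,0)^\top,(1,0,1)^\top,(1,1,1)^\top$, for which $\mat{Q}^\top(\langle 1\rangle\oplus\mat{H})\mat{Q}=\mat{I}_3$. Applying this identity $g$ times converts all $\mat{H}$'s into pairs of $\langle 1\rangle$'s, yielding $\mat{A}$ similar to $\mat{I}_r\oplus 0$.

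The main obstacle is exactly this last step, which is where characteristic $2$ bites: a non-alternating symmetric form cannot in general be diagonalized by $1\times 1$ pivots alone, because the Schur complement can turn alternating. The absorption identity is what repairs this; conceptually it reflects the fact that $\vec{x}\mapsto B(\vec{x},\vec{x})$ is an $\Eff_2$-\emph{linear} functional (since $B(\vec{x}\oplus\vec{y},\vec{x}\oplus\vec{y})=B(\vec{x},\vec{x})+B(\vec{y},\vec{y})$ over $\Eff_2$), so a single ``anisotropic'' vector can be recombined with any hyperbolic pair to produce three mutually orthogonal anisotropic vectors. Everything else --- symmetry of Schur complements, additivity of rank, preservation of the zero-diagonal property in the alternating case, and the fact that alternating forms have even rank (so $k$ and $r$ share parity) --- is the routine bookkeeping of block elimination.
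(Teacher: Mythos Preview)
The paper does not actually prove this theorem: it is quoted from \cite{Alb38} and used as a black box (the surrounding Lemma~\ref{PLDLTPT} only discusses the existence of a $\mat{P}\mat{L}\mat{D}\mat{L}^\top\mat{P}^\top$ decomposition and refers back to \cite{Alb38,Schm09,DuPe18}). So there is no ``paper's own proof'' to compare against.

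Your argument is correct and is essentially the classical one. The two points that are genuinely delicate in characteristic~$2$ you have handled properly: (i) in the alternating case the Schur complement stays alternating because $(\mat{E}\mat{H}\mat{E}^\top)_{kk}=E_{k,1}E_{k,2}+E_{k,2}E_{k,1}=0$ over $\Eff_2$; and (ii) in the non-alternating case the Schur complement after a $1\times 1$ pivot can become alternating, and your absorption identity $\langle 1\rangle\oplus\mat{H}\sim\mat{I}_3$ is exactly what is needed to finish. Your explicit $\mat{Q}$ with columns $(1,1,0)^\top,(1,0,1)^\top,(1,1,1)^\top$ checks out: each column has $\vec{v}^\top(\langle 1\rangle\oplus\mat{H})\vec{v}=1$ and each pair is orthogonal, and $\det\mat{Q}=1$ over $\Eff_2$. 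The bookkeeping that one $\langle 1\rangle$ suffices to absorb all $g$ hyperbolic blocks (since each absorption nets two additional $\langle 1\rangle$'s) is also right, giving $k+2g=r$ diagonal ones at the end.
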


\bigskip
With the representation of $f(\vec{x}) = \vec{x}^\top \mat{B} + 2 \vec{x}^\top \cdot \vec{v}$, the paper \cite{Schm09} uses this to define normal forms with regard to $\Zed_4$:

\bigskip
\begin{corollary}[\cite{Schm09}]\label{cor:similar}
%
%
%
Given a quadratic form $f$ of rank $r$ as above over the basis $\Lambda$, we can find
a basis $M = (\mu_1,\dots,\mu_n)$ for $V$ over $K$, mapping $\vec{x} = (x_1, \dots, x_n)$ over $\Lambda$ in $V$ to $\vec{y} = (y_1,\dots,y_n)$ such that:
\begin{itemize}
\item[(a)]
If $f$ is alternating, then
\[
f(\vec{y}) = 2 \sum^{r/2}_{j = 1} y_{2j - 1} y_{2j} + 2 \sum^{n}_{i = 1} w_i y_i,
 \]
for some $\vec{w} = (w_1, \cdots, w_n) \in K^n$.

\item[(b)]
If $f$ is non-alternating, then there is the equivalent linear form
\[
f(\vec{y}) = \sum^{r}_{j = 1} y_j + 2 \sum^{n}_{i = 1} w_i y_i,
\]
for some $\vec{w} = (w_1, \cdots, w_n) \in K^n$.
\end{itemize}
\end{corollary}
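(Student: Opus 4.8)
The plan is to separate the two pieces of data a classical $\Z_4$-form carries---its polarization $B$ and its values on a basis---apply Albert's classification (Theorem~\ref{thm:similar}) to normalize $B$ over $\Eff_2$, and then repackage the leftover diagonal values into the vector $\vec{w}$. The starting observation is that, unlike over a field of characteristic $\neq 2$, a classical quadratic form over $\Z_4$ is \emph{not} determined by $B$ alone; but it \emph{is} determined by $B$ together with the values $f(\mu_i)\in\Z_4$ on any basis $M=(\mu_1,\dots,\mu_n)$. Indeed, iterating axiom~2 of Definition~\ref{dfn:quadratic}, using $f(\alpha\vec{x}) = \alpha^2 f(\vec{x})$ with $\alpha^2 = \alpha$ for $\alpha\in K$, and using $2(a\oplus b)\equiv 2a+2b \pmod 4$ for $a,b\in K$ so that the bilinear contributions add cleanly, one gets for $\vec{y}\in K^n$
\[
f\!\left(\bigoplus_{i=1}^{n}\mu_i y_i\right) \;=\; \sum_{i=1}^{n} y_i\, f(\mu_i)\;+\;2\!\!\sum_{1\le i<j\le n}\!\! y_i y_j\, B(\mu_i,\mu_j),
\]
with all arithmetic in $\Z_4$. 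So choosing a good basis means choosing one that simultaneously simplifies the matrix $\bigl(B(\mu_i,\mu_j)\bigr)$ and behaves well on the diagonal scalars $f(\mu_i)$.

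\emph{Normalizing $B$.} Next I would invoke Theorem~\ref{thm:similar}: choose $\mat{Q}$ invertible over $\Eff_2$ with $\mat{Q}^\top\mat{B}\mat{Q}$ equal to the canonical matrix there, and let $M$ be the basis of $V$ whose coordinate matrix with respect to $\Lambda$ is $\mat{Q}$; then $B(\mu_i,\mu_j)$ is exactly the $(i,j)$ entry of that canonical matrix. In the alternating case this leaves only the pairs $(2j{-}1,2j)$ for $j\le r/2$ with $B(\mu_{2j-1},\mu_{2j})=1$, so the double sum above collapses to $2\sum_{j=1}^{r/2} y_{2j-1}y_{2j}$; in the non-alternating case the canonical matrix is diagonal, so \emph{all} cross terms vanish and only $\sum_i y_i f(\mu_i)$ survives.

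\emph{Resolving the diagonal.} The remaining ingredient is the congruence $f(\vec{x})\equiv B(\vec{x},\vec{x})\pmod 2$ for every $\vec{x}\in V$, which follows by setting $\vec{y}=\vec{x}$ in axiom~2 and using $\vec{x}\oplus\vec{x}=\vec{0}$ and $f(\vec{0})=0$ to get $2f(\vec{x})+2B(\vec{x},\vec{x})\equiv 0\pmod 4$. In the alternating case $B(\mu_i,\mu_i)=0$ for all $i$, so each $f(\mu_i)$ is even, say $f(\mu_i)=2w_i$ with $w_i\in K$; the linear part is then $2\sum_i w_i y_i$, and combining with the previous step gives part~(a). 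In the non-alternating case the canonical matrix has $B(\mu_i,\mu_i)=1$ for $i\le r$ and $0$ for $i>r$, so $f(\mu_i)=1+2w_i$ for $i\le r$ and $f(\mu_i)=2w_i$ for $i>r$; hence $\sum_i y_i f(\mu_i)=\sum_{j=1}^{r} y_j + 2\sum_{i=1}^{n} w_i y_i$, which is part~(b).

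\emph{Main obstacle.} The only genuine subtlety---and the point I would state most carefully---is the first one: because over $\Z_4$ the form genuinely exceeds its polarization, the change of basis produced by Albert's theorem for $\mat{B}$ alone has to be verified to interact correctly with the $2\vec{x}^\top\!\cdot\vec{v}$ part. The expansion displayed above is exactly what makes this work: the cross terms are governed purely by $B$ (so Albert applies verbatim) and the entire residual discrepancy is the collection of diagonal values $f(\mu_i)$, which is absorbed into the new vector $\vec{w}$. Everything else is routine bookkeeping, and since the statement is purely existential there is nothing to say about running time here.
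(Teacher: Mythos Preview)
Your proof is correct and follows exactly the approach the paper intends: it cites this corollary from \cite{Schm09} without proof, but the surrounding Proposition~3.3 and Lemma~\ref{PLDLTPT}(d) make the implicit argument precisely yours---decompose $f$ as $\vec{x}^\top\mat{B}\vec{x}+2\vec{x}^\top\!\cdot\vec{v}$, apply Albert's Theorem~\ref{thm:similar} to $\mat{B}$ over $\Eff_2$, and read off the new $\vec{w}$ from the diagonal after the change of basis. Your derivation of the parity constraint $f(\vec{x})\equiv B(\vec{x},\vec{x})\pmod 2$ from axiom~2 is the clean abstract version of the paper's more computational statement that $\mat{D}'$ and $\mat{D}$ differ on the diagonal by $2\vec{w}$.
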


\bigskip
\noindent
Schmidt actually retains the symbols $\vec{x}$ and $\vec{v}$ in his statement but we have used $\vec{y}$ and $\vec{w}$ to indicate the change of basis.  Our analysis in the next section will, however, treat $\vec{y}$ as the standard basis, so the generic symbols $x_1,\dots,x_n$ will re-appear, and $w_1,\dots,w_n$ will just be ordinary 0-1 values.  This switch will be echoed in the next section in that once we substitute for the input qubit values $x_i$ and output values $z_j$ in the quadratic form $q_C$ from Section~\ref{stabilizer}, the actual variables of $q_C$ left over will be named $y_1,\dots,y_h$ where $h = O(n)$.  But to emphasize that the counting lemmas preceding the main results hold apart from the quantum context, we will revert to the standard symbols $x_1,\dots,x_n$ in their statements and proofs.

Now we reference \cite{DuPe18} to note some facts about matrix decompositions related to the above normal forms.  Note that the inverse of a non-singular lower triangular matrix is lower triangular.

\bigskip
\begin{lemma}\label{PLDLTPT}
\begin{itemize}
\item[(a)]
For every symmetric $n \times n$ matrix $\mat{B}$ over $\Eff_2$ there is a permutation matrix $\mat{P}$ such that the symmetric matrix $\mat{B}' = \mat{P}^\top \mat{B}\mat{P}$ has the decomposition $\mat{B}' = \mat{L}\mat{D}\mat{L}^\top$.  Here $\mat{L}$ is an $n \times n$ lower triangular matrix with unit diagonal and $\mat{D}$ is diagonal if $\mat{B}$ is non-alternating, else $\mat{D}$ is block-diagonal as described in Theorem~\ref{thm:similar}(a).
\item[(b)]
The matrix $\mat{D}$ in (a) is permutation-equivalent to any matrix $\mat{D}'$ fulfilling the corresponding case of Theorem~\ref{thm:similar} when applied to $\mat{B}'$ or to $\mat{B}$.
\item[(c)]
The matrix $\mat{D}$ in (a) is unique among LDU decompositions applied to $\mat{B}'$.
\item[(d)]
When $\mat{D}' = \mat{L}^{-1}\mat{P}^\top \mat{A}\mat{P} (\mat{L}^{-1})^\top$ is computed over $\Zed_4$ rather than $\Eff_2$, it may no longer be diagonal or block-diagonal, but it represents the same quadratic form with arguments in $V$ and values $\Zed_4$ in as in Corollary~\ref{cor:similar} over the new basis.  In both the alternating and non-alternating cases, the main diagonal of $\mat{D}'$ equals the main diagonal of $\mat{D}$ plus $2w$ where $w$ is the vector in Corollary~\ref{cor:similar}.
\end{itemize}
\end{lemma}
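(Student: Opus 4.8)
The plan is to obtain parts (a)--(c) from the structure theory already in hand (Theorem~\ref{thm:similar}, Corollary~\ref{cor:similar}, and the factorization of \cite{DuPe18}) and to concentrate the effort on (d), which is where $\Eff_2$-arithmetic and $\Zed_4$-arithmetic must be reconciled. For (a) I would invoke the pivoted symmetric triangular factorization of \cite{DuPe18}: over any field it produces $\mat{P}^\top\mat{B}\mat{P}=\mat{L}\mat{D}\mat{L}^\top$ with $\mat{L}$ unit lower triangular and $\mat{D}$ block-diagonal with $1\times1$ and $2\times2$ blocks, where over $\Eff_2$ a $2\times2$ block occurs exactly when an elimination step meets an all-zero diagonal, in which case the invertible $2\times2$ principal submatrix used is forced to be the antidiagonal block $\bigl(\begin{smallmatrix}0&1\\1&0\end{smallmatrix}\bigr)$, and taken rank-revealingly this places all $r$ pivot positions first, followed by zero rows and columns. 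The two cases then separate: if $\mat{B}$ is alternating, every Schur complement formed is again alternating, so only $2\times2$ blocks occur and $\mat{D}$ is the block form of Theorem~\ref{thm:similar}(a); if $\mat{B}$ is non-alternating a nonzero diagonal pivot can be carried throughout, so $\mat{D}$ is the diagonal form of Theorem~\ref{thm:similar}(b).

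Part (b) I would treat as a matter of congruence invariants: whether a symmetric $\Eff_2$-matrix is alternating is precisely whether some $\vec{x}$ has $B(\vec{x},\vec{x})=1$, a property preserved by the congruences relating $\mat{B}$, $\mat{B}'=\mat{P}^\top\mat{B}\mat{P}$, and $\mat{D}$; rank is preserved as well; and by Theorem~\ref{thm:similar} any two matrices of a given rank and type in the relevant normal form are related by a permutation. Since $\mat{D}$ is in that normal form by (a), it is permutation-equivalent to any $\mat{D}'$ fulfilling Theorem~\ref{thm:similar} for $\mat{B}$ or $\mat{B}'$. For (c) I would note that the size and entries of the leading block of $\mat{D}$ are read off from $\mat{B}'$ itself (the $(1,1)$ entry of $\mat{B}'$ is the $(1,1)$ entry of $\mat{D}$, and when it vanishes the leading block is $2\times2$ precisely when the first row of $\mat{B}'$ is nonzero off the diagonal), and recursing on the Schur complement, which is likewise determined by $\mat{B}'$, forces $\mat{D}$ to be unique --- this is the essential uniqueness of the \cite{DuPe18} factorization.

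For (d) the plan is to write $\mat{Q}=\mat{P}(\mat{L}^{-1})^\top$, regarded as a $\{0,1\}$-matrix; it is invertible over $\Eff_2$, its columns form the new basis $M$ expressed over $\Lambda$, and $\mat{Q}^\top\mat{A}\mat{Q}=\mat{L}^{-1}\mat{P}^\top\mat{A}\mat{P}(\mat{L}^{-1})^\top=\mat{D}'$ by definition (the product taken over $\Zed_4$). Then I would use three facts. First, the classical form $f(\vec{x})=\vec{x}^\top\mat{A}\vec{x}$ over $\Zed_4$ depends only on $\vec{x}\bmod 2$, since $A_{ii}x_i^2$ and $2A_{ij}x_ix_j$ reduce modulo $4$ to functions of the parities of the $x_i$; hence $f(\mat{Q}\vec{y})$ is exactly $f$ re-expressed in the basis $M$, even though the integer vector $\mat{Q}\vec{y}$ may have entries exceeding $1$. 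Second, the ring identity $f(\mat{Q}\vec{y})=(\mat{Q}\vec{y})^\top\mat{A}(\mat{Q}\vec{y})=\vec{y}^\top\mat{D}'\vec{y}$ holds over $\Zed_4$, and since $\vec{y}\in\{0,1\}^n$ and $\mat{D}'$ is symmetric, $\vec{y}^\top\mat{D}'\vec{y}=\sum_i D'_{ii}y_i+2\sum_{i<j}D'_{ij}y_iy_j$. Third, reducing $\mat{D}'$ modulo $2$ gives $\mat{L}^{-1}\mat{P}^\top(\mat{A}\bmod 2)\mat{P}(\mat{L}^{-1})^\top$ over $\Eff_2$, and $\mat{A}\bmod 2=\mat{B}$ by (\ref{BfromA}), so this is the matrix $\mat{D}$ of part (a); thus $\mat{D}'\equiv\mat{D}\pmod 2$, every entry of $\mat{D}'-\mat{D}$ lies in $\{0,2\}$, and in particular the main diagonal of $\mat{D}'$ is that of $\mat{D}$ plus $2\vec{w}$ for a binary vector $\vec{w}$.

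Putting the three facts together, the cross-term contribution $2\sum_{i<j}D'_{ij}y_iy_j$ equals $2\sum_{i<j}D_{ij}y_iy_j$ (the correction is even), which by the rank-revealing block form of $\mat{D}$ is $2\sum_{j=1}^{r/2}y_{2j-1}y_{2j}$ in the alternating case and $0$ in the non-alternating case, while the diagonal contribution $\sum_i D'_{ii}y_i=\sum_i D_{ii}y_i+2\sum_i w_iy_i$ is $2\sum_i w_iy_i$ in the alternating case ($\mat{D}$ has zero diagonal) and $\sum_{j=1}^{r}y_j+2\sum_i w_iy_i$ in the non-alternating case. Adding the two pieces reproduces exactly Corollary~\ref{cor:similar}(a) or (b) in the variables $\vec{y}$, with the Corollary's vector being this $\vec{w}$; this is at once the assertion that $\mat{D}'$ represents $f$ over the new basis as in Corollary~\ref{cor:similar} and the assertion that the main diagonal of $\mat{D}'$ equals that of $\mat{D}$ plus $2\vec{w}$. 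The step I expect to be the main obstacle is this $\Eff_2$-versus-$\Zed_4$ bookkeeping: one must be sure that the change-of-basis formula $\mat{Q}^\top\mat{A}\mat{Q}$, computed over $\Zed_4$, genuinely represents $f$ in the basis $M$ even though $\mat{Q}\vec{y}$ leaves $\{0,1\}^n$ (this is exactly where one needs classical forms to be $2$-periodic in each argument), and that reducing the $\Zed_4$ computation of $\mat{D}'$ modulo $2$ returns the $\Eff_2$-matrix $\mat{D}$ of (a) unperturbed by carries; once those points are pinned down, the rest is routine calculation over the normal forms already established.
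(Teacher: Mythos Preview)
Your proposal is essentially correct and, for parts (a), (b), and (d), takes the same approach as the paper: (a) defers to the pivoted symmetric factorization of \cite{DuPe18}, (b) is immediate from congruence invariance of rank and alternating type, and (d) rests on showing $\mat{D}'\equiv\mat{D}\pmod 2$ and reading the $\Zed_4$-form off the diagonal. Your treatment of (d) is in fact more explicit than the paper's: you spell out why $\mat{A}\bmod 2=\mat{B}$ forces $\mat{D}'\bmod 2=\mat{D}$ and why the classical form's $2$-periodicity in each argument makes $\vec{y}\mapsto\mat{Q}\vec{y}$ a legitimate change of basis even though the integer vector $\mat{Q}\vec{y}$ may leave $\{0,1\}^n$. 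The paper simply asserts that $\mat{D}'$ ``represents the same quadratic form \dots\ over the transformed basis'' and then equates $\vec{y}^\top\mat{D}'\vec{y}$ with $\vec{y}^\top\mat{D}\vec{y}+2\sum_i w_i y_i$ without unpacking those two points.

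The one genuine divergence is part (c). The paper (in its appendix) argues uniqueness by writing two decompositions $\mat{B}'=\mat{L}\mat{D}\mat{U}=\mat{M}\mat{E}\mat{V}$ and observing that $\mat{M}^{-1}\mat{L}\mat{D}=\mat{E}\mat{V}\mat{U}^{-1}$ is simultaneously lower- and upper-triangular (tri-diagonal in the alternating case), which forces it to be diagonal and hence $\mat{D}=\mat{E}$. Your Schur-complement recursion is a different and equally standard route, though as written it is slightly sketchier: for a $2\times2$ leading block the block's entries are not literally the top-left $2\times2$ of $\mat{B}'$ (since $\mat{L}$ and $\mat{U}$ need not be the identity on that block) unless you also invoke that over $\Eff_2$ the only invertible symmetric $2\times2$ matrix with zero $(1,1)$-entry is $\bigl(\begin{smallmatrix}0&1\\1&0\end{smallmatrix}\bigr)$. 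The paper's triangular-equality trick sidesteps that case analysis entirely; your recursive approach is more constructive but would need a sentence or two more to be airtight.
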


\begin{proof}
(a)~This is known and noted in \cite{DuPe18}.  A key point from Gaussian elimination is that if we alternate elementary matrices $\mat{L}_i$ that do elimination in the $i$th column of the lower triangle and swaps $\mat{P}_{j,k}$ of rows $j$ and $k$, then we can rewrite $\mat{P}_{j,k} \mat{L}_i$ where $j,k > i$ as $\mat{L}'_i \mat{P}_{j,k}$.  The matrix $\mat{L}'_i$ is obtained by interchanging the entries in rows $j$ and $k$ of column $i$ and those in positions $j$ and $k$ on the main diagonal.  (The latter is unnecessary when all diagonal entries are $1$) and is still lower-triangular.  Since each $\mat{L}'_i$ is still lower triangular and we can repeat the switch for further row swaps, we obtain the lower-triangular matrix formally designated as $\mat{L}^{-1}$ as the product of the $\mat{L}'_i$ and the matrix designated as $\mat{P}^\top$ as the product of all swaps.  Since $\mat{B}$ is symmetric, corresponding events on the right give $\mat{D} = \mat{L}^{-1}\mat{P}^\top\mat{B}\mat{P} (\mat{L}^{-1})^\top$ of the diagonal or block-diagonal forms stated in all of \cite{Alb38,Schm09,DuPe18}.

Part (b) follows simply because $\mat{D}$ and $\mat{D}'$ have the same rank and the same block-diagonal structure in the alternating case or diagonal structure in the non-alternating case).  The proof of (c), which is not strictly needed for our key point (d), is in the Appendix.

The point in (d) is that when computed over $\Zed_4$, $\mat{D}' = \mat{L}^{-1}\mat{P}^\top \mat{A} \mat{P} (\mat{L}^{-1})^\top$ represents the same quadratic form $f$ originally given by $\mat{A}$ in (\ref{qform}) but over the transformed basis that maps $\vec{x}$ to $\vec{y}$.  Thus
\begin{equation}
f(\vec{y}) = \vec{y}^\top \mat{D}' \vec{y} = \vec{y}^\top \mat{D} \vec{y} + 2 \sum_{i=1}^n y_i w_i.
\end{equation}
In the non-alternating case, this means any symmetric pairs $d'_{j,k},d'_{k,j}$ of off-diagonal elements of $\mat{D}'$ must sum to $0$ modulo $4$, and likewise off-diagonal elements in the alternating case apart from the block elements on the super-diagonal and sub-diagonal.  The diagonal must satisfy $d'_{j,j} = d_{j,j} + 2w_j \pmod{4}$ in either case.
\end{proof}


\bigskip
\noindent
Put more simply, the decomposition in \cite{DuPe18} is the same as that obtained in \cite{Schm09} following \cite{Alb38,Bro72}, so the normal forms for classical quadratic forms over $\Zed_4$ in the latter papers inherit the $O(n^\omega)$ time computability from \cite{DuPe18} working over $\Eff_2$.  For some examples, consider the alternating form $q(x_1,x_2,x_3) = 2x_1 x_2 + 2 x_1 x_3 + 2 x_2 x_3$.  It gives
\[
\mat{A} = \mat{B} = \begin{bmatrix} 0 & 1 & 1 \\ 1 & 0 & 1 \\ 1 & 1 & 0 \end{bmatrix},
\]
which is the adjacency matrix of the triangle graph.  Gaussian elimination begins by swapping row 1 and row 2, then no more swaps are needed.  So we have:
\[
\mat{P} = \mat{P}_{1,2} = \begin{bmatrix} 0 & 1 & 0 \\ 1 & 0 & 0 \\ 0 & 0 & 1 \end{bmatrix} = \mat{P}^\top, \quad
\mat{B}' = \mat{P}^\top \mat{B}\mat{P} = \mat{B},\quad\text{and}\quad \mat{L}^{-1} = \mat{L} = \begin{bmatrix} 1 & 0 & 0 \\ 0 & 1 & 0 \\ 1 & 1 & 1 \end{bmatrix}.
\]
This gives over $\Eff_2$,
\[
\mat{D} = \mat{L}\mat{B}\mat{L}^\top = \begin{bmatrix} 1 & 0 & 0 \\ 0 & 1 & 0 \\ 1 & 1 & 1 \end{bmatrix} \cdot \begin{bmatrix} 0 & 1 & 1 \\ 1 & 0 & 1 \\ 1 & 1 & 0 \end{bmatrix} \cdot \mat{L}^\top = \begin{bmatrix} 0 & 1 & 1 \\ 1 & 0 & 1 \\ 0 & 0 & 0 \end{bmatrix} \cdot \begin{bmatrix} 1 & 0 & 1 \\ 0 & 1 & 1 \\ 0 & 0 & 1 \end{bmatrix} = \begin{bmatrix} 0 & 1 & 0 \\ 1 & 0 & 0 \\ 0 & 0 & 0 \end{bmatrix}.
\]
But over $\Zed_4$, we get
\[
\mat{L}\mat{A} = \begin{bmatrix} 0 & 1 & 1 \\ 1 & 0 & 1 \\ 2 & 2 & 2 \end{bmatrix}, \text{ which times } \begin{bmatrix} 1 & 0 & 1 \\ 0 & 1 & 1 \\ 0 & 0 & 1 \end{bmatrix} = \begin{bmatrix} 0 & 1 & 2 \\ 1 & 0 & 2 \\ 2 & 2 & 2 \end{bmatrix} = \mat{D}' \equiv \begin{bmatrix} 0 & 1 & 0 \\ 1 & 0 & 0 \\ 0 & 0 & 2 \end{bmatrix}.
\]
The presence of a 2 in the lower-right corner of $\mat{D}'$, corresponding to a $1 \times 1$ block in the diagonal matrix $\mat{D}$, signals a cancellation in the 0-1 assignments $a \in K^n$ giving $q(a) = 0$ versus those giving $q(a) = 2$.  That is, $N_0(q) - N_2(q) = 0$.  In section~\ref{conclusions} we will call the simple triangle graph a ``net-zero'' graph.  

Now, however, let us define $q' = q + 2x_1^2$.  This corresponds to adding a self-loop at node 1 to the triangle graph.  This goes into the vector $\vec{v}$ and does not change $\mat{B}$ or the decomposition.  At the end, however, we first get that over $\Zed_4$, $\mat{A}' = \mat{P}^\top \mat{A} \mat{P}$ is no longer the same as $\mat{A}$: it moves the $2$ from the upper left corner to the center.  Then we get
\[
\mat{L}\mat{A}' = \begin{bmatrix} 0 & 1 & 1 \\ 1 & 2 & 1 \\ 2 & 0 & 2 \end{bmatrix}, \text{ which times } \begin{bmatrix} 1 & 0 & 1 \\ 0 & 1 & 1 \\ 0 & 0 & 1 \end{bmatrix} = \begin{bmatrix} 0 & 1 & 2 \\ 1 & 2 & 0 \\ 2 & 0 & 0 \end{bmatrix} = \mat{D}' \equiv \begin{bmatrix} 0 & 1 & 0 \\ 1 & 2 & 0 \\ 0 & 0 & 0 \end{bmatrix}.
\]
There is a $2$ on the main diagonal but it is tucked within a $2 \times 2$ block of $\mat{D}$.  Here in fact we have $N_0(q') = 6$ and $N_2(q') = 2$.  

An example of an alternating form $q''$ with $N_2(q'') > N_0(q'')$ is $q'' = 2x_1^2 + 2x_2^2 + 2 x_1 x_2$, which corresponds to a single edge with a self-loop at each end.  Replacing each self-loop by a triangle yields a 6-node simple undirected graph with $N_0 = 28$ and $N_2 = 36$.  We will show that when $N_0 \neq N_2$ in the alternating case, the absolute difference is a simple function of the rank $r$ of $\mat{B}$ over $\Eff_2$.  

\section{Main Results}\label{sec:rank}



Given any $n$-qubit stabilizer circuit $C$ of size $s$ with $h$ nondeterministic gates, we can obtain its associated quadratic form $q_C$ in $O(s)$ time via the process in Section~\ref{stabilizer}.  This form has variables $\vec{x} = x_1,\dots,x_n$ for inputs, $\vec{z} = z_1,\dots,z_n$ for outputs, and $y_1,\dots,y_h$ for nondeterministic variables (\emph{wlog.} all coming from $h$ Hadamard gates).  It may also have the variables called ``$w_j$'' in Section~\ref{stabilizer}, but those are introduced only to equate the final annotation term on a qubit line $j$ with the output variable $z_j$ without thereby forcing a value restriction for nondeterministic variable(s) on that line, and so preserve $2^{h/2}$ as the value of the magnitude divisor $R$ in (\ref{amplitude}).  We can either treat $w_j$ as forced by $z_j$ without changing $R$, or avoid introducing $w_j$ by reducing $R$.  Since the circuits are allowed to have initial $\gatem{X}$ gates on some lines, treating $\vec{x} = (0, \cdots, 0)$ loses no generality.  For any output $\vec{b} = (b_1, \cdots, b_m)$, the quadratic form then becomes
\begin{eqnarray*}
q(\vec{y},\vec{b}) &=& (\sum \alpha_i y_i + \sum 2 y_i y_j) + \sum 2 y_i b_j \mod{4}\\
&=& \vec{y}^\top \mat{A} \vec{y} + \vec{y}^\top 2 \mat{\Delta} \vec{y} \mod{4}
\end{eqnarray*}
in the $\vec{y}$ variables only.  
Here $\mat{\Delta}$ is a diagonal matrix with $\mat{\Delta}_{i,i} = b_j$.  Because we will have $h = \Theta(n)$ for the most part, we still refer to ``$n$'' to denote the number of variables in quadratic forms.  

Finally, we also fix the outputs $b_j$ all to be $0$.  We denote by $\vec{N} = (N_0,N_1,N_2,N_3)$ the resulting distribution of values of $q_C$ over the $2^h$ assignments to $\vec{y}$.  Reviewing the discussion surrounding \Cref{amplitude} in \Cref{stabilizer}, we can abbreviate the numerator of the amplitude by
\begin{equation}\label{a0}
a_0(\vec{N}) = N_0 - N_2 + i(N_1 - N_3).
\end{equation}
We use the $N_c$ and $a_0$ notation generally for linear and quadratic forms $f$ without reference to their coming from a quantum circuit.  Then $a_0$ gives the value of the exponential sum $\sum_x i^{f(x)}$.

Now the present the main lemmas that underlie the main theorems.  Their proofs are in the appendix.

\bigskip
\begin{lemma}\label{lm:basic}
For any linear function $f(x_1, \cdots, x_n) = \sum^n_{i=1} a_i x_i$ over $\Z_4$, $|N_0 - N_2|$ and $|N_1 - N_3|$ are 0 or a power of 2.
\end{lemma}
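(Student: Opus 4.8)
The plan is to analyze the exponential sum $a_0(\vec N) = \sum_{x \in \{0,1\}^n} i^{f(x)}$ directly for a linear form $f(x) = \sum_{i=1}^n a_i x_i$ over $\Zed_4$, since $a_0$ separates into real and imaginary parts exactly as $N_0 - N_2 + i(N_1 - N_3)$, so controlling $a_0$ controls both differences at once. The key structural fact is that a sum over the Boolean cube of a product factors: writing $\zeta = i$, one has
\begin{equation}\label{eq:factor}
a_0(\vec N) = \sum_{x \in \{0,1\}^n} \prod_{i=1}^n \zeta^{a_i x_i} = \prod_{i=1}^n \bigl(1 + \zeta^{a_i}\bigr).
\end{equation}
So the whole question reduces to understanding each factor $1 + \zeta^{a_i}$ as $a_i$ ranges over $\Zed_4 = \{0,1,2,3\}$: these values are $2$, $1+i$, $0$, $1-i$ respectively. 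Thus if any coefficient $a_i = 2$, the product is $0$ and both differences vanish. Otherwise, let $c$ be the number of indices with $a_i = 0$ (contributing a factor $2$ each) and $d$ the number with $a_i \in \{1,3\}$ (contributing $1 \pm i$, each of absolute value $\sqrt 2$). Then $|a_0(\vec N)| = 2^c \cdot 2^{d/2}$.

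The remaining step is to extract from this that $|N_0 - N_2|$ and $|N_1 - N_3|$ are individually $0$ or a power of $2$, not merely that their combination has controlled modulus. Here I would split on the parity of $d$. First, the product $\prod_{a_i \in \{1,3\}}(1 + \zeta^{a_i})$ equals $(1+i)^{d_+}(1-i)^{d_-}$ where $d_+ + d_- = d$; using $(1+i)^2 = 2i$ and $(1-i)^2 = -2i$ and $(1+i)(1-i) = 2$, this simplifies to $2^{\lfloor d/2\rfloor}$ times one of $\{1, 1+i, 1-i, i, -i, \dots\}$ depending on $d_+ - d_-$ mod $4$ — in every case a Gaussian integer of the form $2^{\lfloor d/2 \rfloor}u$ where $u \in \{1, -1, i, -i, 1+i, 1-i, -1+i, -1-i\}$. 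Multiplying by the real factor $2^c$ and separating real and imaginary parts, each of $N_0 - N_2$ and $N_1 - N_3$ is $0$ or $\pm 2^{c + \lfloor d/2 \rfloor}$, a power of $2$ (up to sign). When $d$ is even, one of the two differences is always $0$ and the other has absolute value exactly $2^{c + d/2}$; when $d$ is odd, both have absolute value $2^{c + (d-1)/2}$.

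I do not expect a serious obstacle here — the main point is simply to organize the case analysis on the residue of $d_+ - d_-$ modulo $4$ cleanly, which is where the factor of $i$ versus $1$ gets decided. The mild subtlety worth stating explicitly is that the lemma concerns arguments in $\{0,1\}^n$ while $f$ is valued in $\Zed_4$, so \eqref{eq:factor} must be read with $\zeta^{a_i x_i}$ meaning $i$ raised to an integer representative of $a_i x_i$ — but since $x_i \in \{0,1\}$ this is unambiguous, and the factorization is just the distributive law. The harder quadratic case treated in the subsequent lemma will presumably reduce to this one after a change of basis diagonalizing the associated bilinear form, but for the linear case the factorization in \eqref{eq:factor} does all the work.
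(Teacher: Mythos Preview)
Your proof is correct and takes a genuinely different route from the paper's. The paper argues by recursive reduction on the coefficients: it drops any $a_j = 0$ (doubling each $N_c$), shows directly that any $a_j = 2$ forces both differences to vanish, cancels any $(a_j, a_k) = (1,3)$ pair (again doubling each $N_c$), and is left with the case where all surviving coefficients are $1$ (or all $3$), which it then handles via the roots-of-unity filter $\sum_{r\ge 0}\binom{n}{4r+c} = \tfrac14\sum_j \omega^{-jc}(1+\omega^j)^n$ in a separate auxiliary lemma. Your approach bypasses the recursion entirely by factoring the exponential sum $\sum_x i^{f(x)} = \prod_i (1 + i^{a_i})$ up front, which is cleaner and delivers the explicit values of $N_0 - N_2$ and $N_1 - N_3$ in one stroke. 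The paper's recursive style does mirror the reductions it uses in the subsequent quadratic lemmas, so there is some expository continuity in its choice, but for the linear case your Gauss-sum factorization is the more direct argument. One small point of precision: the unit $u$ you extract actually depends on $d_+ - d_-$ modulo $8$ rather than $4$ (all eight values in your list $\{ \pm 1, \pm i, \pm 1 \pm i\}$ occur), though this does not affect the conclusion since only the parity of $d$ is needed to decide whether each difference is $0$ or $\pm 2^{c+\lfloor d/2\rfloor}$.
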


\bigskip
\begin{lemma}\label{lm:alter}
For any $\Z_4$-valued alternating quadratic form $f : V \rightarrow \Z_4$ of rank $r$, there is a basis of $V$ over which $f$ can be rewritten as
\[
f(\vec{x}) = 2 \sum^{g}_{j = 1} x_{2j - 1} x_{2j} + 2 \sum^{n}_{i = 1} w_i x_i
\]
for some $\vec{w} = (w_1, \cdots, w_n) \in K^n$, and
\[
N_0 - N_2 = 0\ \ \mathrm{or}\ \ (-1)^k 2^{n - g},
\]
where $2g = r$ and $k$ is the number of $(w_{2j-1}, w_{2j})$-pairs in $f$ such that $(w_{2j-1}, w_{2j}) = (1,1)$ for $j \in \{1, \cdots, g\}$.
Also $N_1 = N_3 = 0$.
\end{lemma}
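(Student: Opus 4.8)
The plan is to separate the structural claim from the counting claim. The existence of a basis putting $f$ into the stated form $f(\vec{x}) = 2\sum_{j=1}^{g} x_{2j-1}x_{2j} + 2\sum_{i=1}^{n} w_i x_i$ with $2g = r$ is exactly Corollary~\ref{cor:similar}(a) (equivalently, Theorem~\ref{thm:similar}(a) applied to $\mat{B}$ together with the representation $f(\vec{x}) = \vec{x}^\top\mat{B}\vec{x} + 2\vec{x}^\top\cdot\vec{v}$). So I would simply fix such a basis and work with the normal form from then on. Since a change of basis over $K$ (which is the field $\Eff_2$) is a bijection of $\{0,1\}^n$ onto itself, the tuple $\vec{N} = (N_0,N_1,N_2,N_3)$ is unchanged; it therefore suffices to compute $N_0 - N_2$ and to verify $N_1 = N_3 = 0$ for the normal form directly, and the quantity $k$ is then read off from the $\vec{w}$ of this basis.

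Because every monomial of $f$ carries the coefficient $2$, we have $f(\vec{x}) \in \{0,2\}$ for every $\vec{x}$, which immediately gives $N_1 = N_3 = 0$. For the difference, I would use the $a_0$-style bookkeeping of Section~\ref{sec:rank} to write $N_0 - N_2 = \sum_{\vec{x}\in\{0,1\}^n} i^{f(\vec{x})}$, and observe that since $f = 2\bar{q}$ for the $\Eff_2$-quadratic form $\bar{q}(\vec{x}) = \sum_{j=1}^g x_{2j-1}x_{2j} + \sum_{i=1}^n w_i x_i$, one has $i^{f(\vec{x})} = (-1)^{\bar{q}(\vec{x})}$ with the exponent taken modulo $2$. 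Thus $N_0 - N_2 = \sum_{\vec{x}\in\{0,1\}^n}(-1)^{\bar{q}(\vec{x})}$, a Boolean exponential sum over a form whose variables split into $g$ disjoint blocks $\{x_{2j-1},x_{2j}\}$ (each entering only the terms $x_{2j-1}x_{2j} + w_{2j-1}x_{2j-1} + w_{2j}x_{2j}$) together with $n - 2g$ ``tail'' variables $x_i$ with $i > 2g$, each entering only the linear term $w_i x_i$.

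The sum therefore factors as $\bigl(\prod_{j=1}^{g} T_j\bigr)\cdot\prod_{i=2g+1}^{n}\bigl(\sum_{x_i\in\{0,1\}}(-1)^{w_i x_i}\bigr)$, where $T_j = \sum_{u,v\in\{0,1\}}(-1)^{uv + w_{2j-1}u + w_{2j}v}$. A four-line case check on $(w_{2j-1},w_{2j})$ gives $T_j = 2$ unless $(w_{2j-1},w_{2j}) = (1,1)$, in which case $T_j = -2$; and $\sum_{x_i}(-1)^{w_i x_i}$ equals $2$ if $w_i = 0$ and $0$ if $w_i = 1$ (the same computation underlies Lemma~\ref{lm:basic}). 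Hence if some $w_i = 1$ with $i > 2g$ the product vanishes and $N_0 - N_2 = 0$; otherwise the tail contributes $2^{\,n-2g}$, each block contributes $\pm 2$ with sign negative exactly for the $(1,1)$ blocks, and we obtain $N_0 - N_2 = (-1)^k 2^{g}\cdot 2^{\,n-2g} = (-1)^k 2^{\,n-g}$ since $g = r/2$.

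I do not expect a serious obstacle; the delicate points are purely bookkeeping: (i) justifying $i^{2m} = (-1)^{m \bmod 2}$ so that the $\Zed_4$-valued exponential sum collapses to a genuine $\Eff_2$-sum; (ii) making sure the block variables and the tail variables are genuinely disjoint, so the factorization is legitimate; and (iii) matching the exponent, i.e.\ that $2^{\,n-2g}\cdot 2^{g} = 2^{\,n-g}$ and that the dichotomy ``$0$ versus $(-1)^k 2^{\,n-g}$'' corresponds exactly to whether $\vec{w}$ has support outside the first $r$ coordinates. The only mildly subtle conceptual point is that $k$ is defined relative to the chosen normal-form basis; this is harmless since the lemma only asserts the existence of such a basis, and as a consistency check, whenever $N_0 \neq N_2$ the sign $(-1)^k$ is forced to equal $\operatorname{sign}(N_0-N_2)$ and is thus a genuine invariant of $f$.
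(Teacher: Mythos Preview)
Your proposal is correct and reaches the same conclusion as the paper, but by a genuinely different and more streamlined route.  The paper's proof peels off one block $\{x_{2j-1},x_{2j}\}$ at a time: for each of the four possibilities for $(w_{2j-1},w_{2j})$ it verifies by explicit case analysis that $N_c - N_{c+2} = \pm 2\,(|S'_c| - |S'_{c+2}|)$, where the $S'$ quantities refer to the form with that block deleted, and then recurses $g$ times before invoking the linear-form Lemma~\ref{lm:basic} on the tail $2\sum_{i>r} w_i x_i$.  You instead write $N_0 - N_2 = \sum_{\vec{x}}(-1)^{\bar q(\vec{x})}$ once and for all and observe that, because the normal form makes the variable blocks disjoint, this Boolean exponential sum factors completely; the four-line evaluation of each $T_j$ and of each tail factor then delivers the answer in one stroke.

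What each approach buys: the paper's recursion makes the ``peel one pair'' mechanism explicit and ties directly into its Lemma~\ref{lm:basic} framework, which it reuses for the non-alternating case.  Your Gauss-sum factorization is shorter, avoids the intermediate $Q^{ab}_i$/$S^{ab}_i$ bookkeeping, and makes the dichotomy transparent (a single zero tail factor kills the product; otherwise each block contributes exactly $\pm 2$).  Your cautionary points (i)--(iii) are well placed; none of them hides a gap, and your remark that $(-1)^k$ is forced to equal $\operatorname{sign}(N_0-N_2)$ whenever the latter is nonzero is a nice sanity check not present in the paper.
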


\bigskip
\begin{lemma}\label{lm:nalter}
For any $\Z_4$-valued non-alternating quadratic form $f : V \rightarrow \Z_4$ of rank $r$, there is a basis of $V$ over which $f$ can be rewritten as
\[
f(\vec{x}) = \sum^{r}_{j = 1} x_j + 2 \sum^{n}_{i = 1} w_i x_i = \sum^r_{j=1} (1 + 2w_j) x_j + 2 \sum^n_{i = r+1} w_i x_i
\]
for some $\vec{w} = (w_1, \cdots, w_n)$. Define $c$ to be the number of $w_i$'s such that $w_i = 0$ with $i \in \{r+1, \cdots, n\}$ and $d$ to be the number of pairs such that $(1 + 2w_j, 1 + 2w_{j'}) = (1,3)$ with $j, j' \in \{1, \cdots, r\}$. Also let $m = n - c - 2d$ and rewrite $m = 4a + b$, and define $\eta$ such that $\eta = 0$ if the rest $m$-many coefficients are all $1$'s but $\eta = 1$ if they are all $3$'s. Then the differences $N_0 - N_2$ and $N_1 - N_3$ take one of the following values:
\begin{itemize}
  \item if $b = 0$, then $N_0 - N_2 = (-1)^a 2^{(n+c) / 2}$, $N_1 - N_3 = 0$;
  \item if $b = 1$, then $N_0 - N_2 = (-1)^a  2^{(n+c - 1) / 2}$, $N_1 - N_3 = (-1)^{a + \eta} 2^{(n+c - 1) / 2}$;
  \item if $b = 2$, then $N_0 - N_2 = 0$, $N_1 - N_3 = (-1)^{a +\eta} 2^{(n+c) / 2}$;
  \item if $b = 3$, then $N_0 - N_2 = (-1)^{a+1} 2^{(n+c - 1) / 2}$, $N_0 - N_2 = (-1)^{a+\eta} 2^{(n+c - 1) / 2}$.
\end{itemize}
\end{lemma}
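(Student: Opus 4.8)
The plan is to push $f$ into the normal form already in hand and then evaluate the exponential sum $a_0(\vec N) = \sum_{\vec x \in \{0,1\}^n} i^{f(\vec x)}$ directly, exploiting that over that basis $f$ has no cross terms. By Corollary~\ref{cor:similar}(b) (equivalently Lemma~\ref{PLDLTPT}) we may assume $f(\vec x) = \sum_{j=1}^r x_j + 2\sum_{i=1}^n w_i x_i$, and since $x_j^2 = x_j$ on $\{0,1\}$ this is a \emph{linear} function over $\Zed_4$ with coefficient $a_j = 1+2w_j \in \{1,3\}$ for $j \le r$ and $a_i = 2w_i \in \{0,2\}$ for $i > r$; thus the present lemma is the quantitative sharpening of Lemma~\ref{lm:basic} for this particular form, and everything reduces to computing one product.

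Because there are no cross terms, $a_0(\vec N) = \prod_{j=1}^n (1 + i^{a_j})$, and the possible factors are $1 + i^0 = 2$, $1 + i^1 = 1+i = \sqrt 2\,e^{\pi i /4}$, $1 + i^2 = 0$, and $1 + i^3 = 1 - i = \sqrt 2\,e^{-\pi i/4}$. I would organise the product as follows. Among the tail indices $i > r$: those with $w_i = 0$ (there are $c$ of them) each contribute $2$, while any with $w_i = 1$ contributes $0$ and kills the whole product; so one first argues the tail of $\vec w$ may be taken zero (a rank-preserving basis change merging such a radical vector into a unit vector in the non-alternating case), giving $c = n-r$. Among the first $r$ coefficients, pair each $1$ with a $3$; let $d$ be the number of such pairs, each contributing $(1+i)(1-i) = 2$, and observe that the $m := r - 2d = n - c - 2d$ unpaired coefficients are then forced to be all equal — all $1$ (set $\eta = 0$) or all $3$ (set $\eta = 1$) — contributing $(1 \pm i)^m = 2^{m/2}e^{\pm \pi i m/4}$ with sign $(-1)^\eta$. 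Multiplying, $a_0 = 2^{c+d}\cdot 2^{m/2}\,e^{\pm \pi i m/4}$.

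It remains to unwind this into real and imaginary parts. Writing $m = 4a + b$ with $b \in \{0,1,2,3\}$ gives $e^{\pm \pi i m/4} = (-1)^a e^{\pm \pi i b/4}$, and the identity $4a+b = n-c-2d$ gives $c + d + 2a = (n+c-b)/2$, whence $a_0 = (-1)^a\, 2^{(n+c-b)/2}\bigl(2^{b/2} e^{\pm \pi i b/4}\bigr)$. Evaluating the parenthesised factor in the four cases — it equals $1$, $1\pm i$, $\pm 2i$, and $-2\pm 2i$ for $b = 0,1,2,3$ respectively — and reading off $N_0 - N_2$ and $N_1 - N_3$ as the real and imaginary parts of $a_0$ via (\ref{a0}) yields exactly the four displayed outcomes, with the $\pm$ absorbed into the exponent $a+\eta$ (and the repeated ``$N_0-N_2$'' in the $b=3$ line being a typo for $N_1-N_3$).

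The only genuine obstacle is keeping the final parity/sign accounting straight: the interplay of the $(-1)^a$ coming from reducing the total rotation modulo $4$ against the $(-1)^\eta$ coming from the orientation of the unmatched block, together with the clean justifications that after matching off $\min(\#\{1\text{'s}\},\#\{3\text{'s}\})$ of the first-block coefficients the remainder is monochromatic, and that the tail of $\vec w$ can indeed be cleared in the non-alternating case (so that $c = n-r$ and no stray factor $1+i^2=0$ survives). None of this is deep, but it is the step most exposed to off-by-one and sign slips, so I would carry it out by manipulating $a_0$ in the polar form above and substituting, rather than tracking integer parities by hand.
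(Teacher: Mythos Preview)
Your approach is correct and in fact cleaner than the paper's own proof: the paper reduces to the case analysis of Lemma~\ref{lm:basic} (which in turn invokes a binomial-sum identity, Lemma~A.1, to handle the all-$1$'s and all-$3$'s cases), whereas your single product formula $a_0(\vec N)=\prod_j(1+i^{a_j})$ dispatches everything at once. The polar-form bookkeeping you outline is exactly right and recovers the four displayed cases.

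There is, however, one genuine slip. You write that ``the tail of $\vec w$ can indeed be cleared in the non-alternating case'' by a rank-preserving basis change. That is false: if some $w_i=1$ with $i>r$, then $N_0-N_2=N_1-N_3=0$ (as you yourself observe via the factor $1+i^2=0$), and since $(N_0,N_1,N_2,N_3)$ is basis-invariant, no change of variables can produce a nonzero answer. The concrete example $f(x_1,x_2)=x_1+2x_2$ has rank~$1$, tail $w_2=1$, and $a_0=0$ in every basis. The fix is not to clear the tail but simply to split cases: either some tail entry is~$1$, in which case both differences vanish (this is the ``2nd case'' the paper's proof disposes of in one line, and is not covered by the four displayed formulas in the lemma statement, which is admittedly imprecise on this point), or else every tail entry is~$0$, whence $c=n-r$ automatically and your product computation goes through verbatim. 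Once you make that correction, your argument is complete and shorter than the paper's.
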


The connection between rank and solution counting is expressed by our main theorem about quadratic forms after the normalization process in \Cref{lm:basic,lm:alter,lm:nalter} is applied:

\bigskip
\begin{theorem}\label{thm:reduction-to-rank}
Given any normalized classical quadratic form $f$ in $n$ variables, we can compute $N_0,N_1,N_2,N_3$ and hence $a_0(\vec{N})$ in time $O(n)$.  Furthermore, $|a_0(\vec{N})|^2$ is either $0$ or $2^{2n-r}$ where $r$ is the rank of $f$.
\end{theorem}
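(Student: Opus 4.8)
The plan is to read everything off the normal forms already supplied by Lemmas~\ref{lm:basic}, \ref{lm:alter}, and~\ref{lm:nalter}. Since $f$ is given \emph{normalized}, it is presented in one of the three listed shapes, so a single $O(n)$-time scan over its list of coefficients yields every auxiliary quantity used in those lemmas: the rank $r$ (equivalently $g = r/2$ in the alternating case), the count $k$ of pairs with $(w_{2j-1},w_{2j}) = (1,1)$, the counts $c$ and $d$, the residue decomposition $m = 4a+b$, and the bit $\eta$. Each lemma then returns $N_0 - N_2$ and $N_1 - N_3$ in closed form, so it remains to (i) recover the individual $N_c$ and (ii) evaluate $|a_0(\vec N)|^2 = (N_0 - N_2)^2 + (N_1 - N_3)^2$.

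For (i), observe that $f(\vec x)\bmod 2$ is an $\Eff_2$-linear form: the coefficient-$2$ terms vanish mod $2$, and since $\mat B$ is symmetric we have $\vec x^\top \mat B \vec x \equiv \sum_i \mat B_{ii} x_i \pmod 2$. In the alternating case this linear form is identically $0$, so $f$ is always even, $N_1 = N_3 = 0$, and $N_0 + N_2 = 2^n$; in the non-alternating case it is a nonzero linear form, hence balanced, so $N_0 + N_2 = N_1 + N_3 = 2^{n-1}$. Pairing each such sum with the matching difference from Lemma~\ref{lm:alter} or~\ref{lm:nalter} and solving the resulting $2\times 2$ systems gives $N_0,N_1,N_2,N_3$, and then $a_0(\vec N) = (N_0 - N_2) + i(N_1 - N_3)$; everything here is $O(n)$.

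For (ii), split on the same two cases. In the alternating case Lemma~\ref{lm:alter} gives $N_1 = N_3 = 0$ and $N_0 - N_2 \in \{0,\ (-1)^k 2^{n-g}\}$, so $|a_0|^2 \in \{0,\ 2^{2(n-g)}\} = \{0,\ 2^{2n-r}\}$ since $r = 2g$. In the non-alternating case, substituting each of the four sub-cases $b = 0,1,2,3$ of Lemma~\ref{lm:nalter} into $(N_0 - N_2)^2 + (N_1 - N_3)^2$ yields the single value $2^{n+c}$ in every case. To finish I relate $n + c$ to $2n - r$: the $n - r$ variables indexed $r+1,\dots,n$ each carry coefficient $0$ or $2$; if all carry coefficient $0$ then $c = n - r$ and $|a_0|^2 = 2^{2n-r}$, while if one carries coefficient $2$ then flipping that variable is a fixed-point-free involution of $\{0,1\}^n$ that interchanges $\{f=0\}$ with $\{f=2\}$ and $\{f=1\}$ with $\{f=3\}$, forcing $N_0 = N_2$, $N_1 = N_3$, and $a_0 = 0$. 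The degenerate purely-linear case of Lemma~\ref{lm:basic} is the non-alternating case with $r$ equal to the number of nonzero $1$/$3$ coefficients, handled identically (or directly: $a_0 = \prod_{a_i \neq 0}(1 + i^{a_i})$, each factor of modulus $\sqrt{2}$ for $a_i \in \{1,3\}$ and $0$ for $a_i = 2$).

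I expect the only delicate point to be the bookkeeping in the non-alternating case, namely confirming that the power of $2$ is uniformly $2n - r$ rather than $c$-dependent, which is exactly what the ``a leftover coefficient-$2$ variable forces $a_0 = 0$'' observation resolves. A slicker route that avoids the four-case substitution entirely is to factor $\sum_{\vec x} i^{f(\vec x)}$ over the disjoint variable blocks of the normal form: a $2\times 2$ alternating block contributes $\pm 2$, a lone linear variable of coefficient $1$ or $3$ contributes $1 \pm i$, a lone coefficient-$2$ variable contributes $0$, and a free variable contributes $2$; multiplying gives $a_0$, whose squared modulus is then either $0$ or $2^{2(n-r)}\cdot(\sqrt{2})^{2r} = 2^{2n-r}$, with Lemmas~\ref{lm:alter} and~\ref{lm:nalter} filling in the signs. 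I would likely present this factorization as the backbone and cite the lemmas for the exact values.
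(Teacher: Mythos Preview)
Your proof is correct and follows essentially the same route as the paper: split into alternating and non-alternating cases, invoke Lemmas~\ref{lm:alter} and~\ref{lm:nalter}, and read off $|a_0|^2 = 0$ or $2^{2n-r}$ (the paper likewise notes $c = n-r$ in the nonzero non-alternating case). You actually supply one step the paper glosses over---recovering the individual $N_c$ from the differences via the $\bmod\,2$ parity argument giving $N_0+N_2$ and $N_1+N_3$---and your closing factorization of $\sum_{\vec x} i^{f(\vec x)}$ over independent variable blocks is a clean alternative that the paper does not use but which yields the $2^{2n-r}$ exponent uniformly without the four-way $b$-case split.
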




\bigskip
\noindent
This means that the bulk of the computing time for the whole process goes into the decomposition in Lemma~\ref{PLDLTPT}, which is used to compute the normal forms asserted in Corollary~\ref{cor:similar}.  After that, the up-to-$n^2$ denseness of the original form does not matter and the computation needs only $O(n)$ time.

\bigskip
\begin{proof}


We show this separately for the alternating and non-alternating cases. 
By \Cref{cor:similar}, a normalized alternating quadratic form is of the form
\[
f(\vec{x}) =  2 \sum^{r/2}_{j = 1} x_{2j - 1} x_{2j} + 2 \sum^{n}_{i = 1} w_i x_i \pmod{4},
\]
for some $\vec{w} = (w_1, \cdots, w_n) \in K^n$. It is easy to see that $N_1 - N_3$ is always zero since there is no assignment to $\vec{x} = (x_1, \cdots, x_n)$ that would give $f(\vec{x}) = 1$ or 3. \Cref{lm:alter} gives out
\[
N_0 - N_2 = 0\ \ \mathrm{or}\ \ (-1)^k 2^{n-g},
\]
which can be done in time $O(n)$.  Hence if this is non-zero, then we have
\[
a_0(\vec{N}) = (-1)^k 2^{n - g},
\]
and
\[
|a_0(\vec{N})|^2 = 2^{2n - r}.
\]

Similarly, a normalized non-alternating quadratic form is written as
\[
f(\vec{x}) = \sum^{r}_{j = 1} x_j + 2 \sum^{n}_{i = 1} w_i x_i = \sum^r_{j=1} (1 + 2w_j) x_j + 2 \sum^n_{i = r+1} w_i x_i \pmod{4},
\]
for some $\vec{w} = (w_1, \cdots, w_n) \in \{0,1\}^n$. Things become trivial if $2w_i = 2$ for some $i \in \{r+1, \cdots, n\}$. This makes $N_0 - N_2 = N_1 - N_3 = 0$.

Now assume $N_0 - N_2$ and $N_1 - N_3$ are not both zero at the same time. Then we can derive from \Cref{lm:nalter} that
\[
a_0(\vec{N}) = N_0 - N_2 + i (N_1 - N_3)
\]
takes one of the following values:
\begin{itemize}
  \item if $b = 0$, then $a_0(\vec{N}) = (-1)^a 2^{(n+c)/2}$;
  \item if $b = 1$, then $a_0(\vec{N}) = (-1)^a  2^{(n + c - 1) / 2} + i (-1)^{a + \eta} 2^{(n + c - 1) / 2}$;
  \item if $b = 2$, then $a_0(\vec{N}) = i (-1)^{a +\eta} 2^{(n+c)/2}$;;
  \item if $b = 3$, then $a_0(\vec{N}) = (-1)^{a+1} 2^{(n + c - 1) / 2} + i (-1)^{a+\eta} 2^{(n + c - 1) / 2}$,
\end{itemize}

where $a, b$ and $c$ are as defined in \Cref{lm:nalter}. Note that $c = n - r$. Together we have
%
%
\[
|a_0(\vec{N})|^2 = 2^{2n-r},
\]
and again this can be computed in $O(n)$ time.
\end{proof}

%
%
Now we rejoin the process of evaluating the stabilizer circuit $C$.  It will normalize $q_C$ to $q'$ in one of the two forms in \Cref{cor:similar}, which will give a matrix $\mat{D}'$ such that $q'(\vec{y}) = \vec{y}^\top \mat{D}' \vec{y}$. With such $\mat{D}'$, the acceptance probability can be derived directly by \Cref{thm:reduction-to-rank}. We repeat the statement of our main theorem from section~\ref{introduction} but split (b) into two pieces, proving part (b1) here and part (b2) in the next section.

\bigskip
\begin{theorem}[Main Theorem]\label{main}
\begin{itemize}
\item[(a)]
Strong simulation of $n$-qubit stabilizer circuits $C$ with $h$ nondeterministic single-qubit gates on standard-basis inputs (amplitude as well as the probability) is in time $O(s + n + h^{\omega})$ where $2 \leq \omega < 2.3729$.
\item[(b1)]
Computing $n \times n$ matrix rank over $\Eff_2$ reduces in linear time to computing one instance of the strong simulation probability $|\triplep{0^n}{C}{0^n}|^2$.
\item[(b2)]
Computing the strong simulation probability $p = |\triplep{0^n}{C}{0^n}|^2$ reduces in linear time to computing one instance of $n \times n$ matrix rank over $\Eff_2$ on the promise that $p > 0$.
\end{itemize}
\end{theorem}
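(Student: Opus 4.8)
The three parts are largely independent so I would treat them in turn, leaning on the machinery already set up.

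For part (a), I would trace the pipeline: given $C$ of size $s$ with $h$ nondeterministic gates, build $q_C$ by the real-time procedure of Section~\ref{stabilizer}; this costs $O(s)$. Substituting the standard-basis input and output $0^n$ eliminates the $x_i$ and $z_j$ variables, leaving a classical quadratic form over $\Zed_4$ in the $h$ remaining $y$-variables (plus possibly $w_j$ variables, which are handled as in the text by either forcing them from the $z_j$ or reducing $R$). Extract its $\Zed_4$ matrix $\mat{A}$, form $\mat{B}$ over $\Eff_2$ via (\ref{BfromA}); this is $O(h^2)$. Apply Lemma~\ref{PLDLTPT}(a) to get $\mat{B}' = \mat{P}^\top\mat{B}\mat{P} = \mat{L}\mat{D}\mat{L}^\top$ over $\Eff_2$ in $O(h^\omega)$ time by \cite{DuPe18}; invert $\mat{L}$ over $\Eff_2$ (also $O(h^\omega)$) and compute $\mat{D}' = \mat{L}^{-1}\mat{P}^\top\mat{A}\mat{P}(\mat{L}^{-1})^\top$ over $\Zed_4$, again $O(h^\omega)$. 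By Lemma~\ref{PLDLTPT}(d), the diagonal of $\mat{D}'$ is the diagonal of $\mat{D}$ plus $2w$, so we read off $\mat{D}$ (hence the rank $r$, and whether the form is alternating or not) and the vector $w$ in $O(h)$ time. This puts $q_C$ into the normalized form of Corollary~\ref{cor:similar}, so Theorem~\ref{thm:reduction-to-rank} gives $a_0(\vec N)$ — and hence both the amplitude $\triplep{0^n}{C}{0^n} = a_0(\vec N)/R$ and the probability — in a further $O(h)$ time. Total: $O(s + n + h^\omega)$, where the lone $n$ absorbs the cost of reading the circuit description over $n$ qubit lines. (One subtlety is checking that the block structure of $\mat{D}$ in the alternating case is recovered correctly even though $\mat{D}'$ is no longer block-diagonal over $\Zed_4$; but Lemma~\ref{PLDLTPT}(b),(d) already say $\mat{D}$ is determined up to permutation by rank and structure, and the diagonal of $\mat{D}'$ is all we need downstream.)

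For part (b1), I would run the four-step reduction from the introduction. Given $\mat{A}_0 \in \Eff_2^{n\times n}$, form $\mat{A} = \begin{bmatrix}0 & \mat{A}_0\\ \mat{A}_0^\top & 0\end{bmatrix}$, which is symmetric with zero diagonal, so its graph $G$ on $2n$ nodes is simple, bipartite, and has adjacency matrix $\mat{A}$; build the graph-state circuit $C_{\mat A}$ (one $\gatem H$ per line, a $\gatem{CZ}$ per edge, a closing $\gatem H$ per line) — all of this is $O(n^2)$. The key facts are: first, $\triplep{0^{2n}}{C_{\mat A}}{0^{2n}}$ is, up to the normalization $R$, the exponential sum $\sum_x i^{q'(x)}$ for the alternating classical quadratic form $q'$ whose $\Eff_2$-matrix is $\mat A$ (this comes from (\ref{amplitude}) applied to the graph-state circuit — after the opening Hadamards the $\gatem{CZ}$ edges contribute the $2y_iy_j$ terms, and closing Hadamards with $0$ outputs contribute, via the forced-$w$ mechanism, only a $2$-fold global factor and no surviving linear part, so $w = 0$); second, $\mathrm{rank}(\mat A) = 2\,\mathrm{rank}(\mat A_0)$ because of the block-antidiagonal shape; third, since $w = 0$ the value $k$ in Lemma~\ref{lm:alter} is $0$, so there is no global cancellation and $N_0 - N_2 = \pm 2^{2n - g}$ with $g = \mathrm{rank}(\mat A)/2 = \mathrm{rank}(\mat A_0)$, hence $p = |a_0|^2/R^2 = 2^{2(2n) - r}/2^{2\cdot 2n} = 2^{-r} = 2^{-2\,\mathrm{rank}(\mat A_0)}$ is strictly positive, and $\mathrm{rank}(\mat A_0) = \tfrac12\log_2(1/p)$. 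Everything except the single call to compute $p$ is $O(n^2)$, giving a linear-time (in input size $n^2$) reduction. The one thing to verify carefully is the normalization constant $R$ for this specific circuit shape — with $2n$ Hadamards at the start and $2n$ at the end, one needs $R = 2^{2n}$ after accounting for the forced closing Hadamards — but this is a bookkeeping check against Theorem~\ref{probability} and the $w_j$ discussion, not a conceptual hurdle.

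For part (b2), I would observe that part (a) reduces computing $p$ to: (i) everything up to and including the $\Eff_2$ decomposition of $\mat B$, which is where the matrix-multiplication-type cost lives and which we are now allowed to charge to a matrix-rank oracle — and here the promise $p>0$ is exactly what lets us bypass the expensive steps, since by Theorem~\ref{thm:reduction-to-rank} $p > 0$ forces $p = 2^{2n-r}/R^2$, a function of the rank $r$ alone, so it suffices to compute $r = \mathrm{rank}(\mat B)$ over $\Eff_2$; and (ii) an $O(n)$ postprocessing. Concretely: build $q_C$ and substitute $0^n$ ($O(s)$), form $\mat B$ over $\Eff_2$ ($O(n^2)$), call the rank oracle once to get $r$, and output $p = 2^{2n-r}/R^2$. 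The correctness is precisely the ``furthermore'' clause of Theorem~\ref{thm:reduction-to-rank}: under the promise, $|a_0(\vec N)|^2$ is not the degenerate $0$ but $2^{2n-r}$. So the reduction is linear-time plus one rank call. The main obstacle — really the only delicate point across all three parts — is making the promise do exactly the right work: without $p>0$ we would also need the vector $w$ (equivalently, whether $\mat D'$ over $\Zed_4$ has an off-block diagonal $2$, the ``net-zero'' condition), and recovering $w$ seems to need the $\mat L^{-1}$-conjugation over $\Zed_4$, i.e. matrix multiplication, not just rank; that gap is precisely what part (b2) sidesteps by hypothesis and is addressed in the next section for the general (unpromised) case.
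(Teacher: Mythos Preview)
Your treatment of part (a) is correct and matches the paper's proof closely.

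For part (b1), your reduction and the final formula are right, but your argument that $p > 0$ has a genuine gap. You assert that because the graph-state form $q'(y) = \sum_{(i,j)\in E} 2y_iy_j$ has ``no surviving linear part,'' the vector $w$ in Lemma~\ref{lm:alter} is zero. That conflates the original diagonal vector $v$ of equation~(\ref{qformB}) with the vector $w$ appearing \emph{after} the change of basis in Corollary~\ref{cor:similar}. These are not the same: the $\Zed_4$-conjugation $\mat{L}^{-1}\mat{P}^\top\mat{A}\mat{P}(\mat{L}^{-1})^\top$ can create nonzero diagonal entries (hence nonzero $w$) even when $\mat{A}$ has zero diagonal. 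The triangle example worked out after Lemma~\ref{PLDLTPT} shows exactly this phenomenon---$v=0$ but $w=(0,0,1)$, causing $N_0 - N_2 = 0$. So ``$v=0 \Rightarrow w=0$'' is false in general, and you have not shown it holds for bipartite graphs. The paper instead proves $p>0$ by a short direct combinatorial argument: fix any assignment $a$ to the variables on one side of the bipartition; this reduces $q'$ to $2\ell(x')$, a linear form in the other side's variables. If $\ell \equiv 0$ every extension contributes $0$; otherwise extensions pair off into equal numbers of $0$'s and $2$'s. Hence every $a$ contributes nonnegatively to $N_0 - N_2$, and $a=0$ contributes strictly positively. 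This bypasses $w$ entirely.

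For part (b2), your argument is correct and is actually more direct than the paper's. You invoke the ``furthermore'' clause of Theorem~\ref{thm:reduction-to-rank}---that $|a_0(\vec N)|^2$ is either $0$ or $2^{2m-r}$ in \emph{both} the alternating and non-alternating cases---so under the promise $p>0$ the probability is read off from $\mathrm{rank}(\mat{B})$ alone. The paper instead passes through the self-dual form $q_C + q_C^*$ of Theorem~\ref{probability}, then uses Lemma~\ref{nonalt-to-alt} to reduce to an alternating form $f'$ before applying the alternating case of Theorem~\ref{thm:reduction-to-rank}. Your route is shorter for (b2) as stated; the paper's route has the side benefit of landing in the alternating (simple-graph) setting, which feeds the ``net-zero graph'' discussion in Section~\ref{conclusions}, but that is not needed for the theorem itself.
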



\begin{proof}[Proof of (a) and (b1)]
(a)~Let $C$ be given, take $\mat{A}$ to be the matrix over $\Zed_4$ of its classical quadratic form $q_C$, and take $\mat{B}$ be the associated symmetric matrix over $\Eff_2$.  By Lemma~\ref{PLDLTPT} and the algorithm of \cite{DuPe18} there is a decomposition $\mat{B} = \mat{P}\mat{L}\mat{D}\mat{L}^\top \mat{P}^\top$ over $\Eff_2$ that is computable in $O(n^\omega)$ time
such that $\mat{D}$ is diagonal (in the non-alternating case) or $2 \times 2$ block-diagonal (in the alternating case) and equals the matrix $\mat{D}$ in Theorem~\ref{thm:similar}.  This also computes the rank $r$ of $\mat{B}$. Then compute $\mat{D}' = \mat{L}^{-1} \mat{P}^\top \mat{A} \mat{P}  (\mat{L}^{-1})^\top$ over $\Zed_4$ which again takes $O(n^\omega)$ time. By Lemma~\ref{PLDLTPT}(d), $\mat{D}'$ and $\mat{D}$ yield the vector $\vec{w}$ in the normal form of \Cref{cor:similar} for $q_C$. 
%
%
Then Theorem~\ref{thm:reduction-to-rank} yields not only the probability $p = |\triplep{0^n}{C}{0^n}|^2$ but also the entire distribution of phases as powers of $i$, and hence yield the amplitude $\triplep{0^n}{C}{0^n}$.

(b1)~To compute the rank $r$ of an $n \times n$ matrix over $\Eff_2$, make an equivalent symmetric matrix $\mat{A}$ by the block-transpose trick in the introduction.  Not only is $\mat{A}$ alternating but it is the adjacency matrix of a bipartite graph $G = (V,V',E)$.  To see that the corresponding graph state circuit $C$ gives $p = |\triplep{0^n}{C}{0^n}|^2 > 0$, consider any assignment $a$ to the variables in $V$.  This reduces $q_C$ to a linear form $2\ell(x')$ of the variables $x'$ corresponding to nodes of the other partition.  If $\ell(x')$ vanishes modulo 2, then all extensions of $a$ to $a'$ on $x'$ contribute 0 modulo 4.  Otherwise, $2\ell(x')$ has a nonzero term $2x'_i$ for some $i$.  Assignments $a'$ to $x'$ pair off with canceling contributions 0 and 2 according to the value $a'_i$ of $x'_i$.  Thus there are never more values of $2$ than $0$.  Finally, the all-zero assignment to $x$ makes $\ell(x')$ vanish, so the difference between the numbers of 0 values and 2 values is positive.  Thus the normal form for $q_C$ with input and output $0^n$ cannot have global cancellation, so $r$ is a simple function of $p$.
\end{proof}

\bigskip
To get the converse simulation in (b2) we must consider the non-alternating case, which arises when the stabilizer circuit $C$ has an odd number of $\gatem{S}$ or $\gatem{S^*}$ gates on some qubit line(s), and allow for the possibility $\triplep{0^n}{C}{0^n} = 0$.  The algorithm for amplitude in the non-alternating case needs knowledge of individual entries in the normal form over $\Zed_4$ besides the rank $r$ of $q_C$.  We show that for the probability computation $p = |\triplep{0^n}{C}{0^n}|^2$ one can reduce to the alternating case---that is, produce in $O(s + h^2)$ time an alternating quadratic form $q'_C$ of rank $2r$ such that $p$ is a simple function of $r$ provided $p > 0$.  This development leads us to a concept of ``self-dual'' quadratic forms in order to complete the proof of (b2) in the next section.

\section{Self-Dual Forms and Probability Reduction to Rank}\label{sec:self-dual}

To take stock, we have shown that strong simulation of stabilizer circuits (on input $0^n$) is in matrix-multiplication time, which is currently the best-known time for computing matrix rank (over $\Eff_2$).  We have also reduced the rank computation to the simulation of the restricted class of circuits that come from bipartite graphs.  The issue now becomes whether there is a more-general equivalence of strong simulation of stabilizer circuits $C$ to computing the rank.  We show a \emph{yes} answer provided the decision problem of telling whether the probability $p = |\triplep{0^n}{C}{0^n}|^2$ is nonzero, for $C$ having no phase gates, is in $O(n^2)$ time.

To do so, we begin by abstracting the quadratic forms $q + q^*$ in Theorem~\ref{probability} into a notion of ``self-duality.''  We do not need to legislate the separate presence of input variables $x_i$ and output variables $z_j$.  Our definition is at the level of the long literature of quadratic forms but we have not found a reference for it.

\bigskip
\begin{definition}\label{self-dual}
Let $\pi$ be a permutation of the variable set $V$ such that $\pi^2 = 1$ (that is, an involution).  Let $f(v)$ be a quadratic form in reduced form as a sum of terms over $\Zed_K$.  Then $f$ is \emph{self-dual} provided:
\begin{itemize}
\item
Whenever $f$ has the term $a v_i^2$, $v_i$ is not fixed by $\pi$ and $f$ also has the term $(K - a)\pi(v_i)^2$.
\item
Whenever $f$ has the term $b v_i v_j$, at least one of the variables is not fixed by $\Pi$ and $f$ also has the term $(K - b)\pi(v_i)\pi(v_j)$.
\end{itemize}
\end{definition}

\bigskip
\noindent
If $f$ had $av_i^2$ with $v_i$ fixed by $\pi$ then $f$ would also have $(K-a)v_i^2$ so the terms would cancel, and similarly if $b v_i v_j$ were fixed by $\pi$.  The polynomials $q(v) + q^*(v')$ in Theorem~\ref{probsim} meet this definition even before the substitution of values for $x_i$ and $z_j$ variables since those are considered fixed by the permutation sending $v$ to $v'$.  The definition allows $b$ to be odd in $b v_i v_j$ but classical forms with $K = 4$ only allow $b = 2$.  They can have terms of the form $v_i^2 + 3(v'_i)^2$ or $3v_i^2 + (v'_i)^2$, but our first results will eliminate them.

We can fix any partition of the set $V$ of non-fixed variables into $U$ and $U'$, writing $u' = \pi(u)$ for $u \in U$.  Also, given a Boolean assignment $a$, let $a' = \pi(a)$.

\bigskip
\begin{lemma}\label{self-dual-basics}
Given any self-dual quadratic form $f$ over $\Zed_4$:
\begin{itemize}
\item[(a)]
The form obtained by identifying any $u,u'$ pair is self-dual.
\item[(b)]
For all Boolean assignments $a$, $f(a) + f(a') = 0 \pmod{4}$.  Hence it follows that the numbers of $1$ values and $3$ values are equal, so their contributions cancel.
\item[(c)]
Values of $1$ and $3$ exist only when $f$ has terms $3u + u'$ or $u + 3u'$ where $u$ is in $U$, and when they happen, exactly half the values are $1$ and $3$.  Hence it suffices to count the number of zeroes of f.
\end{itemize}
\end{lemma}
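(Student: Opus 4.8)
The plan is to replace Definition~\ref{self-dual} by an equivalent, coordinate-free condition on the \emph{function} $f$, from which parts (a) and (b) fall out at once, and then to treat part (c) by reducing $f$ modulo~$2$. I would prove the parts in the order (b), (a), (c).

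The first step is to show that, for $f$ in reduced form, self-duality with respect to $\pi$ is \emph{equivalent} to the identity $f(a)+f(\pi(a))\equiv 0\pmod K$ holding for all Boolean $a$. Write $f=\sum_S c_S\,x_S$ with $x_S=\prod_{i\in S}x_i$ the squarefree monomials (legitimate since $v^2=v$ over $\{0,1\}$) and $c_S\in\Zed_K$. Grouping the two bullets of Definition~\ref{self-dual} says exactly that $c_{\pi(S)}=K-c_S\pmod K$ for every $S$, where $\pi(S)$ is the image set; for a $\pi$-fixed $S$ this forces $c_S\in\{0,K/2\}$, consistent with the cancellation remark preceding the lemma. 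Since $x_S(\pi(a))=x_{\pi(S)}(a)$, re-indexing gives $f(\pi(a))=\sum_S c_{\pi(S)}\,x_S(a)$, so
\[
f(a)+f(\pi(a))=\sum_S (c_S+c_{\pi(S)})\,x_S(a)\equiv 0\pmod K,
\]
as every coefficient is $c_S+(K-c_S)=K\equiv0$. The converse uses that the $x_S$, viewed as functions $\{0,1\}^n\to\Zed_K$, form a $\Zed_K$-basis (the change of basis to the indicator functions is unitriangular), so the displayed identity forces $c_S+c_{\pi(S)}\equiv0$ for each $S$. Specializing to $K=4$ is the first sentence of part (b). The remaining assertions of (b) come from the involution $a\mapsto a'=\pi(a)$ of $\{0,1\}^{|V|}$: a fixed point $a=a'$ gives $2f(a)\equiv0\pmod 4$, hence $f(a)\in\{0,2\}$ there, while if $f(a)=1$ then $f(a')=3$ and conversely; so $a\mapsto a'$ restricts to a fixed-point-free involution on $\{a:f(a)\in\{1,3\}\}$ that exchanges the value-$1$ and value-$3$ assignments, giving $N_1=N_3$ and killing the term $i(N_1-N_3)$ of $a_0(\vec{N})=N_0-N_2+i(N_1-N_3)$ in~(\ref{a0}).

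For part (a), let $\tilde f$ be $f$ with a chosen pair $u,u'=\pi(u)$ identified into one variable $\tilde u$, and let $\tilde\pi$ fix $\tilde u$ and agree with $\pi$ on the remaining variables. Let $\iota$ be the substitution that sends both $u$ and $u'$ to the value of $\tilde u$ and is the identity elsewhere; then $\iota\circ\tilde\pi=\pi\circ\iota$ (both sides put the common value of $\tilde u$ on $u$ and $u'$, and agree with $\pi$ elsewhere), so for every assignment $y$,
\[
\tilde f(y)+\tilde f(\tilde\pi(y))=f(\iota y)+f(\pi(\iota y))\equiv 0\pmod 4
\]
by the reformulation applied to $f$. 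Applying the reformulation in the reverse direction to $\tilde f$ then shows $\tilde f$ is self-dual with respect to $\tilde\pi$. (Alternatively one verifies Definition~\ref{self-dual} term by term on $\tilde f$; the only point of care is a term in $u$ and $u'$ jointly, which self-duality forces to have coefficient $K/2$ and which collapses to the self-dual term $2\tilde u$.)

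For part (c) I would reduce modulo $2$. In the classical self-dual case that we need this for---where every two-variable coefficient equals $2$---the mod-$2$ reduction of $f$ is the $\Eff_2$-linear functional $\ell(a)=\sum_{i:\,c_{\{i\}}\text{ odd}}a_i$, since the quadratic part dies. By the structure of the square terms of a self-dual form, $c_{\{i\}}$ is odd precisely when $x_i\in\{u,u'\}$ for some $u\in U$ carrying one of the dual pairs $u^2+3(u')^2$ or $3u^2+(u')^2$, i.e.\ the terms $3u+u'$ or $u+3u'$. If no such pair occurs then $f$ is everywhere even and $N_1=N_3=0$, the first claim of (c). If at least one occurs then $\ell\not\equiv0$ (it involves at least two distinct variables), so $\ell(a)=1$ on exactly $2^{|V|-1}$ assignments, i.e.\ $N_1+N_3=2^{|V|-1}$; with $N_1=N_3$ from (b) this gives $N_1=N_3=2^{|V|-2}$. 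Either way $N_0+N_2=2^{|V|}-(N_1+N_3)$ is a power of $2$ pinned down by this dichotomy, so $N_0-N_2=2N_0-(N_0+N_2)$ is recovered from $N_0$ alone---hence it suffices to count the zeroes of $f$. The one point needing care, rather than a genuine obstacle, is that this last argument really uses that all two-variable coefficients are even: that is automatic for the classical self-dual forms $q+q^*$ coming from Clifford circuits (the only place (c) is invoked), but would fail for forms such as $u_1u_2+3u_1'u_2'$ that Definition~\ref{self-dual} abstractly permits. The functional reformulation of self-duality is what makes (a) and (b) short and is the right lens for (c) as well.
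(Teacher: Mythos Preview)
Your argument is correct. The overall shape---term-pairing for (b), then a parity/mod-$2$ argument for (c)---is the same as the paper's, but your packaging is different and, for (a)--(b), tidier. The paper proves (b) by checking each dual pair of terms separately and calls (a) ``immediate''; you instead prove once that Definition~\ref{self-dual} is equivalent to the functional identity $f(a)+f(\pi(a))\equiv 0\pmod K$, which makes (b) a one-liner and lets you derive (a) from the commutation $\iota\circ\tilde\pi=\pi\circ\iota$ rather than by inspecting what happens to each kind of term after identification. Your basis argument for the converse direction (the squarefree monomials being a $\Zed_K$-basis of $\{0,1\}^n\to\Zed_K$ via a unitriangular change to indicators) is sound.

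For (c), your ``reduce mod $2$'' is exactly the paper's argument in different clothing: the paper says $f(a)$ is odd iff an odd number of the pairs $(u,u')$ with $u\in T$ are assigned different values, which is precisely the condition $\ell(a)=1$ for your linear functional $\ell=\sum_{u\in T}(u+u')$. Your explicit caveat that the mod-$2$ reduction only kills the cross terms under the classical hypothesis (all two-variable coefficients even) is well taken; the paper's proof tacitly uses the same assumption, and indeed the statement of (c) fails for a non-classical self-dual form like $u_1u_2+3u_1'u_2'$. Since the lemma is only invoked for the classical forms $q_C+q_C^*$ of Theorem~\ref{probability}, this is not a gap in either argument, just a restriction worth stating.
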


\begin{proof}
Part (a) is immediate.  Note that if $f$ has terms $3u^2 + u'^2$ or $2xu + 2xu'$ with $x$ fixed this makes them cancel.  But $2uv + 2u'v'$ becomes $2uv + 2uv'$ which is not yet trivial.  This also compares to point~3 of Lemma 3.1 in \cite{CGW18}.  Part (b) follows individually for each term $t$ plus $t' = \pi(t)$.  For (c), let $T$ be the set of $u \in U$ such that $f$ has $3u^2 + u'^2$ or $u^2 + 3u'^2$.  Then an assignment $a$ makes $f(a)$ odd iff it sets an odd number of pairs $(u,u')$ with $u \in T$ to different values, and so exactly half the assignments do so.  On the other assignments, those terms contribute $0 \pmod{4}$ so those terms do not affect the calculation.
%
\end{proof}

\bigskip

Now we derive a combinatorial lemma that effectively eliminates the terms with odd coefficients.  Let $T_0 \subseteq U$ collect the variables $u$ such that $u^2 + 3u'^2$ is a term, and $T_1 \subseteq U$ those for which $3u^2 + u'^2$ is a term.  Now define
\[
S = \{a: V \to \{0,1\} : a(u) \neq a(u') \text{ for an even number of } u \in T_0 \cup T_1,\;\; a(u) = 1\}.
\]
Then $S$ is a linear subspace of $\Eff_2^{|V|}$.  Assignments $a$ outside $S$ make $f(a)$ have value 1 or 3 and hence contribute to a global cancellation by Theorem~\ref{self-dual-basics}(c).  We will now replace $f$ by an alternating form $f'$ that preserves the cancellation outside $S$ and gives the same distribution of $0$ and $2$ values inside $S$.  Let $t$ collect the terms in $f$ with odd coefficients and define a quadratic form $f'$ as follows:

\begin{enumerate}
\item
For every pair $(u,v)$ with $u,v$ both in $T_0$ or both in $T_1$, $r$ has the term-pair $2uv + 2u' v'$.
\item
For every pair $(u,v)$ with $u \in T_0$ and $v \in T_1$ or vice-versa, $r$ has the term-pair $2uv' + 2u' v$.
\item
Finally choose any variable $u \in T$ and substitute $2u$ by twice the sum of all the other variables in $T$ into $f - t + r$.  The result is $f'$.
\end{enumerate}

\noindent
We remark that in the second step, $r$ introduces terms that ``cross the partition'' from $U$ to $U'$.  The self-dual forms $q_C + q_C^*$ arising from Theorem~\ref{probability} have a canonical partition with no crossing edges.  The import here is that the presence of terms that cross the partition does not affect the analysis---they will just become edges in a larger graph.  Except for the element crossing the partition, $r$ represents exclusive-or-ing by a clique of edges between pairs of nodes in $T_0 \cup T_1$.  This also accounts for our current statement of the reduction time having a $+ n^2$ term that is not reduced when the size $s$ is $o(n^2)$.

The final form $f'$ in the third step need not even be self-dual.  For an example, consider $f = 3u + u' + 3v + v'$, which has no cross-terms.  Then $T = T_1 = \{u,v\}$.  So $r = 2uv + 2u'v'$.  The subspace $S$ is defined by $u + u' + v + v' = 0 \pmod{2}$.  Because $f - t + r = r$ only has even terms, we can substitute $2u + 2u' + 2v + 2v' = 0 \pmod{4}$.  Choosing $2v$ as the term to substitute for makes
\[
f' = u(2u + 2u' + 2v') + 2u'v' = 2u^2 + 2uu' + 2uv' + 2u'v'.
\]
This is not self-dual because of the absence of terms $2u'^2$, $2u'v$, and $2uv$, plus it has a term of the form $2uu'$, which even if it were allowed in Definition~\ref{self-dual} would be canceled in the process of forming $q + q^*$.  It is, however, alternating and equivalent to $f$ restricted to $S$ pointwise, not just in distribution.  That is, for any assignment $a$ to $(u,u',v')$, letting $a'(v) = a(u) + a(u') + a(v')$ (and $a'(u) = a(u)$, $a'(u') = a(u')$, $a'(v') = a(v')$), we get $f'(a) = f(a')$.  Now we prove this in general.

\bigskip
\begin{lemma}\label{nonalt-to-alt}
The classical alternating quadratic form $f'$ equals $f$ restricted to $S$, hence $N_0(f') - N_2(f') = N_0(f) - N_2(f)$.
\end{lemma}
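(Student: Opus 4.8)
The plan is to verify the claimed pointwise identity $f'(a) = f(\pi\text{-closure of }a)$ directly from the construction of $f'$, and then to read off the equality of the $0$-versus-$2$ value distributions restricted to $S$. First I would set up notation: fix the distinguished variable $u_0 \in T = T_0 \cup T_1$ that was used for the substitution in step~3, and for any assignment $a$ to the variables $V \setminus \{u_0\}$ define $\hat a$ by extending with $\hat a(u_0) = \sum_{u \in T \setminus \{u_0\}} a(u) \pmod 2$, which is exactly the condition that $\hat a$ lie in $S$ (recall $S$ is cut out by the single linear equation $\sum_{u \in T}\,(\text{value of }u) \equiv 0 \pmod 2$, together with the $a(u)=1$ clause built into the definition---I would double-check how that clause interacts, since it looks like $S$ is really the affine translate or the whole even-parity subspace). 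The map $a \mapsto \hat a$ is an affine bijection from $\{0,1\}^{V\setminus\{u_0\}}$ onto $S$.

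Next I would compute $f(\hat a)$ term by term and compare with $f'(a)$. Write $f = (f - t) + t$, where $t$ collects the odd-coefficient terms, i.e. the terms $u^2 + 3u'^2$ for $u \in T_0$ and $3u^2 + u'^2$ for $u \in T_1$. On $S$, by Lemma~\ref{self-dual-basics}(c) and the parity constraint, each such pair $u^2 + 3u'^2$ or $3u^2 + u'^2$ contributes $0$ or $2 \pmod 4$ according to whether $\hat a(u) = \hat a(u')$; summed over all $u \in T$ the total parity is even, so $t(\hat a) \equiv 0 \pmod 4$. Hence $f(\hat a) = (f-t)(\hat a) \pmod 4$. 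Meanwhile $f' = (f - t + r)$ with $2u_0$ replaced by $2\sum_{u \in T\setminus\{u_0\}} u$; since every term of $r$ and every term carrying $u_0$ in $f - t$ has coefficient $2$, this substitution is just the substitution $2u_0 \equiv 2\sum_{u\in T\setminus\{u_0\}} u \pmod 4$, which is valid for any assignment in $S$. Therefore $f'(a) = (f - t + r)(\hat a) \pmod 4$, and it remains to check $r(\hat a) \equiv 0 \pmod 4$ for $\hat a \in S$. This is the crux: $r$ is the "XOR-by-a-clique" form on $T_0 \cup T_1$ described in the paragraph before the lemma, and the whole design of steps~1--2 (putting $2uv + 2u'v'$ for same-side pairs and the crossed $2uv' + 2u'v$ for opposite-side pairs) is precisely so that $\sum_{\{u,v\}} (\text{that term})$ evaluated on an assignment equals $2\binom{m}{2} \equiv 0 \pmod 4$ or vanishes, where $m$ counts the pairs set to differing values --- I would make this precise by noting that the value of the clique form is $2$ times the number of edges $\{u,v\}$ with the two endpoints assigned distinct values, and on $S$ that count is $\binom{p}{2}$-style with $p$ even, hence the total is $\equiv 0 \pmod 4$. (I expect the opposite-side "crossed" terms are rigged exactly so this parity bookkeeping still works out; that is the place to be careful.)

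Combining: for $\hat a \in S$ we get $f'(a) = f(\hat a) \pmod 4$, so the restriction $f'|_{\{0,1\}^{V\setminus\{u_0\}}}$ is carried bijectively onto $f|_S$ with matching values. Since $f'$ has only even coefficients (every surviving term is $2uv$, $2uv'$, or $2u^2$), it is a classical alternating quadratic form --- I should confirm no $2u^2$ diagonal term survives, or if it does, fold it into the ambient vector $\vec v$ as in Section~\ref{sec:properties} so that alternating is still the right description; the running example shows a $2u^2$ appearing, so the correct statement is that $f'$ is classical with its cross-part alternating, and the constant/linear part goes into $\vec v$, which does not affect the $N_0 - N_2$ comparison. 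Finally, for the count: assignments outside $S$ contribute only values $1$ and $3$ to $f$ (Lemma~\ref{self-dual-basics}(c)), so $N_0(f) = |\{\hat a \in S : f(\hat a) = 0\}| = |\{a : f'(a) = 0\}| = N_0(f')$ and likewise $N_2(f) = N_2(f')$; subtracting gives $N_0(f') - N_2(f') = N_0(f) - N_2(f)$, as claimed.

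\textbf{Main obstacle.} The delicate point is the claim $r(\hat a) \equiv 0 \pmod 4$ for every $\hat a \in S$ --- equivalently, that the clique form $r$ built in steps~1--2, including the orientation-reversing "crossed" terms for $T_0$-vs-$T_1$ pairs, always evaluates to a multiple of $4$ on the even-parity subspace. Everything else (the vanishing of $t$ on $S$, the validity of the $2u_0$-substitution, the bijection $a \mapsto \hat a$, the final counting step) is routine bookkeeping, but getting the clique-evaluation parity right, and in particular confirming that the $T_0$/$T_1$ split and the crossed terms are not a red herring but are genuinely needed to make the value land in $4\mathbb{Z}$ rather than merely $2\mathbb{Z}$, is where the real content sits, and I would spend most of the proof there.
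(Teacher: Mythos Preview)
Your strategy has a genuine gap at the point you yourself flag as the ``main obstacle,'' but the problem is more basic than you suspect: neither $t$ nor $r$ vanishes modulo $4$ on $S$, so you cannot hope to show $t\equiv 0$ and $r\equiv 0$ separately.

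Take the paper's own example $f = 3u + u' + 3v + v'$, so $T = T_1 = \{u,v\}$, $t = f$, and $r = 2uv + 2u'v'$. The assignment $a(u)=1$, $a(u')=0$, $a(v)=1$, $a(v')=0$ has both pairs disagreeing, hence lies in $S$. But $t(a) = 3+0+3+0 = 6 \equiv 2$ and $r(a) = 2\cdot 1 + 2\cdot 0 = 2$. So your assertion that ``each such pair $u^2 + 3u'^2$ or $3u^2 + u'^2$ contributes $0$ or $2 \pmod 4$'' is wrong (the single pair $3u+u'$ here contributes $3$ or $1$ when $a(u)\neq a(u')$), and the conclusion $t(\hat a)\equiv 0$ does not follow from even parity of the disagreement count --- even parity only gives $t\equiv 0\pmod 2$. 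Likewise your clique-counting heuristic for $r$ breaks down: it gives $2\binom{d}{2}$ in some cases, which is $2$, not $0$, when $d=2$.

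What \emph{is} true, and what the paper proves, is that $t(a) = r(a)$ for every $a\in S$. The paper's argument does not try to evaluate either one absolutely; instead it (i) shows that simultaneously flipping $a(u)$ and $a(u')$ for some $u$ in the disagreement set $D$ changes $t$ and $r$ by the same amount modulo $4$, so $t-r$ is constant on each ``flip-orbit,'' and then (ii) picks a canonical representative $a_0$ in each orbit (setting $a_0(u)=1,a_0(u')=0$ for $u\in T_0\cap D$ and the reverse for $T_1\cap D$) and verifies $t(a_0) = r(a_0)$ by a direct count yielding $d$ on one side and $d^2 - d \equiv -d$ on the other. This is exactly where the crossed versus uncrossed term-pairs in $r$ do their work: they are calibrated so that $r$ tracks the \emph{value} of $t$, not so that $r$ vanishes. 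Your plan can be salvaged by replacing the two separate vanishing claims with the single claim $t=r$ on $S$, but establishing that requires essentially the orbit-plus-representative argument the paper gives.
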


\begin{proof}
The substitution step~3 is valid because $f - t + r$ only has even coefficients.  Hence we need only show that $f - t + r$ agrees with $f$ on $S$.  For any assignment $a: V \to \{0,1\}$, define $D$ to be the set of variables in $u \ T$ such that $a(u') = 1 - a(u)$.  Pick any $u \in D$, for instance the lexicographically first variable in $D$ .  Then define $a'$ to be the assignment with $a'(u) = 1 - a(u)$ and $a'(u') = 1 - a(u') = 1 - a'(u)$.  Going from $a$ to $a'$ changes the form $t$ by $+2$ modulo $4$. The only term-pairs in $r$ that change are $(2uv + 2u'v')$ or $(2uv' + 2u'v)$ for $v \in D$ and those change by $2$.  Hence if $d = |D|$ is even, there is a matching change by $2$, so $f'(a) - f'(a') = f(a) - f(a')$.  Define the closure of $a$ under such ``flip'' operations by $F(a)$.

For the $d$-even case, it remains to give some $a_0 \in F(a)$ such that $t(a_0) = r(a_0)$.  Take $a_0(u) = 1$, $a_0(u') = 0$ for $u \in T_0 \cap D$ and $a_0(u) = 0$, $a_0(u') = 1$ for $u \in T_1 \cap D$.  Then $t(a_0) \equiv d \pmod{4}$. To get the contribution from $r(a_0)$, set $k = |T_0 \cap D|$ and consider:

\begin{itemize}
\item
For every $v,w \in T \setminus D$, $r$ has the term-pair $2vw + 2v'w'$ or $2vw' + 2v'w$ where $a_0(v) = a_0(v')$ and $a_0(w) = a_0(w')$.  Each of these pairs contributes a multiple of $4$, so the net from these terms is $0$.
\item
For every pair $(v,w)$ with $v \in D$ and $w \in T \setminus D$, the term-pair equals $0$ if $a_0(w) = 0$ and $2(v+v') = 2$ otherwise.  The net for any $w$ with $a_0(w)  = 1$ is $2d$, which is $0$ if $d$ is even.
\item
For every pair with $v,w \in D$, the contribution is always $2$: Either $v$ and $w$ are both in $T_0$ or both in $T_1$ and the term pair is $2vw + 2v'w'$ or they have one in each and the pair is $2vw' + 2v'w$.  Hence the total contribution is $2\binom{d}{2} = d^2 - d$.
\end{itemize}

\noindent
The final point is that when $d$ is even, $d^2$ vanishes so the net from $r(a_0)$ is $-d$ modulo 4, which equals $d$ modulo $4$ from $t$.  This finishes the proof that for all assignments $a \in S$, $r(a) = t(a)$.
\end{proof}

\begin{proof}[Proof of Theorem~\ref{main}(b2)]
Given an $n$-qubit stabilizer circuit $C$ of size $s$ with $h$ nondeterministic gates, we can obtain the self-dual classical quadratic form $f = q_C + q^*_C$ in $O(s)$ time by the process in Theorem~\ref{probability}.  Then Lemma~\ref{nonalt-to-alt} converts $f$ to the equivalent alternating quadratic form $f'$ in time $O(s + h^2)$.  The alternating case of Theorem~\ref{thm:reduction-to-rank} then gives a simple relation from $p = |\triplep{0^n}{C}{0^n}|^2 = \frac{1}{R}|N_0(f') - N_2(f')|$ to rank unless $p = 0$.
\end{proof}

\medskip
\noindent
If $C$ has at most $k$ qubit lines with odd numbers of phase gates, where $k = O(\sqrt{s})$, then the time is $O(s)$.  The most immediate scope for improvement is to reduce the overhead from the quadratic size of $r$---perhaps sacrificing the exact agreement between $f'$ and $f$ for distributional equivalence.

\section{Conclusions}\label{conclusions}

We have improved the asymptotic running time for strong simulation of $n$-qubit stabilizer circuits (with typical size and nondeterminism) from $O(n^3)$ to $O(n^\omega)$.  We have also shown a linear time reduction from matrix rank over $\Eff_2$ to strong simulation.  One interpretation of the latter is:

\begin{fmquote}{5.25in}
The time gap between weak and strong simulation for stabilizer circuits cannot be closed unless $n \times n$ matrix rank over $\Eff_2$ is computable in $O(n^2)$ time.
\end{fmquote}

\noindent
The direction from the quantum simulation to matrix rank comes close to establishing a complete equivalence of them, especially for the simulation probability $p$.  Via analysis of ``self-dual'' forms we have reduced the probability computation to the alternating case, in which by Lemma~\ref{lm:alter} and Theorem~\ref{thm:reduction-to-rank} we get a simple expression whose absolute value depends only on the rank and whether $p = 0$.  That puts focus on the complexity of deciding whether $p$, or equivalently the amplitude $a = \triplep{0^n}{C}{0^n}$, is zero, specifically in the alternating case where $a$ is always real.  

The alternating case comes down to graph-state circuits $C_G$ and can be framed in terms apart from quantum computing.  Consider black/white two-colorings (not necessarily proper) of the $n$ vertices of $G$, and count the number of edges whose two nodes are both colored black.  Call those B-B edges.  Define $c_0$ to be the count of colorings that make an even number of B-B edges and $c_1 = 2^n - c_0$ to be the count of colorings that make an odd number of B-B edges.  The following is called $a(G)$ for ``amplitude'' and divides by $2^n$ not $2^{n/2}$ because $C_G$ has $2n$ Hadamard gates.
\[
a(G) = \frac{c_0 - c_1}{2^{n}}.
\]

\smallskip
\begin{definition}
Call an undirected graph $G$ \emph{net-zero} if $a = 0$,  \emph{net-positive} if $a > 0$, and \emph{net-negative} if $a < 0$.
\end{definition}

\bigskip
\noindent
The following proposition collects some basic facts:

\bigskip
\begin{proposition}\label{net-zero}
\begin{itemize}
\item[(a)]
Every odd cycle graph is net-zero.
\item[(b)]
Every bipartite graph is net-positive.
\item[(c)]
A graph is net-zero if and only if one of its connected components is net-zero.
\item[(d)]
If $G$ is net-zero, then the graph $G'$ obtained by attaching a new node $v$ only to one existing node $u$, then attaching a second new node $w$ only to $v$, is also net-zero.
\end{itemize}
\end{proposition}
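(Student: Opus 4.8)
The plan is to reduce every part to the single character-sum identity
\[
c_0(G) - c_1(G) \;=\; \sum_{x \in \{0,1\}^n} (-1)^{\sum_{(i,j)\in E} x_i x_j},
\]
which holds because $\sum_{(i,j)\in E} x_i x_j$ counts exactly the B--B edges of the coloring that blackens $\{v : x_v = 1\}$; thus $a(G) = 2^{-n}\sum_x (-1)^{e_B(x)}$ and every claim becomes an evaluation of this exponential sum. I would prove (c) first, since (a) and (d) really only concern a single component. For \emph{(c)}: the exponent splits as a sum over the connected components $K$ of $G$ (no edge joins two components), so the sum factorizes and $a(G) = \prod_K a(K)$ after dividing $2^n = \prod_K 2^{n_K}$; hence $a(G) = 0$ iff some factor vanishes, which is exactly the statement.

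For \emph{(a)} I would evaluate the sum for the odd cycle $v_1\!-\!v_2\!-\!\cdots\!-\!v_m\!-\!v_1$ by a transfer matrix: with $M = \left[\begin{smallmatrix}1&1\\1&-1\end{smallmatrix}\right]$, i.e.\ $M_{ab} = (-1)^{ab}$, one has $\sum_x \prod_{i=1}^m (-1)^{x_i x_{i+1}} = \operatorname{Tr}(M^m)$. Since $M^2 = 2I$, for odd $m$ we get $M^m = 2^{(m-1)/2}M$, so $\operatorname{Tr}(M^m) = 2^{(m-1)/2}\operatorname{Tr}(M) = 0$ and $C_m$ is net-zero. (One could instead invoke Lemma~\ref{lm:alter} after checking that the adjacency matrix of $C_m$ has $\Eff_2$-rank $m-1$, but the transfer matrix sidesteps the analysis of the associated vector $w$.) For \emph{(b)} I would reuse the argument from the proof of Theorem~\ref{main}(b1): for $G = (V \sqcup V', E)$ bipartite, fix a coloring $a$ of $V$; for $j \in V'$ the B--B edges at $j$ number $x'_j \deg_a(j)$ with $\deg_a(j) = |\{i \in V : (i,j) \in E,\ a_i = 1\}|$, so summing over colorings $x'$ of $V'$ gives $\prod_{j \in V'}(1 + (-1)^{\deg_a(j)})$, which is $2^{|V'|}$ when every $\deg_a(j)$ is even and $0$ otherwise; every term of the outer sum over $a$ is nonnegative and $a \equiv 0$ contributes $2^{|V'|}$, so $c_0 - c_1 \ge 2^{|V'|} > 0$.

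For \emph{(d)}, writing $G'$ for the graph with the two new edges $(u,v)$ and $(v,w)$, the B--B count is $e_B^{G'}(x,x_v,x_w) = e_B^G(x) + x_u x_v + x_v x_w$, so
\[
c_0(G') - c_1(G') \;=\; \sum_x (-1)^{e_B^G(x)} \sum_{x_v}(-1)^{x_u x_v}\sum_{x_w}(-1)^{x_v x_w}.
\]
The $x_w$-sum equals $2$ when $x_v = 0$ and $0$ when $x_v = 1$, so the $x_v$-sum collapses to the single $x_v = 0$ term and equals $2$, \emph{independently of $x_u$}; hence $c_0(G') - c_1(G') = 2\,(c_0(G) - c_1(G))$ and $a(G') = \tfrac12 a(G)$, so in particular $a(G) = 0 \Rightarrow a(G') = 0$.

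I do not expect a genuine obstacle: the crux is simply recognizing the character-sum reformulation. The one place needing an idea is (d) — a single pendant vertex would tie the sum to $\sum_{x : x_u = 0}(-1)^{e_B^G(x)}$, which need not be proportional to $c_0(G) - c_1(G)$, whereas the length-two pendant path forces $x_v = 0$ and thereby kills the dependence on $x_u$; that is why two appended nodes, not one, are required. The transfer-matrix evaluation in (a) is the only other mild computation, and (b) and (c) are bookkeeping on top of material already in the paper.
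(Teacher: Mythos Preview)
Your proof is correct in all four parts. Parts (b), (c), and (d) follow essentially the same route as the paper: (c) by factorization of the character sum over components, (b) by the same bipartite positivity argument already used in the proof of Theorem~\ref{main}(b1), and (d) by the observation that of the four extensions $(x_v,x_w)$ exactly one flips the B--B parity regardless of $x_u$, which is exactly your inner-sum collapse to the constant $2$ and the paper's ``exactly one of the four colorings'' remark; both yield $a(G') = \tfrac12 a(G)$.

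Part (a) is where you genuinely diverge. The paper argues combinatorially: in any cycle the number of bichromatic edges is even, so for an odd cycle the number of monochrome edges is odd, and complementing the coloring swaps the B--B and W--W counts while preserving their (odd) sum, hence flips the parity of the B--B count; complementation is then a bijection between colorings with even and odd B--B counts. Your transfer-matrix evaluation $\mathrm{Tr}(M^m)$ with $M^2 = 2I$ is equally short and has the advantage of simultaneously giving the even-cycle value $\mathrm{Tr}(M^{2k}) = 2^{k+1}$, whereas the paper's bijection is specific to odd $m$. Either argument is adequate here; the combinatorial one avoids any computation, while yours generalizes more readily to path-like gadgets such as the one in (d).
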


\begin{proof}
Part (a) follows because every coloring has an even number of B-W edges.  Hence the number of monochrome edges is odd, and so complementing the coloring flips the parity between B-B and W-W edges.
Part (b) was part of the proof of Theorem~\ref{main}(b1).
Part (c) is intuitive from how the quantum state is a tensor product over the connected components, so the events of all-$0$ output on each component are independent.
The proof of (d) is that whether $u$ is colored black or white, exactly one of the four colorings of $v$ and $w$ creates one more B-B edge.  Thus $|\triplep{0^{n+2}}{C_{G'}}{0^{n+2}}|^2$ is directly proportional to $|\triplep{0^{n}}{C_{G}}{0^{n}}|^2$.  
\end{proof}

\smallskip
The smallest net-zero graph is the triangle graph.  The graph made by attaching a second triangle is net-zero, as is the graph made by attaching a triangle to any of the latter's four outer edges.  
As observed at the end of section~\ref{sec:properties}, the six-node graph consisting of two triangles connected by an edge is net-negative.  
Here are the connected net-zero graphs of $3$, $4$, and $5$ nodes:

\begin{center}
\includegraphics[width=5.5in]{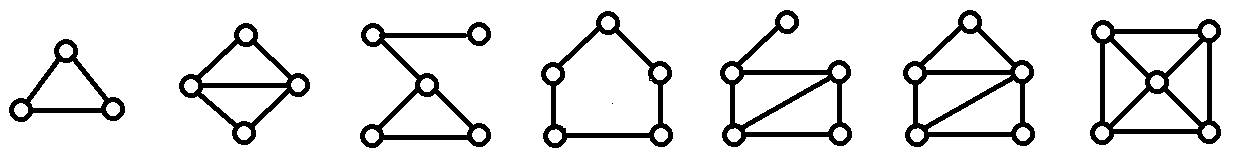}
\end{center}

\noindent
The concept extends to graphs with multiple edges and self-loops.  An isolated self-loop is net-zero, while an edge with two self-loops is net-negative.  This includes the quadratic forms produced by Lemma~\ref{nonalt-to-alt} and we conclude:

\bigskip
\begin{corollary}
If net-zero graphs of $n$ nodes with self-loops allowed are recognizable in $O(n^2)$ time then computing $|\triplep{0^n}{C}{0^n}|$ for stabilizer circuits $C$ (of $O(n^2)$ size with $O(n)$ nondeterminism) is $O(N)$-time equivalent to computing $n \times n$ matrix rank, where $N = n^2$.
\end{corollary}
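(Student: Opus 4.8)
The plan is to obtain the asserted $O(N)$-time equivalence, $N=n^2$, by composing the two reductions already proved as parts (b1) and (b2) of \Cref{main} and observing that the \emph{only} gap between them --- deciding whether $p=|\triplep{0^n}{C}{0^n}|^2$ is zero --- is exactly net-zero recognition once the problem has been pushed to the alternating case. The hypothesis supplies that decision in $O(n^2)$ time, and this is precisely what promotes the near-equivalence of (b1) and (b2) into genuine two-way $O(N)$-time reductions; no new mathematics is needed.

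\textbf{From matrix rank to strong simulation.} This direction uses no net-zero oracle. Given an $n\times n$ matrix $\mat{A}_0$ over $\Eff_2$, form the symmetric bipartite block matrix $\mat{A}$ as in \Cref{introduction} and build its graph-state circuit on $2n$ qubits; this is $O(n^2)=O(N)$ time, and the circuit has $O(n^2)$ gates with $O(n)$ Hadamard gates. Because the underlying graph is bipartite, the argument in the proof of \Cref{main}(b1) --- equivalently \Cref{net-zero}(b) --- certifies $p>0$, and then by the alternating case of \Cref{thm:reduction-to-rank} the single value $|\triplep{0^{2n}}{C_{\mat{A}}}{0^{2n}}|$ pins down $\mathrm{rank}(\mat{A}_0)$ by the one-line formula of \Cref{introduction}. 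So one call to the strong-simulation primitive plus $O(N)$ overhead recovers the rank.

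\textbf{From strong simulation to matrix rank.} Given $C$ of size $s=O(n^2)$ with $h=\Theta(n)$ nondeterministic gates, first build the self-dual classical quadratic form $f=q_C+q_C^{*}$ in $O(s)=O(N)$ time via \Cref{probability}, then invoke \Cref{nonalt-to-alt} to replace it by the classical alternating form $f'$ on $m=O(h)=O(n)$ variables in time $O(s+h^2)=O(N)$; this preserves $N_0(f')-N_2(f')=N_0(f)-N_2(f)$, hence also $p$. View $f'$ as a graph $G'$ on $m$ nodes, with self-loops allowed since terms $2u^2$ may occur (cf.\ the worked example following \Cref{nonalt-to-alt}), and make one call to the hypothesised $O(m^2)=O(N)$-time net-zero recogniser. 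If $G'$ is net-zero, output $|\triplep{0^n}{C}{0^n}|=0$. Otherwise $N_0(f')\neq N_2(f')$, so by \Cref{lm:alter} and the alternating case of \Cref{thm:reduction-to-rank} the quantity $|\triplep{0^n}{C}{0^n}|$ is a fixed one-line function of the rank $r$ of $f'$; one call to an $m\times m$ matrix-rank oracle over $\Eff_2$ yields $r$, and $O(m)=O(n)$ further arithmetic finishes. The promise $p>0$ demanded by \Cref{main}(b2) is exactly what the net-zero test has just certified, so the composition is legitimate.

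\textbf{Where the work sits.} There is no obstacle of substance; the care needed is bookkeeping. One must keep every dimension at $\Theta(n)$ so that $N=n^2$ is a consistent size parameter for all three primitives --- this uses $s=O(n^2)$, $h=\Theta(n)$, and the fact that the self-dual doubling and \Cref{nonalt-to-alt} enlarge the variable set only by constant factors --- and one must note that the alternating forms produced here may genuinely carry self-loops, which is why the hypothesis is stated for graphs with self-loops allowed rather than simple graphs. The one point that truly needs a sentence of justification is that the net-zero recogniser, nominally phrased for graphs, applies to the multigraph-with-self-loops $G'$ coming out of \Cref{nonalt-to-alt}; this is covered by the remark just before the corollary that the amplitude $a(G)$, and hence the net-zero/net-positive/net-negative trichotomy, extends verbatim to graphs with multiple edges and self-loops.
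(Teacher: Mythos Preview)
Your proof is correct and follows exactly the route the paper intends: the corollary is stated without proof in the paper, immediately after the sentence ``This includes the quadratic forms produced by Lemma~\ref{nonalt-to-alt},'' so the intended argument is precisely the composition of \Cref{main}(b1), \Cref{main}(b2), and \Cref{nonalt-to-alt} with the net-zero test supplying the missing $p>0$ decision. Your bookkeeping on sizes ($s=O(n^2)$, $h=\Theta(n)$, the self-dual doubling and the single substitution in \Cref{nonalt-to-alt} keeping the variable count at $O(n)$) and your observation that the self-loops arising from $f'$ are exactly why the hypothesis must cover graphs with self-loops are both on point.
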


\bigskip
The concept further extends to graphs with \emph{circles}, which are isolated loops without a vertex and contribute a multiplicative $-1$, and more generally to \emph{graphical 2-polymatroids} with \emph{rank function} $f_G(A)$ defined for any set $A \subseteq E$ to be the total number of vertices touched by edges in $A$.  Then $a(f_G)$ becomes a \emph{generalized Tutte invariant} (see \cite{OxWh93,Nob06}) with parameters
\[
(r,s,t;a,b,c,d;m,n) = (1/2,-1,0;1,-1,1,-1/2;1,-1/2).
\]
This gives
\[
a(G) = \left(-\frac{1}{2}\right)^{n/2} S(f_G; -\sqrt{2}i, \sqrt{2}i), \quad\text{where}\quad S(f;x,y) = \sum_{A \subseteq E} x^{f(E) - f(A)}y^{2|A| - f(A)},
\]
by the main theorem of \cite{OxWh93}.  This in turn further simplifies to
\[
a(G) = \sum_{A \subseteq E} \frac{(-2)^{|A|}}{2^{f_G(A)}}.
\]
Noble \cite{Nob06} shows that computing $S(f_G;x,y)$ is $\NumP$-hard for any constant rational $x,y$ whenever $xy \neq 1$.  The complex irrational point $(-\sqrt{2}i, \sqrt{2}i)$ has $xy = 2$ but evades his proof because having $y^2 = -2$ makes a denominator vanish.  
Other connections between quantum graph states and matroids have been shown by Sarvepalli \cite{Sar14}, and there is scope for further development along these lines.

\medskip
We have shown tight connections to the fundamental problems of counting solutions to quadratic forms $f$ over $\Eff_2$ and $\Zed_4$.  For $\Eff_2$ we get that $2f$ is an alternating form over $\Zed_4$ with the same solution count over $\{0,1\}^n$, so the near-equivalence to matrix rank applies.  In any event we have reduced the $\Zed_2$ case to matrix multiplication in a way that improves the $O(n^3)$ running time stated in \cite{EhKa90} to $O(n^\omega)$.  For binary solution counting of non-alternating classical quadratic forms over $\Zed_4$, we obtain $O(n^\omega)$ runtime via methods that multiply matrices as well as compute rank.  

When the non-Clifford gate $\gatem{CS}$ is added to create a universal set, the quadratic forms over $\Zed_4$ have terms $xy$ or $3xy$.  They are no longer classical and the connection to $\Eff_2$ exploited by $\cite{Schm09}$ no longer applies.  No such connection can apply, nor any extension of the algorithm in \cite{DuPe18} \emph{a-fortiori}, unless $\BQP = \Ptime$.  There is also the sharp dichotomy theorem of \cite{CLX14} that solution counting for these forms over all of $\Zed_4^n$ is in polynomial time, but over $\{0,1\}^n$ it is $\NumP$-complete.   This extends to affine versus non-affine forms over $\Zed_{K}^n$, $K = 2^k$.  Deeper understanding of \emph{why} the dichotomy operates may illuminate exactly which elements of quantum computations create hardness for classical emulation (for this, see \cite{Bac17,Bac18}).

Nevertheless, perhaps these techniques can apply to heuristic or approximative methods on general quantum circuits.
The polynomial translation in \cite{RCG18} applies to quantum circuits of all common gate types.  There are questions about analyzing circuits that are ``mostly Clifford'' or those from the Clifford plus $\gatem{T}$ libraries that try to minimize the latter gates, of which we mention\cite{BrGo16,MFIB18,BBCCGH19}.  For example, are there reasonably-tight bounds for the numbers of the non-Clifford gates required to compute certain functions that can be obtained efficiently by algebraic means, without resort to exhaustive search?

A direction for improving the present results is to sharpen the times for sparse cases---reflecting for instance the analysis for bounded degree in section VI of \cite{AnBr06} and the results for rank in \cite{CKL13}.  Our Lemma~\ref{nonalt-to-alt} introduces a dense clique of edges even for sparse graphs; perhaps there is a more economical reduction to the alternating case.  
We have left unused one further manipulation of a classical quadratic form $f$ that is most simply described in terms of the associated graphs $G$.  Subdivide each edge $e = (u,v)$ by a new node $s_e$ and add a second new node $r_e$ connected only to $s_e$.  Finally replace $2uv$ in the form by $2us_e + 2r_e s_e + 2vs_e$, and for each $u$ of odd degree (on non-self edges), add $2u^2$.  The resulting form $f'$ is equivalent to $f$ on the linear subspace $S$ of assignments that make $r_e = u + v \pmod{2}$ for each edge $e$ and is equivalent to a \emph{linear} form on that subspace.  The drawback is adding upwards of $n^2$-many nodes, but it preserves sparseness of the edge set.

A closer look into \Cref{lm:alter} and \Cref{lm:nalter} suggests that the probability of a specific output or the distribution over the entire output set can serve as a metric to test whether two given quantum stabilizer circuits are (not) \textit{equivalent}.
Let $h^i_0, h^i_1$ be two corresponding $h_0, h_1$ differences for circuit $C_i$/quadratic form $f_i(\vec{x})$. Then we can define the following two concepts accordingly.

\bigskip
\begin{definition}\label{dfn:weakdisE}
Given two quantum circuits $C_1$ and $C_2$, we call $C_1$ and $C_2$ are weakly equivalent, denoted by $C_1 \stackrel{w}{\approx} C_2$ if
\[
|\triplep{\vec{z}}{C_1}{\vec{a}}|^2 = |\triplep{\vec{z}}{C_2}{\vec{x}}|^2
\]
for a fixed input $\vec{a}$ and all possible output $\vec{z}$.
\end{definition}

\bigskip
\begin{definition}\label{dfn:disE}
Given two quantum circuits $C_1$ and $C_2$, we call $C_1$ and $C_2$ are strongly equivalent, denoted by $C_1 \stackrel{s}{\approx} C_2$ if for all possible output $\vec{z}$, the amplitudes of $C_1$ and $C_2$ are the same, that is,
\[
|N_j(Q_1(\vec{a}, \vec{y}, \vec{b}))| = |N_j(Q_2(\vec{a}, \vec{y}, \vec{b}))|
\]
for a fixed input $\vec{a}$ and all possible output $\vec{b}$.
\end{definition}

\bigskip
Now consider $C_1 \stackrel{s}{\approx} C_2$ for two given stabilizer circuits. The corresponding $Q_1(\vec{x}, \vec{y}, \vec{z})$ and $Q_2(\vec{x}, \vec{y}, \vec{z})$ will be of the forms as stated in \Cref{sec:rank}. Without loss of generality, assume $\vec{0}$. Note that the resulting $Q_1(\vec{y}, \vec{z})$ and $Q_2(\vec{y}, \vec{z})$ can be associated with two graphs. Each graph has two sets of nodes $\vec{y}$ and $\vec{z}$. The nodes in $\vec{y}$ can be connected by edges in any way, while there is no edge between nodes among $\vec{z}$ and each node is connected by exactly one node from $\vec{y}$ without overlapping node. Hence, this should be a strict class among graphs and this gives out another interesting question, \emph{does $C_1 \stackrel{s}{\approx} C_2$ implies that their associated graphs are isomorphic?} If this is true, we will have, $C_1 \stackrel{s}{\approx} C_2$ if and only if their associated graphs are isomorphic. Note that $C_1 \stackrel{s}{\approx} C_2$ says $|N_i(Q_1)| = |N_j(Q_2)|$ for all possible outputs, which are exponentially many. We ask:

\begin{itemize}
\item
If $|N_j(Q_1)| = |N_j(Q_2)|$ for all possible outputs, does $Q_1(\vec{y},\vec{b}) \stackrel{\F_2}{\sim} Q_2(\vec{y}, \vec{b})$ for all possible $\vec{b}$?
\item
For the case where $C_1 \stackrel{w}{\approx} C_2$, can we pose a similar question, but in terms of \emph{rank}?
\end{itemize}

\paragraph{Acknowledgments}
We thank Scott Aaronson, Graham Farr, Richard Lipton, and Virginia Vassilevska Williams for helpful information and discussions.

\bibliographystyle{alpha}
\bibliography{newpapers}


\newpage
\appendix
\section{Appendix: Other Results and Proofs}

%
%


\noindent
\textbf{Proof for \Cref{PLDLTPT}(c)}.  
Suppose $\mat{B}' = \mat{L}\mat{D}\mat{U} = \mat{M}\mat{E}\mat{V}$ where $\mat{L}\mat{M}$ are lower triangular and $\mat{U},\mat{V}$ are upper triangular, not even caring that $\mat{U} = \mat{L}^\top$ and $\mat{V} = \mat{M}^\top$ but just that they are invertible.

First consider the non-alternating case where $\mat{D}$ and $\mat{E}$ are diagonal but not necessarily of full rank.  They must have the same rank $r$.  Then $\mat{M}^{-1}$ is also lower triangular, so that $\mat{C} = \mat{M}^{-1}\mat{L}\mat{D}$ is lower triangular, and $\mat{U}^{-1}$ is upper triangular, so that $\mat{E}\mat{V}\mat{U}^{-1}$ is upper triangular.  $\mat{C} = \mat{M}\mat{L}\mat{D} = \mat{E}\mat{V}\mat{U}^{-1}$, and the only way a lower-triangular matrix can equal an upper-triangular matrix is when both are diagonal.  So $\mat{C}$ is diagonal, and we need only argue that $\mat{C} = \mat{D}$ (${} = \mat{E}$).  This follows because they have the same rank and for any $i$ such that $\mat{D}[i,i] = 0$, also $\mat{C}[i,i] = 0$.

In the alternating case, $\mat{M}^{-1}\mat{L}$ is lower triangular but its product $C$ with $\mat{D}$ can also have a non-zero diagonal above the main diagonal.  The product $\mat{E}\mat{V}\mat{U}^{-1}$ is upper-triangular except for the diagonal below the main.  Hence $C$ must be tri-diagonal.  Every off-diagonal nonzero element of $C$ equals a diagonal element of $\mat{M}^{-1}\mat{L}$ multiplied by the corresponding off-diagonal entry of $D$ and also equals a diagonal element of $\mat{V}\mat{U}^{-1}$ multiplying the corresponding entry of $\mat{E}$.  By invertibility over $\Eff_2$ the diagonal entries are all $1$, so we have proved that $D$ and $E$ agree on all off-diagonal entries.  The proof that they agree with each other (but not necessarily with $\mat{C}$) in their $1 \times 1$ blocks on the diagonal is similar to that for the alternating case.

\bigskip
\noindent \textbf{Proof for \Cref{lm:basic}}.
The given $f(x_1, \cdots, x_n)$ will fall into one of the following cases:

\begin{enumerate}
\item[(a)]
If some $a_j = 0$, it is safe to drop this $j$-th variable $x_j$ since $\sum^n_{i=1} a_i \cdot x_i \mod{4} = \sum^n_{i=1, i\neq j} a_i \cdot x_i \mod{4}$. Define $N'_0, N'_1, N'_2, N'_3$ with respect to $\vec{x}' = (x_1, \cdots, x_{j-1}, x_{j+1}, \cdots, x_n)$. We can see that $N_i = 2N'_i$ for $i = 0,1,2,3$;

\item[(b)]
If some $a_j = 2$, then for any $\vec{x}_0 = (x_1, \cdots, x_{j-1}, 0, x_{j+1}, \cdots, x_n)$, it can be paired with $\vec{x}_1$ such that $f(\vec{x}_1) = f(x_1, \cdots, x_{j-1}, 1, x_{j+1}, \cdots, x_n) = f(\vec{x}_0) + 2 \mod{4}$. That is, if $f(\vec{x}_0) = 0$, then $f(\vec{x}_1) = 2$, and vice versa. Same analysis goes to $N_1$ and $N_3$. Hence, the two differences are zero in this case;

\item[(c)]
If some $a_j = 1$ and some $a_k = 3$ (without loss of generality, assume $j \leq k$), then for any $\vec{x}_{10} = (x_1, \cdots, x_{j-1}, 1, x_{j+1}, \cdots, x_{k-1}, 0, x_{k+1}, \cdots, x_n)$ and $f(\vec{x}_{10}) = \sum^n_{i=1, i\neq j, i \neq k} a_i \cdot x_i + 1 \mod{4}$, we have $f(\vec{x}_{01}) = \sum^n_{i=1, i\neq j, i \neq k} a_i \cdot x_i + 3 \mod{4}$, which will cancel in the differences.

      While for $\vec{x}_{00} = (x_1, \cdots, x_{j-1}, 0, x_{j+1}, \cdots, x_{k-1}, 0, x_{k+1}, \cdots, x_n)$ and $f(\vec{x}_{00}) = \sum^n_{i=1, i\neq j, i \neq k}a_i \cdot x_i \mod{4}$, $f(\vec{x}_{11}) = \sum^n_{i=1, i\neq j, i \neq k} a_i \cdot x_i+ 4 \mod{4} = f(\vec{x}_{00})$. Hence, by dropping both $j$-th and $k$-th variables (similar to case 1) and defining $N'_i$ with respect to $\vec{x}' = (x_1, \cdots, x_{j-1}, x_{j+1}, \cdots, x_{k-1}, x_{k+1}, \cdots, x_n)$, $N_i = 2N'_i$ for $i = 0,1,2,3$;

\item[(d)]
If all $x_j$'s are 1, then for $i = 0, 1, 2, 3$, we have
      \[
      N_i = \sum_{m \geq 0} \binom{n}{4m + i}.
      \]
      Then \Cref{lm:4thBinomial} gives that both differences are powers of 2.

\item[(e)]
If all $x_j$'s are 3, then
      \[
      N_0 = \sum_{m \geq 0} \binom{n}{4m},\ \ \ \ N_2 = \sum_{m \geq 0} \binom{n}{4m+2},
      \]
      \[
      N_1 = \sum_{m \geq 0} \binom{n}{4m+3},\ \ \ N_3 = \sum_{m \geq 0} \binom{n}{4m + 1}.
      \]
      Then it can be reduced to case 4 and hence both differences are powers of 2.
\end{enumerate}
Note that the above procedures can be applied to a given $f(\vec{x})$ recursively. Overall, the statement holds. \qed

\bigskip
\begin{lemma}\label{lm:4thBinomial}
\[
\sum_{r \geq 0} \binom{n}{4r} - \sum_{r \geq 0} \binom{n}{4r+2}
\]
\[
\sum_{r \geq 0} \binom{n}{4r+1} - \sum_{r \geq 0} \binom{n}{4r+3}
\]
are either 0 or a power-of-2.
\end{lemma}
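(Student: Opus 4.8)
The plan is to recognize both displayed quantities as the real and imaginary parts of $(1+i)^n$ and then evaluate that power explicitly.

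First I would set up a roots-of-unity filter. Writing $N_j = \sum_{m \ge 0}\binom{n}{4m+j}$ for $j \in \{0,1,2,3\}$ (so that $N_0+N_1+N_2+N_3 = 2^n$), the binomial theorem gives
\[
\sum_{k=0}^{n}\binom{n}{k} i^{k} = (1+i)^{n},
\]
and grouping the left-hand side according to $k \bmod 4$, using $i^{0}=1,\ i^{1}=i,\ i^{2}=-1,\ i^{3}=-i$, shows
\[
(1+i)^{n} = (N_0 - N_2) + i\,(N_1 - N_3).
\]
Hence the first quantity in the lemma is $\Re\,(1+i)^{n}$ and the second is $\Im\,(1+i)^{n}$, so it suffices to prove that each of these is $0$ or $\pm$ a power of $2$; in particular this matches exactly what is needed in case (d) of the proof of \Cref{lm:basic}, where $N_i = \sum_{m}\binom{n}{4m+i}$.

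Next I would compute $(1+i)^{n}$ directly. Since $(1+i)^{2}=2i$, for even $n = 2\ell$ we get $(1+i)^{2\ell} = (2i)^{\ell} = 2^{\ell} i^{\ell}$, whose real and imaginary parts are, up to the sign of $i^{\ell}$, either $2^{\ell}$ and $0$ (when $\ell$ is even) or $0$ and $2^{\ell}$ (when $\ell$ is odd). For odd $n = 2\ell+1$ we get $(1+i)^{2\ell+1} = 2^{\ell} i^{\ell}(1+i)$, so both its real and imaginary parts equal $\pm 2^{\ell}$. In every case both parts lie in $\{0\}\cup\{\pm 2^{m} : m \ge 0\}$, which proves the claim. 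Equivalently, and avoiding complex numbers, one may put $A_n = N_0 - N_2$, $B_n = N_1 - N_3$ and read off from $(1+i)^{n+2} = 2i\,(1+i)^{n}$ the recurrence $A_{n+2} = -2B_n$, $B_{n+2} = 2A_n$ with base cases $(A_0,B_0)=(1,0)$ and $(A_1,B_1)=(1,1)$; an immediate induction then yields $(|A_n|,|B_n|) \in \{(2^{\lfloor n/2\rfloor},0),\,(0,2^{\lfloor n/2\rfloor}),\,(2^{(n-1)/2},2^{(n-1)/2})\}$.

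There is no genuine obstacle here; the only thing to handle with care is the sign bookkeeping, since the signs of $A_n$ and $B_n$ run through a period-$8$ pattern. For that reason I would phrase the conclusion as ``$0$ or $\pm 2^{m}$'' and extract absolute values only at the very end, rather than track the exact sign pattern. A secondary subtlety, already resolved by the factorization $(1+i)^{2\ell+1} = 2^{\ell} i^{\ell}(1+i)$, is the odd-$n$ case, where $A_n$ and $B_n$ are simultaneously nonzero and equal in magnitude to $2^{(n-1)/2}$; this is exactly why the two displayed sums are each a power of $2$ (rather than both vanishing) for odd $n$.
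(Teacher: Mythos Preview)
Your proof is correct and follows essentially the same roots-of-unity idea as the paper: both identify the two differences with the real and imaginary parts of $(1+i)^n$ (the paper via the general filter formula and $(1+\omega^3)^n$, you directly via the binomial expansion) and then evaluate that power explicitly using $(1+i)^2=2i$. Your presentation is somewhat more streamlined, avoiding the general sieve formula and the four-way case split on $n\bmod 4$ in favor of the even/odd split, but the substance is the same.
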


{\bf Proof.}It is known that with $\omega$ be the $d$-th root of unity,
\[
\sum_{r \geq 0} \binom{n}{d r + c} = \frac{1}{d} \sum^{d-1}_{j=0} \omega^{-jc} (1+ \omega^j)^n
\]
where $0 \leq c < d$. By simple substitutions with $d = 4$ and $a = 0, 1, 2, 3$, we get
\begin{align*}
g_0 &= \sum_{r \geq 0} \binom{n}{4 r} - \sum_{r \geq 0} \binom{n}{4 r + 2} = \frac{1}{2} (1+\omega^n)(1+\omega^3)^n, \\
g_1 &= \sum_{r \geq 0} \binom{n}{4 r+1} - \sum_{r \geq 0} \binom{n}{4 r + 3} = \frac{1}{2} \omega^{-1}(\omega^n - 1)(1+\omega^3)^n.
\end{align*}
Rewrite $n = 4a + b$ with $a,b \in \mathbb{Z}$ and $0 \leq b < 4$. Let $g_0 = \sum_{r \geq 0} \binom{n}{4 r} - \sum_{r \geq 0} \binom{n}{4 r + 2}$ and $g_1 = \sum_{r \geq 0} \binom{n}{4 r+1} - \sum_{r \geq 0} \binom{n}{4 r + 3}$. It is easy to verify that $(1+\omega^3)^4 = -4$ and hence we can rewrite
\begin{align*}
g_0 &= \frac{1}{2} (-4)^a (1+\omega^b)(1+\omega^3)^b, \\
g_1 &= \frac{1}{2} (-4)^a \omega^3 (\omega^b - 1)(1+\omega^3)^b.
\end{align*}
Now we can analysis them case by case.
\begin{enumerate}
  \item $b = 0$: $g_0 = (-4)^a$ and $g_1 = 0$;
  \item $b = 1$: $g_0 = (-4)^a$ and $g_1 = (-4)^a$;
  \item $b = 2$: $g_0 = 0$ and $g_1 = 2 \cdot (-4)^a$;
  \item $b = 3$: $g_0 = (-2) (-4)^a$ and $g_1 = 2 \cdot (-4)^a$.  \qed
\end{enumerate}

\bigskip
\noindent \textbf{Proof for \Cref{lm:alter}}.
Since $f$ is alternating, by \Cref{cor:similar} and \Cref{thm:similar}, $f$ has even rank $r$ and
\[
f(\vec{x}) = 2 \sum^{r/2}_{j = 1} x_{2j - 1} x_{2j} + 2 \sum^{n}_{i = 1} w_i x_i,
\]
for some basis for $V$ over $K$ and for some $\vec{w} = (w_1, \cdots, w_n) \in K^n$. Let $r = 2 g$ for some $g \in \Z$ and we can further rewrite it as
\[
f(\vec{x}) = 2 \sum^{g}_{j = 1} (x_{2j - 1} + w_{2j}) (x_{2j} + w_{2j - 1}) + 2 \sum^{n}_{i = r + 1} w_i x_i - 2 \sum^{g}_{j = 1} w_{2j - 1} w_{2j}
\]
Without loss of generality, we first look at the variable pair $(x_1, x_2)$ and its coefficient pair $(w_1, w_2)$. Denote
\[
f'(\vec{x}) = 2 \sum^{g}_{j = 2} (x_{2j - 1} + w_{2j}) (x_{2j} + w_{2j - 1}) + 2 \sum^{n}_{i = r + 1} w_i x_i - 2 \sum^{g}_{j = 2} w_{2j - 1} w_{2j},
\]
and write
\[
f(\vec{x}) = f'(\vec{x}) + 2 (x_1 + w_2) (x_2 + w_1) + 2 w_1 w_2.
\]
Note that $f'(\vec{x})$ only depends on $(x_3, \cdots, x_n)$, that is, $f'(\vec{x}) = f(x_3, \cdots, x_n)$. There are only four cases to consider for $h(x_1,x_2) = 2 (x_1 + w_2) (x_2 + w_1)$:
\begin{itemize}
  \item $(w_1, w_2) = (0,0)$: $h(x_1,x_2) = 2x_1 x_2$, and $h(0,0) = 0, h(0,1) = 0, h(1,0) = 0, h(1,1) = 2$;
  \item $(w_1, w_2) = (1,0)$: $h(x_1,x_2) = 2 x_1 (1 + x_2)$, and $h(0,0) = 0, h(0,1) = 0, h(1,0) = 2, h(1,1) = 0$;
  \item $(w_1, w_2) = (0,1)$: $h(x_1,x_2) = 2 (1 + x_1) x_2$, and $h(0,0) = 0, h(0,1) = 2, h(1,0) = 0, h(1,1) = 0$;
  \item $(w_1, w_2) = (1,1)$: $h(x_1,x_2) = 2(1 + x_1)(1 + x_2)$, and $h(0,0) = 2, h(0,1) = 0, h(1,0) = 0, h(1,1) = 0$.
\end{itemize}
Define $Q^{00}_i = \{ \vec{x} \in V | f(\vec{x}) = i \mod{4}\ \mathrm{and}\ x_1 = 0, x_2 = 0\}$ and similarly $Q^{01}_i, Q^{10}_i , Q^{11}_i$. Then we have $Q_i = Q^{00}_i \cup Q^{01}_i \cup Q^{10}_i \cup Q^{11}_i$. Also define $S^{00}_i = \{ \vec{x} \in V | f'(\vec{x}) = i \mod{4}\ \mathrm{and}\ x_1 = 0, x_2 = 0\}$ and analogously $S^{01}_i, S^{10}_i, S^{11}_i$.

Note that $f'(\vec{x})$ only depends on $(x_3, \cdots, x_n)$. Let $S'_i = \{ (x_3, \cdots, x_n) | f'(x_3, \cdots, x_n) = i\}$, and we have $|S'_i| = |S^{00}_i| = |S^{01}_i| = |S^{10}_i| = |S^{11}_i|$. Now analyze the above four cases separately:
\begin{itemize}
  \item $(w_1, w_2) = (0,0)$: we have
    \[
    f(\vec{x}) = f'(\vec{x}) + 2x_1 x_2,
    \]
    If $\vec{x}_{00} \in Q^{00}_i$, then $f(\vec{x}_{00}) = f'(\vec{x}_{00}) = i$, same for $Q^{01}_i, Q^{10}_i$, while if $\vec{x}_{11} \in Q^{11}_i$, then $f(\vec{x}_{11}) = f'(\vec{x}_{11}) + 2$ and hence $f'(\vec{x}_{11}) = i + 2$. Now for some $c \in \{0, 1\}$,
    \begin{align*}
    N_c - N_{c+2} &= |Q^{00}_c| + |Q^{01}_c| + |Q^{10}_c| + |Q^{11}_c| - (|Q^{00}_{c+2}| + |Q^{01}_{c+2}| + |Q^{10}_{c+2}| + |Q^{11}_{c+2}|)\\
    &= |S^{00}_c| + |S^{01}_c| + |S^{10}_c| + |S^{11}_{c+2}| - (|S^{00}_{c+2}| + |S^{01}_{c+2}| + |S^{10}_{c+2}| + |S^{11}_c|)\\
    &= 3|S'_c| + |S'_{c+2}| - (3|S'_{c+2}| + |S'_c|) \\
    &= 2(|S'_c| - |S'_{c+2}|).
    \end{align*}
  \item $(w_1, w_2) = (1,0)$: by the similar analysis, $|Q_c| - |Q_{c + 2}| = 2(|S'_c| - |S'_{c+2}|)$.
  \item $(w_1, w_2) = (0,1)$: $|Q_c| - |Q_{c + 2}| = 2(|S'_c| - |S'_{c+2}|)$.
  \item $(w_1, w_2) = (1,1)$: $|Q_c| - |Q_{c + 2}| = -2(|S'_c| - |S'_{c+2}|)$.
\end{itemize}
Hence, we can reduce the counting of $|Q_c| - |Q_{c + 2}|$ over $(x_1, \cdots, x_n)$ to the counting of $|S'_c| - |S'_{c + 2}|$ over $(x_3, \cdots, x_n)$, and gradually after $g$-many such reduction, we can derive
\[
|Q_c| - |Q_{c + 2}| = (-1)^m 2^g (|Q'_c| - |Q'_{c + 2}|),
\]
where $Q'_c = \{ (x_{r+1}, \cdots, x_n) | 2\sum^n_{i = r+1} w_i x_i = c\}$ and $m$ is the number of $(w_{2j-1}, w_{2j})$ pairs in $f(\vec{x})$ such that $(w_{2j-1}, w_{2j}) = (1,1)$, hence $2w_{2j-1} w_{2j} = 2$.

Now it is left to argue that $\left| |Q'_c| - |Q'_{c + 2}| \right|$ is either zero or a power of 2. Let $q(x) = \sum^n_{i = r+1} 2 w_i x_i$. Since $w_i \in \{0,1\}$, $q(x)$ is linear with coefficient from $\{0, 2\}$. Then we can reduce it to the 1st and 2nd cases in \Cref{lm:basic}. In the 2nd case, it gives that $|Q'_c| - |Q'_{c + 2}| = 0$ if $w_i = 1$ for some $i \in \{r+1, \cdots, n\}$. Now assume non-zero case. Then we have $w_i = 0$ for all $i \in \{r+1, \cdots, n\}$, which gives
\[
|Q'_c| - |Q'_{c + 2}| = (-1)^m 2^g 2^{n - r} = (-1)^m 2^{n - g},
\]
and hence it completes the proof. \qed

\bigskip
\noindent \textbf{Proof for \Cref{lm:nalter}}.
Since $f$ is non-alternating, by \Cref{cor:similar}, there exists a basis for $V$ over $K$, determining the coordinates $(x_1, \cdots, x_n)$, such that
\[
f(\vec{x}) = \sum^{r}_{j = 1} x_j + 2 \sum^{n}_{i = 1} w_i x_i,
\]
for some $\vec{w} = (w_1, \cdots, w_n) \in K^n$. By rearranging, we have
\[
f(\vec{x}) = \sum^r_{j=1} (1 + 2w_j) x_j + 2 \sum^n_{i=r+1} w_i x_i = \sum^r_{j=1} w'_j x_j + 2 \sum^n_{i=r+1} w_i x_i,
\]
where $w'_j = 1 + 2w_j$. Note that $w'_j$ can only be 1 or 3. Then we can reduce it to the 2nd, 3rd, 4th and 5th cases in \Cref{lm:basic}.

The 2nd case gives the trivial case where both $N_0 - N_2$ and $N_1 - N_3$ are zero. Now assume non-zero case. Then we have $w'_i, w_i \in \{0, 1, 3\}$.

Define $c$ to be the number of $w_i$'s such that $w_i = 0$ with $i \in \{r+1, \cdots, n\}$ and $d$ to be the number of pairs such that $(1 + 2w_j, 1 + 2w_{j'}) = (1,3)$ with $j, j' \in \{1, \cdots, r\}$. Also let $m = n - c - 2d$ and rewrite $m = 4a + b$, and define $\eta$ such that $\eta = 0$ if the rest $m$-many coefficients are all $1$'s but $\eta = 1$ if they are all $3$'s. Then the differences $N_0 - N_2$ and $N_1 - N_3$ are taking one of the following values:
\begin{itemize}
  \item if $b = 0$, then $N_0 - N_2 = (-1)^a 2^{(n+c) / 2}$, $N_1 - N_3 = 0$;
  \item if $b = 1$, then $N_0 - N_2 = (-1)^a  2^{(n+c - 1) / 2}$, $N_1 - N_3 = (-1)^{a + \eta} 2^{(n+c - 1) / 2}$;
  \item if $b = 2$, then $N_0 - N_2 = 0$, $N_1 - N_3 = (-1)^{a +\eta} 2^{(n+c) / 2}$;;
  \item if $b = 3$, then $N_0 - N_2 = (-1)^{a+1} 2^{(n+c - 1) / 2}$, $N_0 - N_2 = (-1)^{a+\eta} 2^{(n+c - 1) / 2}$.  $\Box$
\end{itemize}

\bigskip
Note that
Lemmas~\ref{lm:basic}, \ref{lm:alter}, and \ref{lm:nalter} and the proof method of Theorem~\ref{thm:reduction-to-rank} apply
to more general input $\vec{a}$ and output $\vec{b}$ as well, so that we have the following supplementary result:


\bigskip
\begin{theorem}\label{general-result}
Given a stabilizer circuit $C$ and its quadratic form $q_C(\vec{y},\vec{z})$, assume we know $\mat{Q},\mat{D}_1$ and $\mat{D}_2$ with entries in $\F_2$ such that $\vec{y}^\top \mat{Q}^\top \mat{A} \mat{Q} \vec{y} = \vec{y}^\top  (\mat{D_1} + 2\mat{D}_2) \vec{y}$ where
\begin{itemize}
  \item if $q_C$ is alternating, $\mat{D}_2$ is a diagonal matrix with entries in $\{0,1\}$ and $\mat{D}_1 = \mat{M}_1 \oplus \cdots \oplus \mat{M}_g$ has even rank $r = 2g$ over $\F_2$;
  \item if $q_C$ is non-alternating, $\mat{D}_1$ and $\mat{D}_2$ are both diagonal matrices with entries in $\{0,1\}$.
\end{itemize}
Then we can compute $|\triplep{\vec{b}}{C}{\vec{0}}|^2$ for any output vector $\vec{b}$ to the circuit
in $O(e n)$ time where $n = |\vec{y}|$ and $e$ is the number of ones in $\vec{y}$.
\end{theorem}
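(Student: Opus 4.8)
The plan is to exploit the fact that, once $\mat{Q},\mat{D}_1,\mat{D}_2$ are in hand---these already encode the normal form of $q_C$ restricted to the $\vec{y}$ variables for the zero output---changing the output from $\vec{0}$ to an arbitrary $\vec{b}$ only adds a \emph{linear} term to the quadratic form, and that term inherits the sparsity of $\vec{b}$. Everything else is the $O(n)$ post-processing already isolated in \Cref{lm:basic,lm:alter,lm:nalter} and in the proof of \Cref{thm:reduction-to-rank}.

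First I would record that, by the structure of $q_C(\vec{y},\vec{z})$ (each $z_j$ is joined to exactly one, and to a distinct, variable $y_{i(j)}$, with no edges among the $z_j$ and any $z_j^2$ self-loops contributing only irrelevant global phases), fixing the input to $\vec{0}$ and the output to $\vec{b}$ yields over $\Zed_4$
\[
q_C(\vec{y},\vec{b}) \;=\; \vec{y}^\top\mat{A}\vec{y} \;+\; 2\,\vec{y}^\top\vec{b}',
\]
where $\vec{b}'\in\{0,1\}^{n}$ has a $1$ exactly in the positions $i(j)$ with $b_j=1$; in particular $\vec{b}'$ has at most $e$ ones, where $e$ is the number of $1$'s in $\vec{b}$. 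Substituting $\vec{y}=\mat{Q}\vec{y}'$---a bijection of $\{0,1\}^n$ since $\mat{Q}$ is invertible over $\Eff_2$, so the value distribution $\vec{N}$ is unchanged---turns the form into
\[
q_C(\mat{Q}\vec{y}',\vec{b}) \;=\; {\vec{y}'}^\top(\mat{D}_1+2\mat{D}_2)\vec{y}' \;+\; 2\,{\vec{y}'}^\top\tilde{\vec{b}}, \qquad \tilde{\vec{b}} \;:=\; \mat{Q}^\top\vec{b}' \bmod 2 .
\]
Reducing modulo $2$ here is legitimate because $\tilde{\vec{b}}$ enters only under the factor $2$; and since $\vec{b}'$ has at most $e$ ones, $\tilde{\vec{b}}$ is the $\Eff_2$-sum of at most $e$ columns of $\mat{Q}^\top$, so it is computable in $O(en)$ time. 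This single matrix--vector product is the only step that is not already $O(n)$.

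Next I would identify the updated normal form and invoke the matching lemma. In the alternating case the block part of ${\vec{y}'}^\top\mat{D}_1\vec{y}'$ contributes $2\sum_{j=1}^{g}y'_{2j-1}y'_{2j}$ (each $\mat{M}_j$ being a rank-two $2\times2$ alternating matrix), so for binary arguments
\[
q_C(\mat{Q}\vec{y}',\vec{b}) \;=\; 2\sum_{j=1}^{g} y'_{2j-1}y'_{2j} \;+\; 2\sum_{i=1}^{n} w_i\, y'_i, \qquad w_i \;:=\; (\mat{D}_2)_{ii}\oplus\tilde{b}_i ,
\]
which is exactly the normal form of \Cref{lm:alter}; that lemma gives $N_1-N_3=0$ and $N_0-N_2$ equal either to $0$ (precisely when some $w_i=1$ with $i>2g$) or to $(-1)^{k}2^{\,n-g}$ with $k=\#\{\,j\le g:(w_{2j-1},w_{2j})=(1,1)\,\}$, all decidable in $O(n)$. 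In the non-alternating case $\mat{D}_1,\mat{D}_2$ are diagonal, so $q_C(\mat{Q}\vec{y}',\vec{b})=\sum_i a_i y'_i$ over $\Zed_4$ with $a_i:=(\mat{D}_1)_{ii}+2\bigl((\mat{D}_2)_{ii}\oplus\tilde{b}_i\bigr)\in\{0,1,2,3\}$, a linear form; \Cref{lm:basic} (quantitatively \Cref{lm:nalter}) then computes $N_0-N_2$ and $N_1-N_3$ in $O(n)$ from the counts of $0$-coefficients, $2$-coefficients, and $(1,3)$-pairings. Finally I would output
\[
|\triplep{\vec{b}}{C}{\vec{0}}|^2 \;=\; \frac{1}{R^2}\bigl((N_0-N_2)^2+(N_1-N_3)^2\bigr),
\]
which by \Cref{thm:reduction-to-rank} is $0$ or $2^{\,2n-r}/R^2$, with $R$ a fixed power of two determined by $C$. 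The total cost is $O(en)$ for $\tilde{\vec{b}}$ plus $O(n)$ for the remaining bookkeeping.

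I do not expect a genuine obstacle; the care needed is all in the $\Eff_2$-to-$\Zed_4$ bookkeeping. One must check that conjugating the $\Zed_4$-matrix $\mat{A}$ by the $\Eff_2$-matrix $\mat{Q}$ really does reproduce the $\mat{D}_1+2\mat{D}_2$ structure that \Cref{lm:alter,lm:nalter} require (it does, because $\mat{Q}$ has $0/1$ entries), that the linear perturbation only shifts the vector $\vec{w}$---resp.\ the diagonal coefficients---by $\tilde{\vec{b}}$ modulo $2$, and, if one point is to be singled out as delicate, that the rank $r$ (hence the exponent $2n-r$) is unaffected by $\vec{b}$: this holds because $a_i$ is odd iff $(\mat{D}_1)_{ii}=1$, regardless of $\tilde{b}_i$. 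As a sanity check, taking $\vec{b}=\vec{0}$ recovers the earlier $O(n)$ post-processing of \Cref{thm:reduction-to-rank} directly from the precomputed $\mat{Q},\mat{D}_1,\mat{D}_2$.
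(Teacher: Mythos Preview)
Your proposal is correct and follows essentially the same approach as the paper's proof: both observe that changing the output from $\vec{0}$ to $\vec{b}$ adds only a diagonal/linear correction $2\vec{y}^\top\vec{b}'$, conjugate by $\mat{Q}$ to obtain the updated $\vec{w}$-vector in $O(en)$ time, and then invoke \Cref{lm:alter} or \Cref{lm:nalter} in $O(n)$. The only difference is notational---the paper packages the correction as a sum of diagonal matrices $\mat{E}_i=\diag(Q_{i,1},\dots,Q_{i,n})$ over the $e$ indices $i$ with $\Delta_{i,i}=1$, whereas you compute the equivalent single vector $\tilde{\vec{b}}=\mat{Q}^\top\vec{b}'\bmod 2$ directly; these are the same object.
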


\begin{proof}
Assume $\mat{Q} = (Q_{i,j})$ with $Q_{i,j} \in \F_2$ and take any output vector $\vec{b}$. Then $q_C(\vec{y},\vec{b}) = \vec{y}^\top \mat{A} \vec{y} + \vec{y}^\top \mat{\Delta} \vec{y}$ and we have
\begin{align*}
\vec{y}^\top \mat{Q}^\top \mat{A} \mat{Q} \vec{y} + \vec{y}^\top \mat{Q}^\top \mat{\Delta} \mat{Q} \vec{y} &= \vec{y}^\top  (\mat{D_1} + 2 \mat{D}_2) \vec{y} + \sum_i 2 \vec{y}^\top \mat{E}_i \vec{y} \\
&= \vec{y}^\top  \mat{D_1} \vec{y} + \vec{y}^\top 2(\mat{D}_2 + \sum_i \mat{E}_i) \vec{y}
\end{align*}
where $E_i$ is a diagonal matrix $\diag(Q_{i,1}, \cdots, Q_{i,n})$ for $i$ such that $\Delta_{i,i} = 1$ and $\mat{D}_1$ varies depending on whether it is alternating or non-alternating. Then each $E_i = \diag(Q_{i,1}, \cdots, Q_{i,n})$ can be obtained in $O(n)$ time given the matrix $\mat{Q}$.

We also know that in both the alternating and non-alternating cases, the output probability $|\triplep{\vec{b}}{C}{\vec{0}}|^2$ is determined by the rank of $\mat{D}_1$ if $|\triplep{\vec{b}}{C}{\vec{0}}|^2 \neq 0$. Now we will show that with the knowledge of such $\mat{Q}$, we can tell $|\triplep{\vec{b}}{C}{\vec{0}}|^2 = 0$ in $O(e n)$ time.

First suppose $q_C(\vec{y},\vec{z})$ is alternating and $n = |\vec{y}|$, then for output $\vec{b}$ we can rewrite
\[
\vec{y}^\top  \mat{D_1} \vec{y} + \vec{y}^\top 2(\mat{D}_2 + \sum_i \mat{E}_i) \vec{y} = \sum^g_{j=1} 2 y_{2j-1} y_{2j} + \sum^n_{i=1} 2 w_i y_i \mod{4},
\]
where $w_i \in \{0,1\}$.
Once we finish updating the above equation (which takes $O(e n)$ time), we can by \Cref{lm:alter}, get the value $\braket{\vec{b}|C|\vec{0}}$ and identify if $|\braket{\vec{b}|C|\vec{0}}|^2 = 0$ which happens when $w_i = 0$ for some $i \in \{r+1, \cdots, n\}$. Analogously, this also can be done in $O(e n)$ time by \Cref{lm:nalter} for non-alternating cases.
\end{proof}

\end{document}